\newif\if@restonecol
\def\widebar{\accentset{{\cc@style\underline{\mskip14mu}}}}
\def\Widebar{\accentset{{\cc@style\underline{\mskip8mu}}}}
\begin{document}

\title{Improving Distributed Similarity Join in \\Metric Space with Error-bounded Sampling}\vspace{-2em}

\author{Jiacheng Wu$^\dag$~~~~~~Yong Zhang$^\dag$~~~~~~Jin Wang$^\sharp$~~~~~~Chunbin Lin$^\ddag$~~~~~~Yingjia Fu$^\diamond$~~~~~~Chunxiao Xing$^\dag$\\
$^\dag$ RIIT, TNList, Dept. of Computer Science and Technology, Tsinghua University, Beijing, China. \\
	$^\sharp$ Computer Science Department, University of California, Los Angeles.\\
	$^\diamond$ Department of Mathematic, University of California, San Diego.\\
	$^\ddag$ Amazon AWS\\
	\fontsize{9}{9}\selectfont\ttfamily\upshape
	wu-jc18@mails.tsinghua.edu.cn; \{zhangyong05,xingcx\}@tsinghua.edu.cn; \\
	\fontsize{9}{9}\selectfont\ttfamily\upshape
	jinwang@cs.ucla.edu; lichunbi@amazon.com; yif051@ucsd.edu
}

\maketitle


\newtheorem{definition}{Definition}
\newtheorem{example}{Example}
\newtheorem{lemma}{Lemma}
\newtheorem{theorem}{Theorem}
\newcommand{\name}{\textsf{SP-Join}\xspace}
\newcommand{\cmark}{\ding{51}}%
\newcommand{\xmark}{\ding{55}}%

\newcolumntype{B}{>{\centering\arraybackslash}m{1.5cm}}
\newcolumntype{D}{>{\centering\arraybackslash}m{2.5cm}}

\newcommand{\reminder}[1]{{\mbox{$<==$}} [[[ { \bf #1 } ]]] {\mbox{$==>$}}}
\newcommand{\redfont}[1]{{\color{red}{#1}}}

\newcommand{\ttfont}[1]{{\tt #1}}
\newcommand{\Bi}[1]{\ensuremath{\mathcal{B}_{#1}}\xspace} 

\newcommand{\LB}{\ensuremath{\textsf{LB}}\xspace}
\newcommand{\UB}{\ensuremath{\textsf{UB}}\xspace}

\newcommand{\lowb}[1]{\ensuremath{\textsf{LB}_{\mathcal{B}_{#1}}}\xspace}
\newcommand{\return}{\textsf{return}\xspace}
\newcommand{\topk}{top-\ensuremath{k}\xspace}

\newcommand{\sij}[2]{\ensuremath{\textbf{s}_{#1,#2}^n}\xspace}
\newcommand{\objij}[2]{\ensuremath{o_{#1}^{#2}}\xspace}
\newcommand{\nij}[2]{\ensuremath{n_{l}^{#1, #2}}\xspace}
\newcommand{\inneri}[1]{\ensuremath{\mathcal{V}_{#1}}\xspace} 
\newcommand{\outeri}[1]{\ensuremath{\mathcal{W}_{#1}}\xspace} 
\newcommand{\innerni}[1]{\ensuremath{\mathcal{V}^n_{#1}}\xspace} 
\newcommand{\outerni}[1]{\ensuremath{\mathcal{W}^n_{#1}}\xspace} 
\newcommand{\vecv}[1]{\ensuremath{\textbf{v}_{#1}}\xspace} 
\newcommand{\lenij}[2]{\ensuremath{\texttt{Len}_{l}^{#1, #2}}\xspace}
\newcommand{\lij}[3]{\ensuremath{\mathcal{L}_{#1}^{#2, #3}}\xspace}

\newcommand{\NPPara}[2]{\bm{\mathrm{A}}(#1,#2)} 
\newcommand{\NPParas}{\bm{\mathrm{A}}} 
\newcommand{\NPBound}[2]{\bm{\mathrm{B}}(#1,#2)} 
\newcommand{\NPBounds}{\bm{\mathrm{B}}} 
\newcommand{\NPObjF}{G} 
\newcommand{\NPParaAdv}[2]{\bm{\mathrm{C}}(#1,#2)} 
\newcommand{\NPParasAdv}{\bm{\mathrm{C}}} 

\newcommand{\simmetrics}{\ensuremath{\sim}\xspace}
\newcommand{\simfunc}{\ensuremath{SIM}\xspace}
\newcommand{\all}{\ensuremath{\textsc{All}}\xspace}
\newcommand{\fjac}{\ensuremath{\textsc{Jac}}\xspace}
\newcommand{\fed}{\ensuremath{\textsf{ED}}\xspace}

\newcommand{\bigo}{\ensuremath{\mathcal{O}}\xspace} 
\newcommand{\bigC}{\ensuremath{\mathcal{C}}\xspace} 
\newcommand{\bigS}{\ensuremath{\mathcal{S}}\xspace} 
\newcommand{\bigD}{\ensuremath{\mathcal{D}}\xspace} 
\newcommand{\bigJ}{\ensuremath{\mathcal{J}}\xspace} 
\newcommand{\bigW}{\ensuremath{\mathcal{W}}\xspace} 
\newcommand{\bigV}{\ensuremath{\mathcal{V}}\xspace}
\newcommand{\bigU}{\ensuremath{\mathcal{U}}\xspace} 
\newcommand{\bigP}{\ensuremath{\mathcal{P}}\xspace} 
\newcommand{\bigX}{\ensuremath{\mathcal{X}}\xspace} 
\newcommand{\bigY}{\ensuremath{\mathcal{Y}}\xspace} 
\newcommand{\bigT}{\ensuremath{\mathcal{T}}\xspace} 
\newcommand{\bigF}{\ensuremath{\mathcal{F}}\xspace} 
\newcommand{\bigA}{\ensuremath{\mathcal{A}}\xspace} 
\newcommand{\bigZ}{\ensuremath{\bm{\mathcal{Z}}}\xspace} 
\newcommand{\bigB}{\ensuremath{\bm{\mathcal{B}}}\xspace} 
\newcommand{\bigR}{\ensuremath{\mathcal{R}}\xspace} 
\newcommand{\bigK}{\ensuremath{\mathcal{K}}\xspace} 

\newcommand{\kdtree}{\textsf{KPM}\xspace}
\newcommand{\massjoin}{\textsf{MassJoin}\xspace}
\newcommand{\fsjoin}{\textsf{FSJoin}\xspace}
\newcommand{\mpass}{\textsf{MPASS}\xspace} 
\newcommand{\clusterjoin}{\textsf{ClusterJoin}\xspace}
\newcommand{\mrsim}{\textsf{MRSimJoin}\xspace}
\newcommand{\thetajoin}{\textsf{ThetaJoin}\xspace}
\newcommand{\prefix}{\textsf{PrefixFilter}\xspace}
\newcommand{\vsmjoin}{\textsf{VSmartJoin}\xspace}
\newcommand{\randm}{\textsf{Random}\xspace}
\newcommand{\strat}{\textsf{Stratified}\xspace}
\newcommand{\twolv}{\textsf{TwoStage}\xspace}
\newcommand{\gener}{\textsf{Gen}\xspace}
\newcommand{\dstaw}{\textsf{Dist}\xspace}
\newcommand{\basictr}{\textsf{Iter}\xspace}
\newcommand{\entropy}{\textsf{Learn}\xspace}

\newcommand{\eddist}{\textsc{Edit Distance}\xspace}
\newcommand{\ed}{\textsc{Edit}\xspace}
\newcommand{\overlap}{\textsc{Overlap}\xspace}
\newcommand{\jac}{\textsc{Jaccard}\xspace}
\newcommand{\eu}{\textsc{Euclidean}\xspace}
\newcommand{\cosine}{\textsc{Cosine}\xspace}
\newcommand{\dice}{\textsc{Dice}\xspace}
\newcommand{\eudist}{\textsc{Euclidean Distance}\xspace}
\newcommand{\jacdist}{\textsc{Jaccard Distance}\xspace}
\newcommand{\lpnorm}{\textsc{$L_p$-Norm Distance}\xspace}
\newcommand{\lonenorm}{\textsc{$L_1$-Norm}\xspace}
\newcommand{\lonenormdist}{\textsc{$L_1$-Norm Distance}\xspace}
\newcommand{\vdatao}{\textsc{Netflix}\xspace}
\newcommand{\vdatat}{\textsc{SIFT}\xspace}
\newcommand{\sdatao}{\textsc{AOL}\xspace}
\newcommand{\sdatat}{\textsc{PubMed}\xspace}
\newcommand{\qqgram}{{$q$-gram}\xspace}
\newcommand{\innerpart}{\textsc{Kernel Partition}\xspace}
\newcommand{\outerpart}{\textsc{Whole Partition}\xspace}

\newcommand{\Matrix}[1]{\ensuremath{\bm{\mathrm{#1}}}}

\newcommand{\samplesize}{k\xspace}
\newcommand{\tabincell}[2]{\begin{tabular}{@{}#1@{}}#2\end{tabular}}

\begin{abstract}
Given two sets of objects, metric similarity join finds all similar pairs of objects according to a particular distance function in metric space. 
There is an increasing demand to provide a scalable similarity join framework which can support efficient query and analytical services in the era of Big Data. 
The existing distributed metric similarity join algorithms adopt random sampling techniques to produce pivots and utilize holistic partitioning methods based on the generated pivots to partition data, which results in data skew problem since both the generated pivots and the partition strategies have no quality guarantees.

To address the limitation, we propose \name, an end-to-end framework to support distributed similarity join in metric space based on the MapReduce paradigm, which (i) employs an estimation-based stratified sampling method to produce pivots with quality guarantees for any sample size, and (ii) devises an effective cost model as the guideline to split the whole datasets into partitions in map and reduce phases according to the sampled pivots.
Although obtaining an optimal set of partitions is proven to be NP-Hard, \name adopts efficient partitioning strategies based on such a cost model to achieve an even partition with explainable quality.
We implement our framework upon Apache Spark platform and conduct extensive experiments on four real-world datasets. 
The results show that our method significantly outperforms state-of-the-art methods. 

\end{abstract}

\section{Introduction}\label{sec-intro}

Nowadays the emerging data lake problem~\cite{DBLP:conf/cidr/TerrizzanoSRC15} has called for more efficient and effective data integration and analytics techniques over massive datasets.
As similarity join is a fundamental operation of data integration and analytics, scaling up its performance would be an essential step towards this goal.
Given two sets of objects, similarity join aims at finding all pairs of objects whose similarities are greater than a predefined threshold. 
It is an essential operation in a variety of real world applications, such as click fraud detection~\cite{DBLP:conf/kdd/WangMP13}, bioinformatics~\cite{DBLP:journals/pvldb/WandeltSBL13}, web page deduplication~\cite{DBLP:conf/www/XiaoWLY08} and entity resolution~\cite{DBLP:conf/icde/KolbTR12}. 
There are many distance functions to evaluate the similarity between objects, such as \eddist for string data, \eudist for spatial data and \lpnorm for images. 
It is necessary to design a generalized framework to accommodate a broad spectrum of data types as well as distance functions.
To this end, we aim at supporting similarity join \emph{in metric space}, which is corresponding to a wide range of similarity functions.
Without ambiguity, we will use ``metric similarity join'' for short in the rest of this paper.

Acting as an essential building block of data integration, it is expensive to perform the  metric similarity join in the era of Big Data where the number of computations grows quadratically as the data size increases. There is an increasing demand for more efficient approaches which can scale up to increasingly vast volume of data as well as make good use of rich computing resources.
To pluck the valuable information from the large scale of data, many big data platforms, e.g. Hadoop~\footnote{https://hadoop.apache.org/}, Apache Spark~\footnote{https://spark.apache.org/} and Apache Flink~\footnote{https://flink.apache.org/}, adopt MapReduce~\cite{DBLP:journals/cacm/DeanG08} as the underlying programming model to process large datasets on clusters.
In order to take full advantage of the MapReduce framework, it requires to overcome bottleneck regarding communication costs in the distributed environment. 
Moreover, due to the well-known problem of ``the curse of last reducer'', it is also necessary to balance the workload between workers in a distributed environment.

\begin{figure}[!t]
	\hspace{-1.1em}
	\includegraphics [scale=0.35]{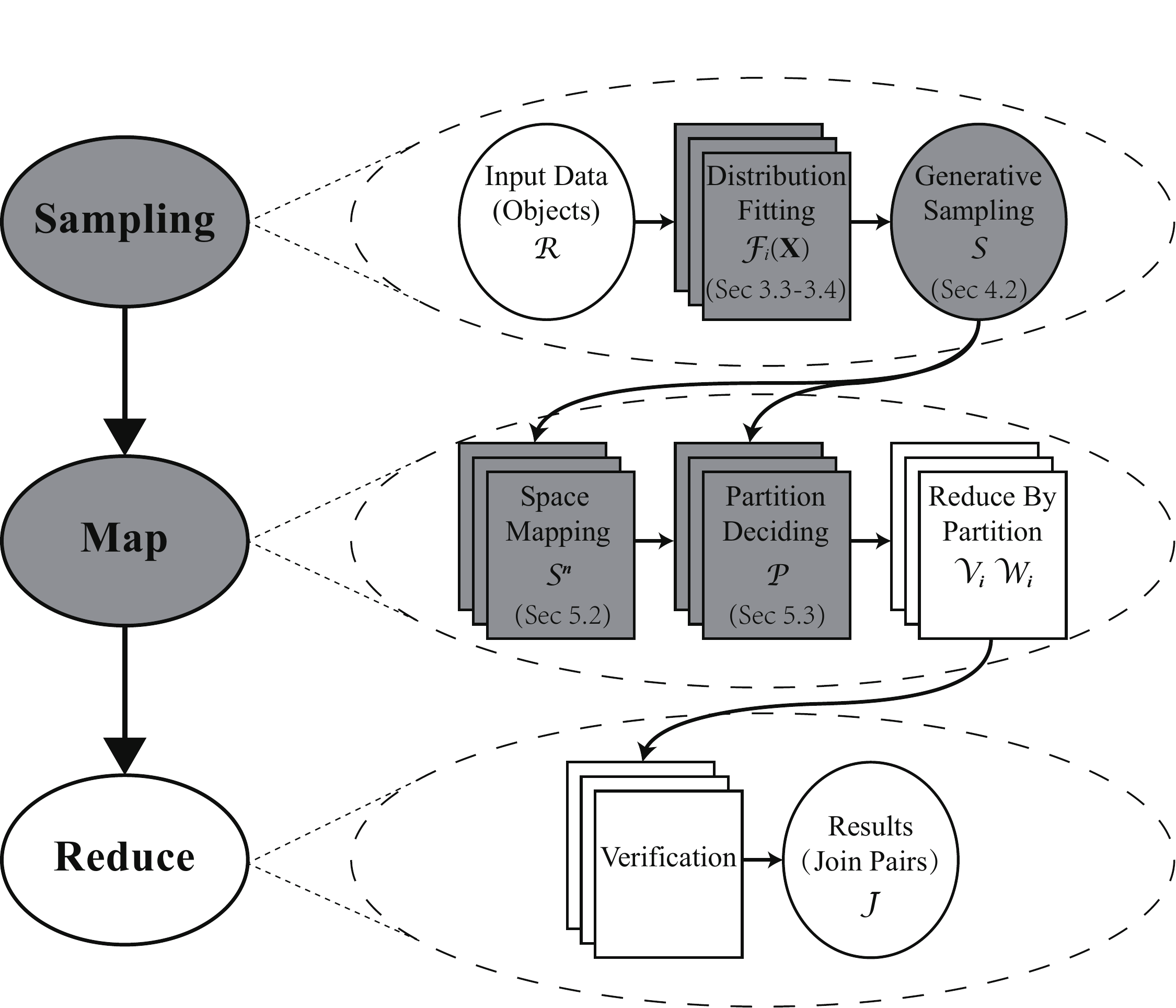}
	\caption{\name: Overall Framework}
	\label{fig-arch}\vspace{-.5em}
\end{figure}

To provide efficient join operations,  previous studies~\cite{DBLP:journals/tkde/ChenYCGZC17,DBLP:journals/pvldb/SarmaHC14,DBLP:conf/kdd/WangMP13,DBLP:conf/icde/FriesBSS14} employ a three-phase (i.e., sampling, map, and reduce) framework for metric similarity join using MapReduce as is shown in the left part of Figure~\ref{fig-arch}.
In the sampling phase, some objects are sampled as pivots to represent the whole dataset. 
In the map phase, the dataset is divided into partitions according to those pivots using holistic partitioning methods.
In the reduce phase, all partitions are shuffled into reducers and the verification is performed on each reducer. 
The union of results from all reducers is the output for similarity join.

Unfortunately, the existing work suffers from the data skew problem, which can significantly damage the overall performance. The data skew problem is caused by the following two facts: (i) the pivots are produced by uniform random sampling methods and (ii) the partitions are generated by the holistic partitioning methods without quality guarantee. 
The existing studies focus on designing different partition strategies while ignoring the selection of high quality pivots -- they simply employ random sampling to generate the pivots, which can result in very low quality pivots.  
Relying on the low quality pivots to perform partitioning will provide heavily skewed partitions, which damages the overall performance. 
One possible way for the existing work to get high quality pivots is to increase the sample size, which will bring extra overhead and also damage the performance.  
In addition, if the sample size is too large, the benefits will be counteracted by the overhead of map phase~\footnote{Experimental results regarding this are shown in Section~\ref{sec-exp} later.}.

We argue that it is essential to adopt statistical tools such as sampling techniques to avoid the bottleneck brought by improper pivots selection.
Sampling techniques are widely adopted by the problem of Approximate Query Processing (AQP), which have been proved to be effective in helping data scientists identify the subset of data that needs further drill-down, discovering trends, and enabling fast visualization~\cite{DBLP:conf/sigmod/ChaudhuriDK17}.
Therefore, there has been a long stream of research work about AQP in the database community, which are applied to problems of data visualization~\cite{DBLP:journals/pvldb/MackeZHP18}, query optimization~\cite{DBLP:conf/sigmod/ParkMSW18} and business intelligence (BI)~\cite{DBLP:conf/sigmod/DingHCC016}.
Motivated by these works, in this paper we focused on devising effective sampling approaches to boost the overall performance of distributed metric similarity join. 

We propose \textbf{S}ampling \textbf{P}owered Join (\name), a scalable framework to mitigate the overhead of metric similarity join in big data analysis. 
The workflow of \name is shown in the right part of Figure~\ref{fig-arch}, where the highlighted parts are brand new techniques proposed by us compared with previous studies. 
To provide pivots with quality guarantee, we aim at adopting \emph{stratified sampling} instead of randomly selecting the pivots in the sampling phase.
However, it is rather expensive to directly apply the stratified approach in the distributed environment.
The reason is that it requires to first categorize the objects into separate strata and then perform random sampling within each stratum.
While it is straightforward in case of single node, it is non-trivial to build the strata in distributed environment due to the heavy network transmission overhead.
We address this problem by conducting stratified sampling from another aspect: \emph{Instead of constructing strata by shuffling objects, we can conduct stratified sampling on each single node in a cluster separately with necessary statistical information}. 
In this way, we can obtain the global samples by aggregating the results from all nodes with only one map job.
Therefore, we first utilize statistical tools to make a wise decision on the contribution of each node without really construct the global strata.
Then propose two light-weighted sampling algorithms to conduct stratified sampling with trivial overhead under MapReduce framework. 
In addition, we theoretically analyze the statistical quality guarantee of our sampling techniques.

Obtaining high quality pivots is just the first step towards efficient similarity join algorithm. 
To better utilize the pivots, it further requires effective strategies to split the dataset into partitions in map and reduce phases.
To this end, we propose a cost model to formally quantify the cost of the partition problem.
According to this cost model, it is proven that obtaining optimal partitions is NP-hard.
We thus propose efficient and explainable partitioning strategies which can progressively reduce the partition cost and achieve good performance in practice.
Compared with previous approaches, our method can make better use of the sampled pivots and include much fewer dissimilar objects within each partition, which can significantly reduce the network communication cost as well as overall computation time. 

The contributions of this paper are summarized as following:
\begin{compactitem}
	\item We propose \name, a MapReduce-based framework to support similarity join in metric space. Our framework can efficiently support multiple distance functions with different kinds of data.
	\item  We devise novel sampling approaches for metric similarity join in MapReduce framework that employ light-weighted and theoretically sound techniques for selecting representative pivots. With such sampling techniques, we can estimate the global data distribution from that of each node in a cluster. Then it could provide more prior knowledge of the data distribution so as to enhance the overall performance. 
	\item We conduct comprehensively theoretical study on the proposed sampling techniques and provide a progressive quality guarantee on given sample size.
	\item We theoretically study the cost model of partitioning in map phase and propose effective partition strategies to ensure load balancing accordingly.
	\item We implement our framework upon Apache Spark and conduct extensive experiments for a variety of distance functions using real world datasets. Experimental results show that our method significantly outperforms state-of-the-art techniques on a variety of real applications.
\end{compactitem}	

Note that the motivation and detailed application parts of this work has been published as a 4-page poster in~\cite{icde19mrjoin}.
This paper contains significant improvement in technical contribution, theoretical analysis, and as well as experimental results compared with that short version. Thus it does not violate the policy regarding ``Originality and Duplicate Submissions”.

The rest of this paper is organized as follows: Section~\ref{sec-prelim} introduces necessary preliminaries and problem settings. 
Section~\ref{sec-sampling} describes the foundation of statistics in the sampling phase.
Section~\ref{sec-pivot} proposes two sampling techniques to select the pivots for partition as well as provides the theoretical analysis of error bounds.
Section~\ref{sec-partition} presents the partition strategies for map phase. 
Section~\ref{sec-dis} proposes some necessary discussions about our framework.
The experimental results are shown in Section~\ref{sec-exp}. 
Section~\ref{sec-related} reviews the related work. 
Finally, the conclusion is given in Section~\ref{sec-con}.

\section{Problem Statement}\label{sec-prelim}

In this paper, we focus on the problem of similarity join in metric space. 
First, we give the definition of metric space distance and its properties as is shown in Definition~\ref{def-metric}.

\begin{definition}[Metric Space Distance~\cite{bryant1985metric}] \label{def-metric}
	Let \bigU be the domain of data, $o_x, o_y$ and $o_z$ are arbitrary objects in \bigU. A metric space distance on \bigU is any function $\bigD: \bigU \times \bigU \rightarrow \mathbb{R}$  satisfying\vspace{-.5em}\\
\begin{compactitem}
\item Non-negativity: $\forall o_x,o_y,\ \bigD(o_x,o_y) \geq 0$
\item Coincidence Axiom: $\bigD(o_x,o_y) = 0$ iff $o_x=o_y$
\item Symmetry: $\bigD(o_x,o_y) = \bigD(o_y, o_x)$
\item Triangle Inequality:	$\bigD(o_x, o_z) \leq \bigD(o_x,o_y) + \bigD(o_y,o_z)$
\end{compactitem}
\end{definition}	
Based on the above definition, we formally define our problem in Definition~\ref{def-simjoin}. 

\begin{definition}[Metric Similarity Join]\label{def-simjoin}
	Given two sets of objects \bigX and \bigY which consist of $m$-dimensional vectors, metric similarity join aims at finding all pairs of $\langle o_x,o_y\rangle$ from $\bigX \times \bigY \in \bigD(o_x,o_y) \leq \delta$ where \bigD is a metric space distance function specified by the user and $\delta$ is its threshold.
\end{definition}

\begin{example}
	This example shows the similarity join in metric space with \lonenormdist. 
	Given two $m$-dimensional vectors $X$ and $Y$, the \lonenormdist between them can be calculated as:
	$L_1dist(X, Y) = \sum_{i=1}^m (|x_i - y _i|)$, where $x_i$ and $y_i$ are the value in the $i^{th}$ dimension of $X$ and $Y$, respectively.
	
	As the distance function on a dataset containing 4 objects:\\ $o_1 =$ [16,35,5,32,31,14,10,11],
	$o_2 =$  [15,33,2,35,29,13,11,12],\\ $o_3 =$ [10,27,8,26,37,23,15,13] and $o_4 = $ [9,30,4,25,34,25,18,14].
	 Assume the given threshold for similarity join is $\delta=30$, then $\bigD(o_1, o_2)=|16-15|+|35-33|+...+|11-12|=14$, $\bigD(o_1, o_3)=45$,  $\bigD(o_1, o_4)=45$,  $\bigD(o_2, o_3)=49$,  $\bigD(o_2, o_4)=47$,  $\bigD(o_3, o_4)=18$. Therefore, the final results of similarity join are $\langle o_1,o_2\rangle$ and $\langle o_3,o_4\rangle$.
\end{example}	

In this paper, without loss of generality, we focus on the self-join of a dataset $\bigR$. 
Notice that, it is straightforward to extend our framework to support non-self-join case. 
We also use \lonenormdist to illustrate the techniques in this paper.
But our techniques could be generalized to all distance functions in metric space, including those for string data (see Section~\ref{subsec-dis-str} for more discussions).

\section{Estimating Distribution}\label{sec-sampling}
 
 In this section, we introduce the foundation of our sampling approaches.
 We first illustrate the motivation of devising effective sampling approach in Section~\ref{subsec-motivation}.
 We then provide an overview of our sampling techniques in Section~\ref{subsec-spover}.
 Next, we introduce two crucial steps, i.e. Distribution Estimation (Section~\ref{subsec-dist}) and Confidence Calculation (Section~\ref{subsec-confidence}) of the sampling phase.
 
\subsection{Motivation}\label{subsec-motivation}

Recall the three-phase framework of state-of-the-art methods in Figure~\ref{fig-arch}, in this section we aim at improving the sampling phase, which dominates the overall performance.
The cornerstone of our approaches comes from the fact that each pivot reveals a piece of information about the underlying but unknown distribution of the entire dataset.
Therefore, a set of perfect pivots, which reflects a concise statistical distribution of the underlying data, can bring significant performance benefits for metric similarity join. 
In the ideal case, if we have access to an incredibly precise distribution of the underlying data, we can divide the workload evenly across all nodes and minimize the network communication cost.
To demonstrate it, we first intuitively use a running example to show the importance of sampling.

\begin{example}
	In Figure~\ref{fig:sample_importance}, we use sampled pivots (large red cross) to split all objects (blue points) in the dataset into four parts. 
	The partitions are split by red lines.
	Among them, Figure~\ref{subfig:badsample} demonstrates that bad samples can lead to skewness in the partition.
	The numbers of data points of four partitions in Figure~\ref{subfig:badsample} are 88, 6, 93, 13, respectively.
	Therefore, the maximum verification number among partitions in this situation is about 4278.
	Meanwhile, once we have good samples as shown in Figure~\ref{subfig:goodsample}, there are balanced partitions.
	Then the cardinality of these four partitions are 47,51,48,54, respectively.
	Correspondingly, the maximum verification number is about 1431, which is approximately 3 times better than the case of bad sampling.
\end{example}

\begin{figure}[h]
	\begin{center}
		\subfigure[\small{Bad Samples}]{
			\label{subfig:badsample}
			\epsfig{figure=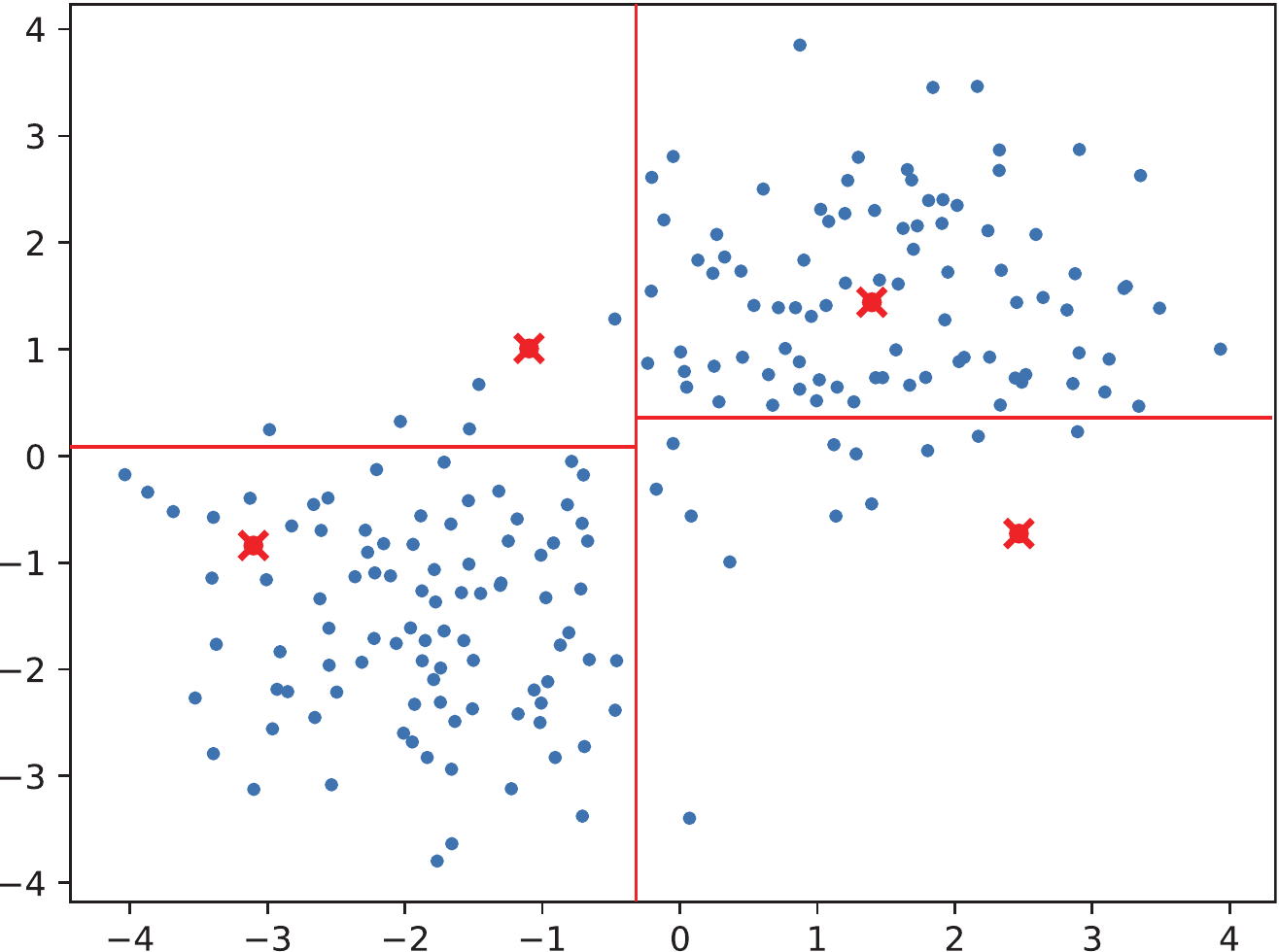,width=0.22\textwidth}}
		\subfigure[\small{Good Samples}]{
			\label{subfig:goodsample}
			\epsfig{figure=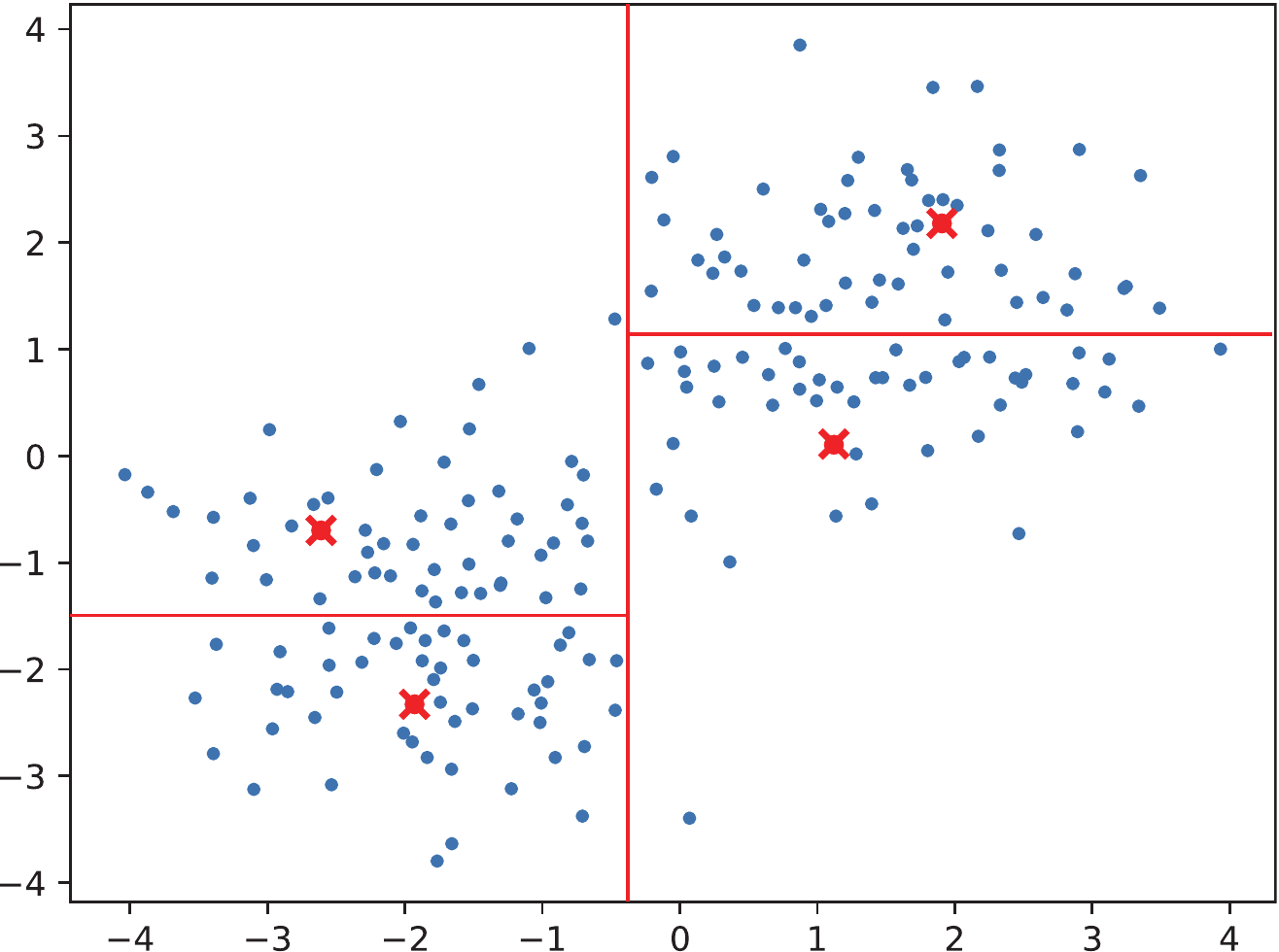,width=0.22\textwidth}}
	\end{center}\vspace{-1em}
	\caption{Importance of Sampling}\label{fig:sample_importance}
\end{figure}


\subsection{Overview of Sampling Process}\label{subsec-spover}

\begin{figure}[h!t]\vspace{-1em}
	\linesnumbered
	\SetVline
	\begin{algorithm}[H]
		\caption{Overall Sampling Algorithm\label{alg:overall}}
		\SetVline
		\For{each local node $i$}{
			Estimate the parametric Distribution $P(\bm{x};)$ \nllabel{alg:overall:st11}\\
			Construct the test statistics for $K_i$ and obtain its value \nllabel{alg:overall:st21} \\
			Obtain the confidence level using Equation~\ref{equ-confn}\nllabel{alg:overall:st22} \\
			Broadcasting the distribution parameters $\langle F_i(\bm{x})$, $c_i^0$, $N_i\rangle$ to all local nodes \nllabel{alg:overall:st31}\\
			Obtain samples. \nllabel{alg:overall:st32}\\
		}
	\end{algorithm}\vspace{-1em}
\end{figure}

In order to tackle the inherent deficiency of previous studies~\cite{DBLP:journals/tkde/ChenYCGZC17,DBLP:journals/pvldb/SarmaHC14,DBLP:conf/kdd/WangMP13} using simple random sampling, we propose a novel framework which aims at supporting stratified sampling to provide better pivots and formal guarantee.
To describe the sampling techniques, we first introduce some terminologies that will be used in Section~\ref{sec-sampling} and Section~\ref{sec-pivot}.
In the distributed environment, we call each single machine a \emph{local node} in the cluster.
The whole dataset $\bigR$ is distributed to all local nodes in our problems setting.
$N_\bigR$ is the cardinality of the dataset; $m$ is the dimension number of each object; $k$ is the number of required samples; $M$ is the total number of nodes. 

From the perspective of statistics, the objects in $\bigR$ can be modeled as $N_\bigR$ independent and identically distributed observations on $m$-dimensional random variables which are denoted as $\bm{X}$. 
Then a dataset can be described with its parametric distribution, i.e. its Cumulative Distribution Function (CDF) $F_i(\bm{x}) = P(\bm{x};\bm{\eta})$, where $\bm{\eta}$ denotes the distribution parameter set. 
Correspondingly, the Probability Density Function (PDF) is denoted as $p(\bm{x};\bm{\eta})$.
If a random variable $\bm{X}$ conforms the distribution, we denote it as: $\bm{X}\sim P(\bm{x};\bm{\eta})$.
Specifically, the distribution of the whole dataset is called \emph{global distribution}; while that of each local node is called \emph{local distribution}.

With above terminology, we then outline our sampling techniques.
To implement stratified sampling in distributed environment, the naive approach needs multiple map and reduce jobs to construct the strata and obtain the samples, which is rather inefficient due to the expensive network transmission cost.
One tempting approach to improve that is to conduct stratified sampling on each local node separately without constructing a global strata.
However, it would definitely compromise the quality of sampling since each local node could only construct strata with local information.
In order to remedy it, we utilize statistical tools to estimate the global distribution from local distributions of each node.
With the help of global information, we can make a wise decision on the sample size and strata construction for each local node.
Following this route, we propose a 3-stage sampling framework shown in Algorithm~\ref{alg:overall}:

\textbf{Stage 1: Distribution Parameter Estimation.} (Section~\ref{subsec-dist}).
During the stage 1, we estimate a parametric distribution $P(\bm{x};\bm{\eta})$ for each local node in a cluster (line~\ref{alg:overall:st11}). 
This can be done by the well-known \emph{Maximum Likelihood Estimation} technique. 

\textbf{Stage 2: Distribution Confidence Calculation.}(Section~\ref{subsec-confidence}).
With only the local distribution of each node obtained in Stage 1, it is not enough to estimate the global distribution as well as conduct stratified sampling just from the parametric distributions.
In order to decide the contribution that each local node makes to the global distribution, we also need to identify \emph{the confidence level that the estimated local distribution holds} in each local node. 
In Stage 2, we finish this step with the tool of \emph{hypothesis testing} (line~\ref{alg:overall:st21}-\ref{alg:overall:st22}).

\textbf{Stage 3: Samples Acquisition.}(Section~\ref{sec-pivot}).
Finally, with the parametric distribution and the confidence level of each local node, we are able to conduct stratified sampling on all local nodes to obtain the final sampled pivots (line~\ref{alg:overall:st31}-\ref{alg:overall:st32}). 
We first propose a distribution-aware sampling approach to better utilize the distribution and its confidence acquired in the first two stages.
Moreover, we further devise a generative sampling approach, which makes the sampling overhead independent of the sample size. 

\subsection{Distribution Estimation}\label{subsec-dist}
\begin{table*}[!t]
	\centering
	\caption{Summary of Common Distributions Represented by \emph{Exponential Family Distribution}}
	\renewcommand{\tabcolsep}{2mm}
	\begin{tabular}{|c|c|c|c|c|c|}
		\hline
		Distribution & Probability Density/Mass Function & $\bm{\eta}$ & $h(\bm{x})$ & $\bm{T}(\bm{x})$ & $\alpha(\bm{\eta})$  \\
		\hline
		Exponential & $f(x|\lambda) = \left\{ \begin{array}{lr}
		\lambda \mathrm{e}^{-\lambda x} & x \geq 0 \\
		0 & x < 0
		\end{array}\right. $ & $-\lambda$& 1 & $x$ & $-\log(-\eta)$ \\ \hline
		Gamma & $f(x|\alpha, \beta) = \dfrac{\beta^\alpha}{\Gamma(\alpha)}x^{\alpha - 1}\mathrm{e}^{-\beta x}$ & $\left[\begin{matrix} -\alpha - 1 \\
		-\beta \\ \end{matrix}\right]$ & 1 & 
		$\left[\begin{matrix} \log x \\
		x \\ \end{matrix}\right]$ & $-\log\Gamma(\eta_1+1) - (\eta_1 + 1)\log(-\eta_2) $ \\
		\hline
		\tabincell{c}{Multivariate\\Normal} & 	$f(\bm{x}|\bm{\mu},\bm{\Sigma}) = \dfrac{\mathrm{e}^{-\frac{1}{2}(\bm{x}-\bm{\mu})^{\mathrm{T}}\bm{\Sigma}^{-1}(\bm{x}-\bm{\mu})}}{\sqrt{(2\pi)^m|\bm{\Sigma}|}}$    & 
		$\left[\begin{matrix} \bm{\Sigma}^{-1}\bm{\mu} \\
		-\frac{1}{2}\bm{\Sigma}^{-1} \\ \end{matrix}\right]$                 
		&  $(2\pi)^{-\frac{k}{2}}$            &  
		$\left[\begin{matrix} \bm{x} \\
		\bm{x}\cdot\bm{x}^{\mathrm{T}} \\ \end{matrix}\right]$           &  $-\dfrac{1}{4}\bm{\eta}_1^{\mathrm{T}}\bm{\eta}_2^{-1}\bm{\eta}_1- \dfrac{1}{2}\ln(|-2\bm{\eta}_2|)$ \\\hline
		Wishart & $f_p(\bm{x}|\bm{V}, n) = \dfrac{|\bm{x}|^{\frac{n-p-1}{2}}\mathrm{e}^{-\frac{\mathrm{tr}(\bm{V}^{-1}\bm{x})}{2}}}{2^{\frac{np}{2}}|\bm{V}|^{\frac{n}{2}}\Gamma_p(\frac{n}{2})}$ & $\left[\begin{matrix} -\frac{1}{2} \bm{V}^{-1} \\ \frac{n-p-1}{2} \\ \end{matrix}\right]$ & 1 & $\left[\begin{matrix} \bm{x} \\ \log|\bm{x}| \\ \end{matrix}\right]$ & $-\dfrac{n}{2}\log|-\bm{\eta}_1| + \log\Gamma_p(\dfrac{n}{2})$ \\ \hline
		Dirichlet & $\begin{aligned} f(\bm{x}|\bm{\alpha}) &= f(x_1, \cdots, x_K | \alpha_1, \cdots, \alpha_K)  \\ &=\dfrac{\Gamma(\sum_{i=1}^K\alpha_i)}{\prod_{i=1}^K\Gamma(\alpha_i)} \prod_{i=1}^K x_i^{\alpha_i - 1}\end{aligned}$ & $\left[\begin{matrix} \alpha_1 \\ \vdots \\
		\alpha_K \\ \end{matrix}\right]$ & $\dfrac{1}{\prod_{i=1}^K x_i}$ & $\left[\begin{matrix} \log x_1 \\ \vdots \\
		\log x_K \\ \end{matrix}\right]$ & $\sum_{i=1}^K \log\Gamma(\eta_i) - \log \Gamma(\sum_{i=1}^K \eta_i)$ \\
		\hline
	\end{tabular}
	\label{tab:exponential_family}\vspace{-1em}
\end{table*}

We first introduce how to adopt the theory of \textbf{M}aximum \textbf{L}ikelihood \textbf{E}stimation (MLE) to estimate the parametric distribution for each local node.
Generally speaking, MLE is a methodology to fit the dataset into a statistical model and then we use Goodness of Fit to describe how well it fits a set of observations in Section~\ref{subsec-confidence}.
However, one problem is that the subset on each local node might conform different types of distributions.
To determine the global distribution, it requires a generalized form to describe the parametric distribution.

Fortunately, the \emph{Exponential Family} (a.k.a Koopman-Darmois Family) Distribution~\cite{efd} provides \textbf{a unified parametric format} to represent the local distributions as well as the global distribution.
It can be used to describe the Probability Density Function (PDF) of most common distributions in the real world, such as Normal, Beta, Gamma, Chi-square distributions by simply varying its parameters, which is formally shown in Definition~\ref{def-efdit}.
\begin{definition}\label{def-efdit}
	Given the parameter set $\bm{\eta}$, the real-valued function of the parameter set $\alpha(\bm{\eta})$ and the statistics functions $\bm{T}(\bm{x})$ and $h(\bm{x})$, the PDF of Exponential Family Distribution can be written as following:
	\begin{equation}
	p(\bm{x};\bm{\eta}) = h(\bm{x})\exp\big({\bm{\eta}^\mathrm{T} \cdot\bm{T}(\bm{x}) - \alpha(\bm{\eta})}\big)
	\end{equation}
	which satisfies
	\begin{equation*}
	\int{h(\bm{x})\exp\big({\bm{\eta}^\mathrm{T}\cdot\bm{T}(\bm{x}) - \alpha(\bm{\eta})}\big)\mathrm{d}\bm{x}} = 1
	\end{equation*}
\end{definition}

By varying $\bm{T}(\bm{x})$ and $h(\bm{x})$, we can use its CDF $P(\bm{x};\bm{\eta})$ to denote different kinds of distribution. 
Table~\ref{tab:exponential_family} shows some different distributions that can be represented with \emph{Exponential Family Distribution} by varying the parameters $\bm{\eta}, h(\bm{x}), \bm{T}(\bm{x}), \alpha(\bm{\eta})$.

Given the data on a local node, we can fit its CDF $P(\bm{x};\bm{\eta})$ by estimating the parameter set $\bm{\eta}$ with Maximum Likelihood Estimation(MLE) approach.
In this way, the formula of parameter set $\bm{\eta}^0$ can be obtained with the help of function $\bm{\mu}(\bm{\eta})$.
Consequently, we could get $\bm{\eta}^0$ since $\bm{T}(\bm{o}_i)$ and $N_\bigR$ can be known for the data on a local node. 
The details are summarized in Lemma~\ref{lem-mle}.

\begin{lemma}\label{lem-mle}
	The parameter set $\bm{\eta}^0$ to describe the distribution of a local node can be estimated as:
	\begin{equation}\label{eq-mle}
		\bm{\eta}^0 = \bm{\mu^{-1}}\Big(\dfrac{\sum_{i=1}^{N_\bigR} \bm{T}(\bm{o}_i) }{N_\bigR}\Big)
	\end{equation}
	where $\bm{\mu}(\bm{\eta}) = \mathbb{E}_{\bm{\eta}}\big(\bm{T}(\bm{x})\big) = \dfrac{\partial \alpha({\bm{\eta}})}{\partial \bm{\eta}}$ is the Mathematical Expectation of the distribution on function $\bm{T}(\bm{X})$ with parameter set $\bm{\eta}$, and $\bm{\mu^{-1}}(\bm{x})$ is the inverse function of $\bm{\mu}(\bm{x})$.
\end{lemma}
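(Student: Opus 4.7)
The plan is to derive the MLE for $\bm{\eta}$ directly from the exponential-family form in Definition~\ref{def-efdit}, exploiting the fact that in this parametrization the log-likelihood is concave in $\bm{\eta}$ and the gradient of $\alpha(\bm{\eta})$ is precisely the mean of the sufficient statistic $\bm{T}(\bm{X})$. First I would write the likelihood for the observations $\bm{o}_1,\ldots,\bm{o}_{N_\bigR}$ on a local node as
\begin{equation*}
L(\bm{\eta}) \;=\; \prod_{i=1}^{N_\bigR} h(\bm{o}_i)\,\exp\!\big(\bm{\eta}^{\mathrm{T}}\bm{T}(\bm{o}_i) - \alpha(\bm{\eta})\big),
\end{equation*}
take logarithms to obtain
\begin{equation*}
\ell(\bm{\eta}) \;=\; \sum_{i=1}^{N_\bigR}\log h(\bm{o}_i) \;+\; \bm{\eta}^{\mathrm{T}}\!\!\sum_{i=1}^{N_\bigR}\bm{T}(\bm{o}_i) \;-\; N_\bigR\,\alpha(\bm{\eta}),
\end{equation*}
and note that the first term is independent of $\bm{\eta}$ and can be dropped from the optimisation.

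The next step is to set the gradient $\nabla_{\bm{\eta}}\ell$ to zero, which yields the score equation $\sum_i \bm{T}(\bm{o}_i) = N_\bigR\,\nabla_{\bm{\eta}}\alpha(\bm{\eta})$. To identify $\nabla_{\bm{\eta}}\alpha(\bm{\eta})$ with $\bm{\mu}(\bm{\eta})$, I would differentiate the normalisation identity in Definition~\ref{def-efdit},
\begin{equation*}
\int h(\bm{x})\exp\!\big(\bm{\eta}^{\mathrm{T}}\bm{T}(\bm{x}) - \alpha(\bm{\eta})\big)\,\mathrm{d}\bm{x} \;=\; 1,
\end{equation*}
with respect to $\bm{\eta}$ under the integral sign; moving $\nabla_{\bm{\eta}}\alpha(\bm{\eta})$ outside and recognising the remaining integrand as $\bm{T}(\bm{x})\,p(\bm{x};\bm{\eta})$ gives the classical log-partition identity $\nabla_{\bm{\eta}}\alpha(\bm{\eta}) = \mathbb{E}_{\bm{\eta}}[\bm{T}(\bm{X})] = \bm{\mu}(\bm{\eta})$, exactly as stated in the lemma. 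Substituting into the score equation and solving for $\bm{\eta}$ yields Equation~\ref{eq-mle} upon applying $\bm{\mu}^{-1}$ to both sides.

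Finally, I would justify that the critical point is the global maximum rather than merely a stationary point. The argument is that $\alpha(\bm{\eta})$ is convex in $\bm{\eta}$ since its Hessian equals the covariance matrix $\mathrm{Cov}_{\bm{\eta}}\!\big(\bm{T}(\bm{X})\big)$, which is positive semidefinite; under the standard minimal/regular exponential family assumption this covariance is strictly positive definite, so $\ell(\bm{\eta})$ is strictly concave and the stationary point is the unique maximiser, while the same condition ensures that $\bm{\mu}$ is a diffeomorphism on the natural parameter space so that $\bm{\mu}^{-1}$ is well defined. I expect the main obstacle to be purely presentational rather than mathematical: namely, stating the regularity assumptions (minimal representation, interior natural parameter, interchange of differentiation and integration) concisely enough to fit the paper's style without cluttering the derivation, since the algebraic manipulations themselves are standard.
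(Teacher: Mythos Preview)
Your proposal is correct and follows essentially the same route as the paper: form the likelihood, pass to the log-likelihood, set the gradient to zero, invoke the log-partition identity $\nabla_{\bm{\eta}}\alpha(\bm{\eta})=\mathbb{E}_{\bm{\eta}}[\bm{T}(\bm{X})]$, and invert $\bm{\mu}$. The only differences are that the paper simply cites the identity from a reference rather than deriving it from the normalisation condition, and it appeals to Fermat's theorem for the stationary point without your concavity argument; so your version is slightly more self-contained and rigorous, but not a different approach.
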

\begin{proof}
	Based on the theory of MLE, we first get the likelihood function as following:
	\begin{equation}
	\mathcal{L}(\bm{\eta};\bm{x}) = \prod_{\bm{o}\in \bigR} p(\bm{o}|\bm{\eta}) = \prod_{i=1}^{N_\bigR} h(\bm{o}_i)\exp\big({\bm{\eta}^\mathrm{T} \cdot\bm{T}(\bm{o}_i) - \alpha(\bm{\eta})}\big)
	\end{equation}
	where $\bm{o}$ and $\bm{o}_i$ represent objects in datasets. 
	Then the objective of MLE is to maximize the above likelihood function.
	
	In practice, it is often convenient to work with the natural logarithm of the likelihood function since the original function and the the natural logarithm reach the maximum value at the same parameter set $\bm{\eta}^0$.
	
	Therefore we have:
	\begin{equation}
	\begin{aligned}
	\bm{\eta}^0 = \mathop{\arg\max}_{\bm{\eta}} \big(\mathcal{L}(\bm{\eta};\bm{x})\big) &= \mathop{\arg\max}_{\bm{\eta}} \log\big(\mathcal{L}(\bm{\eta};\bm{x})\big) \\&= \mathop{\arg\max}_{\bm{\eta}} \sum_{i=1}^{N_\bigR}{\big(\bm{\eta}^\mathrm{T} \cdot\bm{T}(\bm{o}_i) - \alpha(\bm{\eta})\big)} 
	\end{aligned}
	\end{equation}
	
	Utilizing that the derivative of function on maximum parameters is 0 based on \textit{Fermat's theorem\footnote{https://en.wikipedia.org/wiki/Fermat\%27s\_theorem\_(stationary\_points)}}, we then obtain the equation contained $\bm{\eta}^0$.
	\begin{equation}\label{mle-proof-1}
	\dfrac{\partial \log\big(\mathcal{L}(\bm{\eta};\bm{x})\big)}{\partial \bm{\eta}}\bigg|_{\bm{\eta}=\bm{\eta}^0} = \sum_{i=1}^{N_\bigR} \bm{T}(\bm{o}_i) - N_\bigR \cdot \dfrac{\partial \alpha(\bm{\eta})}{\partial \bm{\eta}}\bigg|_{\bm{\eta}=\bm{\eta}^0} = 0
	\end{equation}
	
	Moreover, in~\cite{casella2001theory}, we obtain the equation related to $\dfrac{\partial \alpha(\bm{\eta})}{\partial \bm{\eta}}$:
	\begin{equation}\label{mle-proof-2}
	\dfrac{\partial \alpha(\bm{\eta})}{\partial \bm{\eta}} = \bm{\mu}(\bm{\eta}) = \mathbb{E}_{\bm{\eta}}\big(\bm{T}(\bm{x})\big) = \int{\bm{T}(\bm{x})p(\bm{x};\bm{\eta})}\mathrm{d}\bm{x}
	\end{equation}
	
	Finally, based on the equation~\ref{mle-proof-1} and~\ref{mle-proof-2}, we get $\bm{\eta}^0 = \bm{\mu^{-1}}\Big(\dfrac{\sum_{i=1}^{N_\bigR} \bm{T}(\bm{o}_i) }{N_\bigR}\Big)$ where $\bm{\mu^{-1}}$ is the inverse function of $\mathbb{E}_{\bm{\eta}}\big(\bm{T}(\bm{x})\big)$.
\end{proof}

For other distributions whose $\mathbb{E}_{\bm{\eta}}(\bm{T}(\bm{x}))$ cannot be written in an explicit form, we could use Gradient Descent to get the approximate solution of MLE since the derivative is known with the datasets and given forms of function $\alpha(\bm{\eta})$. 
Actually, it is rather easy to obtain the explicit form of $\mathbb{E}_{\bm{\eta}}(\bm{T}(\bm{x}))$ for most common distributions.

\subsection{Confidence Calculation}\label{subsec-confidence}

After estimating $\bm{\eta}^0$ with Lemma~\ref{lem-mle}, we could obtain the data distribution $F_i(\bm{x}) = P(\bm{x};\bm{\eta}^0)$ of each local node.  
The next step is to identify the confidence level that above distribution holds.
The information of confidence level represents to what extent the estimated distribution conforms with the dataset.
And it would influence the sample size and strata construction on each local node.
To describe the techniques, we use the following notations in this section: for local node $i$, its cardinality is $N_i$. 
The test statistics of node $i$ is $K_i$ and the value of it is $K_i^*$.
The confidence level that the distribution holds is $c_i^0$.

The methodology to obtain the confidence level that $P(\bm{x};\bm{\eta}^0)$ holds is based on the \emph{Goodness of Fit}~\cite{huber2012goodness} theory, which is used to describe how a distribution (statistical model) fits the dataset (a set of observations). 
Measures of goodness of fit typically summarize the discrepancy between observed values and the values expected under the distribution in question. 
Such measures can be used in \emph{statistical hypothesis testing}~\cite{cramer2016mathematical}  to test whether outcome frequencies follow a specified distribution. 

Before showing the details in each step, we first provide a general intuition behind these steps. 
Recalled that the dataset can be considered as an observation on random variables which conform the distribution drawn from dataset, i.e. the \emph{true distribution}. 
Intuitively, the true and estimated distribution can be connected with random variables.
In particular, we can propose a random variable called \emph{test statistics} to somehow depict the conformity between the estimated distribution and random variables that conform true distribution. 
Such a conformity can be reflected by the value of test statistics.
Meanwhile, the test statistics also can be validated with the dataset. 
We can then obtain the probability that the value of test statistics is larger than the one validated from true dataset, which also means the probability that the estimated distribution on a local node conforms the true distribution.
\emph{And such a probability can be regarded as the confidence level that the distribution holds}.

To sum up, the procedure of statistical hypothesis testing in Goodness of Fit can be divided into five steps:
\begin{compactitem}
	\item[1.] Propose the hypothesis $H_0$, which is a testable statement on the basis of observing a process modeled via a set of random variables.
	\item[2.] Construct the statistic $K_i$ based on the hypothesis and its related random variable. Therefore $K_i$ is a random variable constructed by the other random variables.
	\item[3.] Prove that the statistics conforms a distribution under the assumption that the hypothesis is true.
	\item[4.] Compute the value of test statistics $K_i$ (denoted as $K_i^*$) from data on the local node. That is, to replace the random variables in $K_i$ with the true gotten value.
	\item[5.] Obtain the confidence, which is the probability (i.e. p-value) that the statistical summary (e.g. mean) of a given distribution would be no worse than actual calculated results. 
\end{compactitem}

Next we will follow these five steps to acquire the confidence level. 
As mentioned in Section~\ref{subsec-spover}, the CDF of random variable $\bm{X}$ can be viewed as the true distribution of data on a local node. 
And objects belonging to the local node can be considered as the independent \emph{observations} on random variable $\bm{X}$, which means \textbf{the observations can also be considered as random variables $\bm{X}_1, \cdots, \bm{X}_{N_i}$ which are independent and conform the same true distribution with $\bm{X}$.} 
With such explanation from the statistical aspect, given the estimated distribution, we can propose the hypothesis(Step 1):
\begin{equation}
H_0: \bm{X} \sim P(\bm{x};\bm{\eta}^0)
\end{equation}
Here $\bm{X}$ is the set of random variables which are observed from true distribution, and $P(\bm{x};\bm{\eta}^0)$ is the estimated distribution.

Correspondingly, if the hypothesis is true, the data on a local node can be regarded as conforming the estimated distribution $P(\bm{x};\bm{\eta}^0)$ obtained in Section~\ref{subsec-dist}. 
Then the test statistics $K_i$ can be constructed as the difference between estimated distribution $P(\bm{x};\bm{\eta}^0)$ and true data distribution on the local node.
If they are close enough, then we could assert the hypothesis is true.

In order to evaluate the closeness, we utilize the method proposed in~\cite{cramer2016mathematical} to discretize the continuous space of random variables $\bm{X}$ into a finite number $t$ disjoint cells $\bigZ_1, \bigZ_2, ..., \bigZ_{t}$, where $t$ is a hyper-parameter and can be set empirically. Then we can derive the test statistics using Lemma~\ref{lem-disc} (Step 2).

\begin{lemma}\label{lem-disc}
	The test statistics $K_i$ can be written as 
	\begin{equation}\label{equ-statk}
	K_i = \sum_{j=1}^{t} \dfrac{\big(\nu_j - N_i \cdot q_j(\bm{\eta}^0)\big)^2}{N_i \cdot q_j(\bm{\eta}^0)}
	\end{equation}
	where $\nu_j = \sum_{i=1}^{N_i}\mathbf{1}_{\{\bm{X}_k \in \bigZ_j\}}$ is the \emph{Frequency} of a cell based on the observations; $q_j(\bm{\eta}^0)= \int_{\bigZ_j}{p(\bm{x};\bm{\eta}^0)\mathrm{d}\bm{x}}$ is the \emph{probability} that the estimated distribution falls into the corresponding cell.
\end{lemma}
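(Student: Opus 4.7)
My plan is to derive the stated form of $K_i$ by recognizing it as the classical Pearson chi-squared statistic arising from a multinomial discretization of the hypothesis $H_0: \bm{X} \sim P(\bm{x};\bm{\eta}^0)$. First I would fix the partition $\bigZ_1,\dots,\bigZ_t$ of the sample space and, for each cell $j$, define the observed count $\nu_j = \sum_{k=1}^{N_i} \mathbf{1}_{\{\bm{X}_k \in \bigZ_j\}}$ and the hypothesized cell probability $q_j(\bm{\eta}^0) = \int_{\bigZ_j} p(\bm{x};\bm{\eta}^0)\,\mathrm{d}\bm{x}$. Since the $\bm{X}_k$ are i.i.d.\ under $H_0$ and each observation falls in exactly one cell, the random vector $(\nu_1,\dots,\nu_t)$ is $\mathrm{Multinomial}(N_i;q_1,\dots,q_t)$, so that $\mathbb{E}[\nu_j] = N_i q_j(\bm{\eta}^0)$ and the ``expected versus observed'' deviation in cell $j$ is $\nu_j - N_i q_j(\bm{\eta}^0)$.

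Next I would argue that the natural scalar summary of overall discrepancy is a weighted sum of squared deviations, and identify the correct weights. The $\mathrm{Var}(\nu_j) = N_i q_j(1-q_j)$ and $\mathrm{Cov}(\nu_j,\nu_{j'}) = -N_i q_j q_{j'}$ give a singular covariance matrix $\Sigma$ of rank $t-1$ (because $\sum_j \nu_j = N_i$ is deterministic). By the multivariate central limit theorem applied to the multinomial, the centered vector $\bm{D} = \bigl(\nu_j - N_i q_j\bigr)_{j=1}^t$ is, for large $N_i$, asymptotically normal with covariance $\Sigma$, and a standard computation shows that the Moore--Penrose pseudo-inverse $\Sigma^{+}$ acts on $\bm{D}$ through the diagonal weights $1/(N_i q_j)$. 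Consequently the quadratic form $\bm{D}^{\mathrm{T}} \Sigma^{+} \bm{D}$ collapses precisely to $\sum_{j=1}^{t}\frac{(\nu_j - N_i q_j(\bm{\eta}^0))^2}{N_i q_j(\bm{\eta}^0)}$, which is the expression claimed for $K_i$.

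The main obstacle is justifying the choice of weighting $1/(N_i q_j)$ rather than the naive $1/\mathrm{Var}(\nu_j) = 1/(N_i q_j(1-q_j))$. I would handle this by invoking the algebraic identity that for a multinomial covariance $\Sigma = N_i(\mathrm{diag}(q) - qq^{\mathrm{T}})$, one has $\bm{D}^{\mathrm{T}} \Sigma^{+} \bm{D} = \sum_j D_j^2/(N_i q_j)$ whenever $\bm{D}$ lies in the range of $\Sigma$ (which it does, since $\sum_j D_j = 0$). This collapse is precisely what makes Pearson's form canonical and lets us avoid any off-diagonal correction. I would cite the classical derivation in~\cite{cramer2016mathematical}, since that reference is already used to justify the discretization step.

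Finally I would note that no properties of $\bm{\eta}^0$ beyond defining $q_j(\bm{\eta}^0)$ are needed at this point; the fact that $\bm{\eta}^0$ was itself estimated from the data will only affect the \emph{distribution} of $K_i$ (reducing the degrees of freedom), which is relevant to Step~3 of the testing procedure and not to the construction of the statistic itself. This cleanly separates the present lemma (an algebraic/derivational fact about Step~2) from the distributional results that feed into the confidence computation in Equation~\ref{equ-confn}.
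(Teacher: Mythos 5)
Your proposal is correct and follows essentially the same route as the paper: discretize the sample space into the cells $\bigZ_j$, identify the observed counts $\nu_j$ against the expected counts $N_i \cdot q_j(\bm{\eta}^0)$ under the rewritten hypothesis $H_0: \mathbb{P}\{\bm{X}\in \bigZ_j\} = q_j(\bm{\eta}^0)$, and present the Pearson form of the statistic, citing~\cite{cramer2016mathematical}. The only substantive difference is that you additionally justify the $1/(N_i q_j)$ weighting via the pseudo-inverse of the multinomial covariance, a step the paper delegates entirely to the citation, so your version is if anything more complete.
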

\begin{proof}
	For the Exponential Family Distribution, the probability of event collections in cell $\bigZ_j$ can be written as:
	\begin{equation*}\label{equ-cell}
	q_j(\bm{\eta}^0) = \int_{\bigZ_j}{h(\bm{x})\exp\big({(\bm{\eta}^0)^{\mathrm{T}} \cdot\bm{T}(\bm{x}) - \alpha(\bm{\eta}^0)}\big)\mathrm{d}\bm{x}}\nonumber
	\end{equation*} 
	Since $\bm{X} \sim P(\bm{x};\bm{\eta})$, $q_j(\bm{\eta}^0)$ is considered as the \emph{probability} of a cell.
	
	Here the CDF of $\bm{X}$ is considered as true data distribution.
	If it conforms the estimated distribution, we could assert the observed \emph{frequency} that objects (the observation of $\bm{X}$) locate in a cell would be more close to the theoretical probability of that cell based on estimated distribution.
	
	Then our hypothesis can be rewritten as:
	\begin{equation*}\label{equ-hypnew}
	H_0: \mathbb{P}\{\bm{X}\in \bigZ_j\} = q_j(\bm{\eta}^0), \forall j.
	\end{equation*}
	where $\mathbb{P}\{\bm{X}\in \bigZ_j\}$ is the \emph{frequency} that data (the observation of $\bm{X}$) locates in cell $\bigZ_j$.
	
	Moreover, given the observations on node $i$, the frequency of $X \in \bm{X}$, $X \in \bigZ_j$ can be calculated as $\dfrac{\sum_{i=1}^{N_i}\mathbf{1}_{\{\bm{X}_k \in \bigZ_j\}}}{N_i} = \dfrac{\nu_j}{N_i}$, where $\nu_j$ is the frequency of $X \in \bigZ_j$.
	
	Finally, regarding the goodness of fit theory, in order to depict the relative distance between frequency and probability, the test statistics $K_i$ for each local node uses a measure which is the sum of differences between observed and expected outcome frequency, i.e. counts of observations.
	Each of them is squared and divided by the expectation~\cite{cramer2016mathematical}, which thus follows Equation~\ref{equ-statk}.
\end{proof}	

We want to specifically clarify that $K_i$ is a random variable since $\nu_j$ and $\bm{\eta}^0$ are random variables depending on $\bm{X}_1, \cdots \bm{X}_{N_i}$. 
In addition, when using objects on node $i$ to replace the random variable $\bm{X}$, we can then calculate the value of $K_i$. 
In order to distinguish the test statistics from its value, we denote the value as $K_i^*$.

We can prove that statistics $K_i$ conforms a chi-squared distribution by leveraging the statistical tool of Pearson's Chi-Square Test (Step 3), which is formally stated in Theorem~\ref{the-chisqt}.
Note here the value of $w$ and $t$ could be different for different node $i$.
\begin{theorem}\label{the-chisqt}
	If $\bm{X} \sim P(\bm{x};\bm{\eta}^0)$, then the test statistic $K_i$ conforms the Chi-Squared distribution with $t-w-1$ degrees of freedom:
	\begin{equation*}
	K_i \sim \chi^2_{t-w-1}
	\end{equation*}
	where $w$ is the number of parameters in $\bm{\eta}^0$ and $t$ is the number of cells explained above.
\end{theorem}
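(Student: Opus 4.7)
The plan is to establish this as a specialization of Pearson's classical chi-squared goodness-of-fit theorem with estimated parameters, proceeding in three stages: first handle the case of known parameters, then account for the MLE-based estimation of $\bm{\eta}^0$, and finally combine the two analyses to pin down the degrees of freedom as $t-w-1$.

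First I would treat the idealized case in which the parameter is fixed at its true value. Under $H_0$ the vector of cell counts $(\nu_1,\dots,\nu_t)$ is multinomial with parameters $N_i$ and $(q_1(\bm{\eta}^0),\dots,q_t(\bm{\eta}^0))$. Applying the multivariate central limit theorem to the normalized residuals $Z_j = (\nu_j - N_i q_j)/\sqrt{N_i q_j}$ shows that the vector $\bm{Z} = (Z_1,\dots,Z_t)^{\mathrm{T}}$ converges in distribution to $\mathcal{N}\bigl(\bm{0},\, I_t - \sqrt{\bm{q}}\,\sqrt{\bm{q}}^{\mathrm{T}}\bigr)$. Since this covariance matrix is an orthogonal projection of rank $t-1$, the quadratic form $\bm{Z}^{\mathrm{T}}\bm{Z} = K_i$ converges to $\chi^2_{t-1}$; the single lost degree of freedom reflects the linear constraint $\sum_j \nu_j = N_i$.

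Next I would incorporate the fact that $\bm{\eta}^0$ in the statement of $K_i$ is itself estimated by MLE from the same sample via Lemma~\ref{lem-mle}. The key ingredients are (i) the $\sqrt{N_i}$-consistency and asymptotic normality of the MLE for the exponential family, which follows from standard regularity of $\alpha(\bm{\eta})$ together with the fact that $\bm{\mu}(\bm{\eta}) = \partial \alpha/\partial \bm{\eta}$ is a diffeomorphism on its natural parameter space, and (ii) a first-order Taylor expansion of each $q_j(\hat{\bm{\eta}})$ around the true $\bm{\eta}$. This expansion expresses the corrected residuals $\bigl(\nu_j - N_i q_j(\hat{\bm{\eta}})\bigr)/\sqrt{N_i q_j(\hat{\bm{\eta}})}$ as the original residuals $Z_j$ minus a linear correction lying in the $w$-dimensional subspace spanned by the score directions $\partial q_j / \partial \bm{\eta}$ evaluated at $\bm{\eta}$.

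Finally I would combine the two pieces by an orthogonal decomposition: the corrected residual vector equals, in the limit, the projection of the Gaussian vector from the first step onto the orthogonal complement (with respect to the inner product weighted by $q_j^{-1}$) of the $w$-dimensional score subspace, sitting inside the $(t-1)$-dimensional subspace already identified. Because the MLE is efficient, this composite projection has rank exactly $t-1-w$, so $K_i$ converges to $\chi^2_{t-w-1}$. The main obstacle will be this last stage: one must carefully verify that the score-direction subspace is contained in the first-step projection's range and that the MLE's influence function projects $\bm{Z}$ precisely onto this subspace. This hinges on the information-matrix identity for exponential families, namely that the Fisher information equals $\partial \bm{\mu}/\partial \bm{\eta}$, which guarantees that the projection is truly orthogonal in the proper metric and that no additional degrees of freedom leak out of the reduction.
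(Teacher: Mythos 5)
The paper does not actually prove this statement: its ``proof'' is a one-line citation to Cram\'er's \emph{Mathematical Methods of Statistics}, and the argument you outline --- multinomial CLT giving a rank-$(t-1)$ projection for the normalized residuals, followed by a further rank-$w$ reduction attributable to parameter estimation --- is precisely the classical Pearson--Fisher argument carried out in that reference. So in substance you are reconstructing the proof the paper delegates; your first stage (the $\chi^2_{t-1}$ limit for known parameters) and the general projection strategy of your third stage are sound.

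There is, however, one genuine gap, and it sits exactly where you anticipated trouble. Your second stage anchors the estimator to Lemma~\ref{lem-mle}, i.e.\ the MLE computed from the \emph{ungrouped} observations via $\bm{\mu}^{-1}\bigl(\sum_i \bm{T}(\bm{o}_i)/N\bigr)$. The Pearson--Fisher reduction to $\chi^2_{t-w-1}$ requires the parameters to be estimated from the \emph{grouped} data --- by maximizing the multinomial likelihood of the cell counts $(\nu_1,\dots,\nu_t)$, or by minimum chi-square. Your stage-three claim that ``because the MLE is efficient, this composite projection has rank exactly $t-1-w$'' implicitly invokes efficiency \emph{within the multinomial submodel}; the raw-data MLE is efficient in the full model, which is strictly more informative, so its influence function does not project $\bm{Z}$ orthogonally onto the score subspace of the grouped model and the residual covariance is no longer an idempotent matrix of rank $t-w-1$. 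Chernoff and Lehmann showed that with the ungrouped MLE the limiting law is that of $\chi^2_{t-w-1}+\sum_{j=1}^{w}\lambda_j\xi_j^2$ with $\lambda_j\in[0,1]$ and $\xi_j$ i.i.d.\ standard normal --- stochastically between $\chi^2_{t-w-1}$ and $\chi^2_{t-1}$, but not chi-squared in general. To close your argument you must either substitute the grouped-data MLE into the construction of $K_i$, or weaken the conclusion to a stochastic bound. (This mismatch is inherited from the paper itself, which also feeds the Lemma~\ref{lem-mle} estimator into $K_i$; your attempt simply makes the hidden assumption visible.)
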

\begin{proof}
	See~\cite{cramer2016mathematical}.
\end{proof}

Then, we can calculate the value $K_i^*$ of test statistics based on objects of node $i$.
It can be achieved by replacing the random variables (observations) $\bm{X}_k$ with the value of objects on node $i$~\footnote{which can be obtained locally without network transmission} to calculate the value. 
Formally, we can compute the $K_i^*$ in Equation~\ref{equ-kstar} (Step 4).
\begin{equation}\label{equ-kstar}
K_i^* = \sum_{j=1}^{t} \dfrac{\big(\sum_{k=1}^{N_i}\mathbf{1}_{\{\bm{o}_k \in Z_j\}} - N_i \cdot q_j(\bm{\eta}^0)\big)^2}{N_i \cdot q_j(\bm{\eta}^0)}
\end{equation}
where $\bm{o}$ and $\bm{o}_k$ represents the value of objects in local node $i$ rather than random variables compared with $\bm{X}_k$.
Concretely, $\bm{o}_k$ can be considered as the corresponding value of $\bm{X}_k$ shown in data of local node $i$.

After we have the true value of random variable $K_i^*$ and its theoretical distribution under the establishment of hypothesis, it is nature to compare how the true value deviates from the theoretical distribution. 
In other words, it is the probability that the random variable $K_i$ is larger than the true value. 
If the probability is low, it means there is little possibility that the hypothesis is true.
As a result, the confidence to the hypothesis is also low. 
That is the reason why we just use the probability to calculate the confidence in this scenario.

According to above results, given the value $K^*_i$ of test statistics, we can get the confidence level with Equation~\ref{equ-confn} (Step 5). 
\begin{equation}\label{equ-confn}
c^0_i = \sup \{c | K^*_i > \chi^2_{t-w-1}(c)\}
\end{equation}

Actually, this confidence level is the \emph{maximum probability} which makes the hypothesis true.
As is further explained in~\cite{cramer2016mathematical}, if $K^*_i > \chi^2_{t-w-1}(c)$ holds with a given probability threshold $c$, we can assert that the hypothesis is true, i.e. the data conforms the distribution with the parameter set $\bm{\eta}^0$. 
In the practice of the sampling techniques, the $c_i^0 \geq 0.95$ empirically.
In this case, the data of all local nodes can fit at least one distribution in the Exponential Family Distribution.
If there are multiple possible distributions, we select the distribution with the maximum confidence as the result. 

\section{Sampling Algorithms}\label{sec-pivot}

With the distribution parameters and confidence obtained in Section~\ref{sec-sampling}, we can then acquire the sampled pivots using only one map job (Stage 3 in Algorithm~\ref{alg:overall}).
We first propose a distribution-aware sampling approach by leveraging these statistics (Section~\ref{subsec-distsample}).
Then we improve it with a generative approach, which makes the cost of sampling independent from the sample size (Section~\ref{subsec-gensample}).
Finally, we make theoretical analysis and provide a formal quality guarantee of our techniques (Section~\ref{subsec-errorbd}). 

\subsection{Distribution-aware Sampling}\label{subsec-distsample}

After the first two stages in Algorithm~\ref{alg:overall}, we can utilize the confidence to decide the number of samples that should be obtained from each node $i$ as Equation~\ref{equ-sps}.
\begin{equation}\label{equ-sps}
   \frac{{N_i}/{c_i^0}}{\sum_j({N_j/c_j^0})} \cdot k
\end{equation}
The intuition is that the higher confidence there is, the more we know about the distribution of a node.
Therefore, we should fetch more samples from the nodes whose distribution is associated with lower confidence so as to acquire more knowledge about them and make the global distribution more reliable.
After that, we can conduct stratified sampling on each local node $i$ with the help of the estimated distribution $F_i(\bm{x})$.

Next we analyze the quality of our estimation of the global distribution using the global test statistic $\widebar{K}$ and the global confidence $\widebar{c^0}$. 
A lower bound of $\widebar{c^0}$ is deduced in Theorem~\ref{tho-glbconf}.
\begin{theorem}\label{tho-glbconf}
	The lower bound of global confidence $\widebar{c^0}$ is the minimum value of all confidences ${c_i^0}$ from each node.
	\begin{equation}
	\widebar{c^0} \geq \min_i {c_i^0}
	\end{equation}
\end{theorem}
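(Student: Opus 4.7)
The plan is to express the global test statistic $\widebar{K}$ as a pooled aggregate of the per-node statistics and then read off the bound from the additive structure of the chi-square family. First I would note that observations on distinct nodes are independent, so by Theorem~\ref{the-chisqt} the per-node statistics $K_1,\dots,K_M$ are mutually independent random variables with $K_i\sim\chi^2_{t-w-1}$ under their respective null hypotheses. Taking the natural pooled statistic $\widebar{K}=\sum_{i=1}^M K_i$ (with empirical value $\widebar{K}^*=\sum_{i=1}^M K_i^*$), the additivity of chi-square variables gives $\widebar{K}\sim\chi^2_{M(t-w-1)}$, which is the distributional handle needed to define the global confidence via the same recipe as Equation~\ref{equ-confn}.

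Second, I would rewrite Equation~\ref{equ-confn} in CDF form: $c_i^0=F_{\chi^2_{t-w-1}}(K_i^*)$ for each node, and $\widebar{c^0}=F_{\chi^2_{M(t-w-1)}}(\widebar{K}^*)$ globally. Let $c^\star=\min_i c_i^0$ and let $q^\star=\chi^2_{t-w-1}(c^\star)$ denote the corresponding quantile. By the definition of $c^\star$, every $K_i^*\ge q^\star$, so summing across nodes yields $\widebar{K}^*\ge M q^\star$. It therefore suffices to prove that the $c^\star$-quantile of $\chi^2_{M(t-w-1)}$ is at most $M q^\star$, since then monotonicity of $F_{\chi^2_{M(t-w-1)}}$ will give $\widebar{c^0}\ge c^\star=\min_i c_i^0$.

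The main obstacle is the final quantile-aggregation step---formally justifying that the $c^\star$-quantile of $\chi^2_{M(t-w-1)}$ is dominated by $M$ times the $c^\star$-quantile of $\chi^2_{t-w-1}$. The cleanest route I see is via the gamma representation $\chi^2_d=\mathrm{Gamma}(d/2,1/2)$: letting $Y_1,\dots,Y_M$ be i.i.d.\ copies of $\chi^2_{t-w-1}$, so that $\sum Y_i\stackrel{d}{=}\chi^2_{M(t-w-1)}$, the event $\{Y_i\le q^\star\text{ for all }i\}$ already implies $\{\sum Y_i\le M q^\star\}$; combining this pointwise containment with a standard sub-exponential Chernoff estimate to handle the independence slack yields the desired quantile inequality. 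If this concentration argument proves too delicate to state sharply, a safe fallback is to \emph{define} $\widebar{K}$ and $\widebar{c^0}$ through a weighted aggregation of local confidences in such a way that the $\min$-bound is built into the definition, reducing the theorem to a direct calculation.
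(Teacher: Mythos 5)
Your first two steps track the paper's setup, but your main route founders on the quantile-aggregation step, and the paper does not in fact take that route. Concretely, you need $F_{\chi^2_{M(t-w-1)}}(Mq^\star)\ge c^\star$, i.e.\ that the $c^\star$-quantile of $\chi^2_{M(t-w-1)}$ is at most $M$ times the $c^\star$-quantile of $\chi^2_{t-w-1}$. The containment you invoke points the wrong way: with $Y_1,\dots,Y_M$ i.i.d.\ $\chi^2_{t-w-1}$, the inclusion $\{Y_i\le q^\star\ \forall i\}\subseteq\{\sum_i Y_i\le Mq^\star\}$ only yields
\begin{equation*}
\mathbb{P}\Bigl\{\sum_i Y_i\le Mq^\star\Bigr\}\ \ge\ \bigl(F_{\chi^2_{t-w-1}}(q^\star)\bigr)^M,
\end{equation*}
and the right-hand side is \emph{smaller} than $F_{\chi^2_{t-w-1}}(q^\star)$, not larger. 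Worse, the inequality you actually need is false in general: at $q^\star=t-w-1$ (the mean), $\mathbb{P}\{\chi^2_{d}\le d\}$ strictly decreases toward $1/2$ as $d$ grows (about $0.683$ for $d=1$, $0.632$ for $d=2$, $0.594$ for $d=4$), so $F_{\chi^2_{Md}}(Mq^\star)<F_{\chi^2_{d}}(q^\star)$ there. The inequality does hold far enough into the right tail, but showing that the working regime ($c^\star\ge 0.95$) is always ``far enough,'' uniformly in $M$ and $t-w-1$, requires a genuine two-sided large-deviation argument that your sketch does not supply.

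The paper sidesteps all of this: it never identifies $\widebar{K}$ with a $\chi^2_{M(t-w-1)}$ variable for the purpose of defining confidence. Instead, Equation~\ref{equ-glconf} defines $\widebar{c^0}$ using the threshold $\sum_i\chi^2_{(t-w-1)}(c)$, i.e.\ the \emph{sum of the per-node quantile functions}. With that definition the theorem is immediate: since $\widebar{c}=\min_i c_i^0\le c_i^0$ for every $i$, Equation~\ref{equ-confn} gives $K_i^*>\chi^2_{t-w-1}(\widebar{c})$ for every $i$; summing over $i$ gives $\widebar{K^*}>\sum_i\chi^2_{(t-w-1)}(\widebar{c})$, so $\widebar{c}$ lies in the set whose supremum defines $\widebar{c^0}$. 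This is precisely your ``safe fallback'': the bound is built into the definition of the global confidence. So your fallback coincides with the paper's argument, but your primary route is not salvageable as sketched.
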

\begin{proof}
	Since the global distribution is obtained from the combination of all local distributions, we consider the test statistic $\widebar{K}$ of global distribution as the sum of the test statistics of these individual local distributions, i.e. $\widebar{K} = \sum_i K_i$. 
	According to Theorem~\ref{the-chisqt}, we have $\widebar{K} = \sum_i K_i \sim \sum_i\chi^2_{(t-w-1)}$. 
	Thus, the global confidence $\widebar{c^0}$ can be decided based on the value of global test statistic $\widebar{K^*}$ as shown in Equation~\ref{equ-glconf}.
	\begin{equation}\label{equ-glconf}
	\widebar{c^0} = \sup \{c | \widebar{K^*} > \sum_i\chi^2_{(t-w-1)}(c)\}
	\end{equation}
	
	We denote $\widebar{c} = \min_i c_i^0$ and will prove $\widebar{c} \geq \widebar{c^0}$.
	
	Based on the definition of $c_i=\sup\{c|K_i^* > \chi^2_{(t-w-1)}(c)\}$ in Equation~\ref{equ-confn}, we have the inequality for any $c\leq c_i^0$:
	\begin{equation}\label{ineq-proof}
	K_i^* > \chi^2_{t - w - 1}(c), (\forall c \leq c_i^0)
	\end{equation}
	
	Noticed that $\widebar{c}$ is the minimum value among $c_i^0$, which means $\widebar{c} \leq c_i^0$ for any $i$. Therefore, derived from the Inequality~\ref{ineq-proof}, the following formula is established for any $i$.
	\begin{equation}\label{ineq-proof-2}
	K_i^* > \chi^2_{t - w - 1}(\widebar{c})
	\end{equation}
	
	Next, we sum up the both sides of the Inequalities~\ref{ineq-proof-2} on different $i$ to get the following one:  
	\begin{equation}
	\widebar{K^*} = \sum_i K_i^* > \sum_i\chi^2_{(t-w-1)}(\widebar{c})
	\end{equation}
	
	Finally, based on the definition of $\widebar{c^0}$ in Equation~\ref{equ-glconf}, we can see that $\widebar{c} \geq \widebar{c^0}$, which completes the proof.
\end{proof}

\begin{figure}[h!t]\vspace{-1em}
	\linesnumbered
	\SetVline
	\begin{algorithm}[H]
		\caption{distribution-aware Sampling}\label{alg:distsample}
		\SetVline
		\Begin{
			 Calculate local sample size $lc$ for each node using Equation~\ref{equ-sps} \nllabel{alg:distsample:jud}\\
			For each node $i$, split its space into $\lfloor\sqrt{lc}\rfloor$ boxes as $\bigB$ \nllabel{alg:distsample:split} \\
			\For{each box $\bigB_j$}{
				Calculate probability $\mathbb{P}\{\bm{X} \in \bigB_j\}$ under $F_i(\bm{x})$ \nllabel{alg:distsample:poss}\\
				Get $lc \cdot {\mathbb{P}\{\bm{X} \in \bigB_j)\}} $ samples based on $c_i^0$ \nllabel{alg:distsample:samp}}
			Combine all samples from above boxes into $\bigS_i$ \nllabel{alg:distsample:comb} \\
			Collect all $\bigS_i$ from each node and combine them into $\bigS$ \nllabel{alg:distsample:red}\\
			\return sampled pivots $\bigS$
		}
	\end{algorithm}\vspace{-1em}
\end{figure}

Based on above analysis, we propose a \emph{distribution-aware sampling approach} as is demonstrated in Algorithm~\ref{alg:distsample}. 
For each local node $i$, we already obtain the distribution $F_i(\bm{x})$ and the confidence $c_i^0$ using the methods described in Section~\ref{sec-sampling}. 
Then we determine the number of samples on each node (line~\ref{alg:distsample:jud}).
Next we utilize the distribution information to do sampling on one single node: we try to split the space of random variable values into $\lfloor\sqrt{lc}\rfloor$ boxes (line~\ref{alg:distsample:split}).
Each box has its correspondingly probability $\mathbb{P}\{\bm{X} \in \bigB_j\}$ w.r.t the estimated distribution (line~\ref{alg:distsample:poss}). 
Then we can use the corresponding probability to determine the portion we should sample from each range, and conduct stratified sampling (line~\ref{alg:distsample:samp}). 
In this process, we will randomly reject some samples with possibility $1-c_i^0$ and perform resampling until we get $lc \cdot {\mathbb{P}\{\bm{X} \in \bigB_j\}}$ samples. The reason is that $c_i^0$ is the confidence of fitting and we could consider it as the acceptance degree of each sample.
Then on each node, we collect samples from all ranges and construct the local sample collection (line~\ref{alg:distsample:comb}). 
Similarly, we combine the samples from all nodes and get the final result (line~\ref{alg:distsample:red}).

\subsection{Generative Sampling}\label{subsec-gensample}

One potential bottleneck of above distribution-aware sampling is that it needs to fetch samples from each local node using a map job. 
Then the network communication cost will increase linearly with the total sample size $k$. 
In this section, we propose a \emph{generative sampling approach} to further reduce the overhead. 
The cornerstone is that the higher level goal of sampling is to obtain representative pivots so as to help split the whole dataset into partitions in the following phases. 
To reach this goal, the samples should reveal enough knowledge about the global distribution of dataset. 
Nevertheless, they do not have to come from the original dataset. 
Therefore, instead of directly utilizing the distribution of each local node to obtain samples, we combine them to simulate a global distribution of the whole dataset. 
Then \emph{unlike the previous approaches which fetch samples from each local node, we generate the sampled pivots according to this global distribution}. 
In this way, we only need to transmit some parameters instead of real sampled objects from each local node. 
And \emph{the network communication cost is independent from the total sample size.}

For the generative sampling approach, the first two stages are the same with that of the distribution-aware sampling: utilizing fit of goodness to get the distribution parameters and confidence level of each local node.
The next question is how to combine them into a global one. 
To reach this goal, we define three random variables for the given $M$ local nodes. $\bm{X}$: the random variables for local distribution; $E$: the discrete random variable from $i = \{1,...,M \}$, which denote the nodes to perform sampling; $C$: selector, the value $1$($0$) represents accept(reject). Then we can deduce the conditional distributions between them as follows:
\begin{equation}\label{eq-gec}
p(E=i|C=c) = 
\dfrac{N_i\cdot(c_i^0)^{-c}}{\sum_j{N_j\cdot(c_j^0)^{-c}}}
\end{equation}
\begin{equation}\label{eq-gxe}
p(\bm{X}|E=i) = \dfrac{\mathrm{d}F_i(\bm{X})}{\mathrm{d}\bm{X}} = f_i(\bm{X})
\end{equation}
\begin{equation}\label{eq-gce}
p(C=c|E=i) = (-1)^{c+1}\cdot c_i^0 + \mathrm{\mathbf{1}}_{\{c=0\}}
\end{equation}
Here, $\mathrm{\mathbf{1}}_{\{cond\}}$ is the 0/1 indicator function.
With above PDF of conditional distributions, we could determine the PDF of global distribution as $p(\bm{X}, E, C)$. 

Although the idea is simple, it is non-trivial to explicitly represent this global distribution.
As a result, it is difficult to obtain the joint distribution (global distribution) of those conditional distributions and perform sampling directly. 
To address this issue, we adopt the Gibbs Sampling approach~\cite{DBLP:journals/technometrics/Kim00} which could generate samples of the joint distribution from conditional ones.

\begin{figure}[!t]\vspace{-.5em}
	\linesnumbered \SetVline
	\begin{algorithm}[H]
		\caption{Generative Sampling}\label{alg:gensample}
		\SetVline
		\Begin{
			Collect all sampling distribution types and parameters; \nllabel{alg:gensample:col} \\
			Construct these conditional distributions in Euqtion~\ref{eq-gec} to~\ref{eq-gce} for each node; \nllabel{alg:gensample:cons} \\
			$\bigS$ = $Gibbs Sampling(k)$ \nllabel{alg:gensample:samp} \\
			\return sampled pivots $\bigS$
		}
	\end{algorithm}\vspace{-1em}
\end{figure}

The process of generative sampling method is shown in Algorithm~\ref{alg:gensample}. 
Similar to Algorithm~\ref{alg:distsample}, we first fit the distribution $F_i(\bm{x})$ and get the confidence $c_i^0$. 
Next we collect the distribution information and confidence from all local nodes (line~\ref{alg:gensample:col}) and then construct conditional distributions (line~\ref{alg:gensample:cons}). 
Finally, we use the Gibbs Sampling method in Algorithm~\ref{alg:gibbs_sample} approach to generate samples.

\begin{figure}[h!t]\vspace{-.5em}
	\linesnumbered \SetVline
	\begin{algorithm}[H]
		\caption{Our Gibbs Sampling ($k$) \label{alg:gibbs_sample}}
		\KwIn{$k$: The sampling Size}
		\KwOut{$\bigS$: The set of sampled pivots}
		\SetVline
		\Begin{
			Initialize $s_0 = \{ \bm{x_0}, e_0, c_0 \}$, $i = 1$ \nllabel{alg:gibbs_sample:init}\\
			Append $s_0$ to $\bigS$\\
			\While{i < k}{
				$s_i.e \sim p(E|C=s_{i-1}.c)$ \\
				$s_i.\bm{x} \sim p(X|E=s_i.e)$ \\
				$s_i.c \sim p(C|E=s_i.e)$ \\
				\If{$s_i.c == 1$}{
					Append $s_i.\bm{x}$ to $\bigS$ \nllabel{alg:gibbs_sample:acc}\\
					$i = i + 1$
				}\Else{
					$s_i = s_{i-1}$ \nllabel{alg:gibbs_sample:rej}
				}
			}	
			\return $\bigS$ \nllabel{alg:gibbs_sample:fin};
		}
	\end{algorithm}\vspace{-.5em}
\end{figure}

We show the Gibbs Sampling in our forms in Algorithm~\ref{alg:gibbs_sample}. 
We begin with some random initial value to get the first sample $s_0$ (line~\ref{alg:gibbs_sample:init}). 
Then we sample each component of next sample, e.g., $s_i.c$ from the distribution of that component conditioned on all other components sampled so far. Actually, those conditional distributions have been given before. 
After obtaining each component, we will get the next sample $s_i$. 
If $s_i.c$ equals $0$, we need to reject this sample (line~\ref{alg:gibbs_sample:rej}). 
Otherwise, we accept the sample (line~\ref{alg:gibbs_sample:acc}). 
This generated samples will be used as condition values at the next iteration. 
We repeat above procedures until we get enough samples (line~\ref{alg:gibbs_sample:fin}).

The above procedure can be finished on each local node in parallel after the distribution parameters and the confidence in each local node were broadcast to others. 
Meanwhile, the network transmission required in this process is far less than that of transmitting real samples. 
Thus, this approach can achieve better performance and scalability.

\noindent\textbf{Generative vs. Distribution-aware Sampling}
We further make a comparison between the two sampling approaches. 
Compared with distribution-aware sampling, the main advantage of the generative sampling method is that it does not need to transmit the concrete samples via the network. 
In this process, the only step requiring network transmission is that every local node broadcasts its distribution parameters and confidence.
The network communication cost of the two sampling approaches are analyzed as follows:
\begin{compactitem}
	\item The communication cost of the distribution-aware sampling is $O(k \cdot (M-1))$.  For each local node, we would sample around $\frac{k}{M}$ objects on average and send $\frac{k}{M}$ local samples to other $M-1$ local nodes. 
	Thus, the communication of each node is $O(\frac{k}{M}\cdot (M-1))$ and the total communication would be $O(\frac{k}{M}\cdot (M-1)\cdot M) = O(k \cdot (M-1))$.
	\item The communication cost of the generative sampling is $O(M\cdot(M-1))$.
	The reason is that for each local node it only needs to send the distribution parameters and types to other $M-1$ local nodes. 
	All samples are just generated on each node without any network communication.
\end{compactitem}
Since $M \ll k$, the cost of the generative sampling is far less than that of the distribution-aware sampling.

\subsection{Error Bound Analysis}\label{subsec-errorbd}

Finally, we make a theoretical analysis on the quality of generative sampling approach.
The goal is to show that unlike the simple random sampling which requires unbounded size of samples to improve the sampling quality,  \emph{our approach has a formal guarantee of the sampling quality w.r.t a given sample size}.

We first give the definition of error between the true data distribution and samples.
Generally speaking, it is difficult to describe the empirical distribution of $m$-dimensional dataset to quantify the error.
Fortunately, during the partition process in map and reduce phases, we only use one dimension in each step of partition.  
Therefore, we can use the CDF of marginal distribution $P(\bm{X})$ to describe the quality. 
It can be defined by the marginal distribution on $\bm{x}$ of $P(\bm{X}, E, C)$:

\begin{equation}
P(\bm{x}) = \int_{\bm{X} \leq \bm{x}} (\iint_{E, C} p(\bm{X}, E, C)\mathrm{d}E\mathrm{d}C )\mathrm{d}\bm{X}
\end{equation}

Similarly, the empirical distribution $\~{P}(\bm{x})$ is defined as:
 \begin{equation}
\~{P}(\bm{x}) = \dfrac{|\{\bm{X} \leq \bm{x}| \bm{X} \in S_{\bm{X}} \}|}{|S_{\bm{X}}|}
\end{equation}
where $S_{\bm{X}}$ is the set of samples.
 	
Specifically, we select the maximum bias between true marginal distribution and empirical marginal distribution on one dimension. 
Following this route, we define the \textbf{sampling error} in Definition~\ref{def-sperr}.

\begin{definition}\label{def-sperr}
	Given the CDF $P(\bm{X})$ of a distribution and the set of samples $S_{\bm{X}}$, the error of sampling can be regarded as:
\begin{equation}
D_k( S_{\bm{X}}, P(\bm{x})) = D_k( \~{P}(\bm{x}), P(\bm{x})) = \max_{d = 1}^m \sup_{\bm{x} \in \mathbb{R}^m}{|\~{P}_{\bm{x}_d}(\bm{x}) - {P}_{\bm{x}_d}(\bm{x})|}
\end{equation}
where $P_{\bm{x}_d}(\bm{x})$($\~{P}_{\bm{x}_d}(\bm{x})$) is the \textbf{marginal distribution} of $P(\bm{x})$($\~{P}(\bm{x})$) on the dimension $d$.
\end{definition} 
With such a definition of sampling error, we can formally obtain a theoretical error bound of our approach w.r.t sampling size $k$ for the dataset with $m$-dimensional objects according to the \emph{Asymptotic Theory for Brownian Motion}~\cite{pRES92a}.
The details are shown in Theorem~\ref{tho-errbd}.
\begin{theorem}\label{tho-errbd}
	Given the sample size $k$, the probability that sampling error exceeds a very small constant $\epsilon$ is less than $2m\cdot e^{-2k\epsilon^2}$, which is formally described as: 
	\begin{equation}\label{eq-errbound}
	\mathbb{P}\{D_k( S_{\bm{X}}, P(\bm{x})) \geq \epsilon\} < 1 - (1 - 2\cdot e^{-2k\epsilon^2})^m \approx 2m\cdot e^{-2k\epsilon^2}
	\end{equation}
\end{theorem}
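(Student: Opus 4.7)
The plan is to reduce the multivariate error bound to a one-dimensional empirical-process problem on each marginal and then aggregate across the $m$ dimensions. The workhorse will be the Dvoretzky--Kiefer--Wolfowitz (DKW) inequality, which states that for $k$ i.i.d.\ observations drawn from a univariate distribution with CDF $F$, the empirical CDF $F_k$ satisfies
\begin{equation*}
\mathbb{P}\{\sup_{x\in\mathbb{R}}|F_k(x)-F(x)|\geq\epsilon\}\;\leq\; 2e^{-2k\epsilon^2}.
\end{equation*}
This is precisely the finite-sample concentration statement that falls out of the Donsker/Brownian-bridge asymptotics the excerpt cites, so it is the natural tool.

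First I would argue that, for each fixed dimension $d$, the projections of the $k$ generated samples $\bigS$ onto coordinate $d$ can be treated as i.i.d.\ draws from the marginal $P_{\bm{x}_d}$. Because the Gibbs iteration in Algorithm~\ref{alg:gibbs_sample} has the joint law $p(\bm{X},E,C)$ as its stationary distribution, once mixing has occurred the accepted samples follow the global distribution $P$, and hence their $d$-th coordinates follow $P_{\bm{x}_d}$. Applying DKW on dimension $d$ gives
\begin{equation*}
\mathbb{P}\Bigl\{\sup_{\bm{x}\in\mathbb{R}^m}\bigl|\widetilde{P}_{\bm{x}_d}(\bm{x})-P_{\bm{x}_d}(\bm{x})\bigr|\geq\epsilon\Bigr\}\;\leq\; 2e^{-2k\epsilon^2}.
\end{equation*}

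Second, I would assemble the overall bound on $D_k$. By the definition of $D_k$ in Definition~\ref{def-sperr}, the event $\{D_k\geq\epsilon\}$ is exactly the union over $d=1,\dots,m$ of the one-dimensional exceedance events. Treating the $m$ coordinate-wise events as independent (as one does after separating along marginal coordinates) gives $\mathbb{P}\{D_k<\epsilon\}\geq(1-2e^{-2k\epsilon^2})^m$, whence
\begin{equation*}
\mathbb{P}\{D_k\geq\epsilon\}\;\leq\; 1-(1-2e^{-2k\epsilon^2})^m,
\end{equation*}
matching the exact middle expression of \eqref{eq-errbound}. The final $\approx 2m\,e^{-2k\epsilon^2}$ is then a first-order Bernoulli/Taylor expansion: $(1-\alpha)^m = 1 - m\alpha + O(m^2\alpha^2)$ with $\alpha = 2e^{-2k\epsilon^2}$, which is tight whenever $k$ is moderately large relative to $(\log m)/\epsilon^2$.

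The hard part is justifying the hypotheses of DKW under our generative machinery. Two subtleties arise. First, Gibbs sampling produces a correlated Markov chain, not literal i.i.d.\ samples; a fully rigorous derivation would either invoke a mixing-time / burn-in argument or appeal to a DKW-type inequality for uniformly ergodic chains, both of which preserve the $e^{-2k\epsilon^2}$ rate up to constants. Second, the coordinates of $\bm{X}$ are not in general independent under the estimated joint law, so the product step leading to $(1-2e^{-2k\epsilon^2})^m$ is an independence heuristic rather than a theorem. A clean fallback that avoids both issues and still delivers the asymptotic form on the right-hand side of \eqref{eq-errbound} is a straightforward union bound:
\begin{equation*}
\mathbb{P}\{D_k\geq\epsilon\}\;\leq\;\sum_{d=1}^{m}\mathbb{P}\Bigl\{\sup_{\bm{x}}\bigl|\widetilde{P}_{\bm{x}_d}-P_{\bm{x}_d}\bigr|\geq\epsilon\Bigr\}\;\leq\; 2m\,e^{-2k\epsilon^2},
\end{equation*}
which recovers the approximation in \eqref{eq-errbound} without the marginal-independence assumption and is what I would lean on if the independence heuristic proves uncomfortable.
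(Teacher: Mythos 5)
Your proposal follows essentially the same route as the paper's own proof: a per-dimension sup-norm concentration bound of the form $2e^{-2k\epsilon^2}$ (the paper invokes the Brownian-motion asymptotics of~\cite{pRES92a} where you invoke DKW, which is the finite-sample version of the same statement), followed by treating the $m$ coordinate-wise events as independent to obtain $(1-2e^{-2k\epsilon^2})^m$, and a first-order Bernoulli expansion for the final approximation. You are in fact more careful than the paper, which simply asserts that the Gibbs samples are i.i.d.\ and that the errors across dimensions are independent; your closing union-bound fallback removes the marginal-independence heuristic entirely and is a strictly cleaner way to reach the $2m\,e^{-2k\epsilon^2}$ form.
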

\begin{proof}
	First of all, the samples $\bm{X} \in S_{\bm{X}}$ from the generative sampling approach can be considered as independent identically distributed due to the property of Gibbs Sampling.
	
	As ${P}_{\bm{x}_d}(\bm{x})$ and $\~{P}_{\bm{x}_d}(\bm{x})$ is as the true distribution and empirical distribution on one dimension respectively, we obtain the following inequalities for any dimension $d$ based on Asymptotic Theory for Brownian Motion~\cite{pRES92a}:  
	\begin{equation}
	\mathbb{P}\{\sup_{\bm{x} \in \mathbb{R}^m}{|\~{P}_{\bm{x}_d}(\bm{x}) - {P}_{\bm{x}_d}(\bm{x})|} < \epsilon\} > 1 - 2\cdot e^{-2k\epsilon^2}
	\end{equation}
	
	In addition, since $D_k( S_{\bm{x}}, P(\bm{x}))$ is the maximum values of \\$\sup_{\bm{x} \in \mathbb{R}^m}{|\~{P}_{\bm{x}_d}(\bm{x}) - {P}_{\bm{x}_d}(\bm{x})|}$ among all dimensions, we can assert that the  event $\{D_k( S_{\bm{x}}, P(\bm{x})) < \epsilon\}$ is the same as  the event $\{(\forall d) \sup_{\bm{x} \in \mathbb{R}^m}{|\~{P}_{\bm{x}_d}(\bm{x}) - {P}_{\bm{x}_d}(\bm{x})|} < \epsilon \}$. 
	Since the errors among different dimension are independent, we get the following expression based on the Chain Rule of probability:
	
	\begin{equation}
	\begin{split}
	\mathbb{P}\{D_k( S_{\bm{x}}, P(\bm{x})) < \epsilon\} &= \prod_{d=1}^m \mathbb{P}\{\sup_{\bm{x} \in \mathbb{R}^m}{|\~{P}_{\bm{x}_d}(\bm{x}) - {P}_{\bm{x}_d}(\bm{x})|} < \epsilon\} \\
	&> (1 - 2\cdot e^{-2k\epsilon^2})^m
	\end{split}
	\end{equation}
	
	Also, due to the fact that sample size $k$ is always larger than the dimension $m$, the value of $2\cdot e^{-2k\epsilon^2}$ is rather small compared with $m$ in the binomial expression, we can derive the following formula: 
	
	\begin{equation}
	(1 - 2\cdot e^{-2k\epsilon^2})^m  > 1 - 2m\cdot e^{-2k\epsilon^2}
	\end{equation}
	
	Finally, since we have:
	\begin{equation*}
	\mathbb{P}\{D_k( S_{\bm{X}}, P(\bm{x})) \geq \epsilon\} = 1 - \mathbb{P}\{D_k( S_{\bm{X}}, P(\bm{x})) < \epsilon\}
	\end{equation*}
	we get the theorem, which completes the proof.
\end{proof}

With the guarantee provided by Theorem~\ref{tho-errbd}, the hyper-parameter of sample size $k$ can be determined according to the level of errors $\epsilon$ that can be tolerated in the sampling phase.
Meanwhile, previous studies just decide it empirically without any guideline.
As a result, they have no choice but to increase the sample size in order to improve the quality of sampled pivots.
Such a formal guarantee servers as a concrete evidence why our method would be definitely more superior than the random sampling approach adopted by previous studies.

\section{Data Partition Scheme}\label{sec-partition}

In this section, we propose an effective partition scheme based on the sampled pivots in map and reduce phases.
We first theoretically analyze the methodology of the partition to provide a guideline for proposed schemes (Section~\ref{subsec-simplepart}). 
Then we propose two partition strategies: an iterative one (Section~\ref{subsec-basicmap}) and a learning-based one (Section~\ref{subsec-advmap}) to reduce the overhead and make even partitions. 

\subsection{Preliminaries of Partition}\label{subsec-simplepart}

In the map phase, we need to split the dataset into partitions according to the sampled pivots and shuffle the intermediate results to the reducers. 
To reach this goal, we first split the overall space into $p$ areas $\bigP = \{P_1, P_2,\cdots, P_p \}$ according to the $k$ pivots ($p \ll k$).
Based on each area $P_i \in \bigP$,  we construct the corresponding \emph{partition} $\outeri{i}$, which then will be shuffled to a reducer. 
As each $\outeri{i}$ contains all objects on a particular reducer, we call it a \outerpart of the overall dataset $\bigR$.
The task of a partition strategy is to generate a series of \outerpart, i.e. $\bigW = \{\bigW_1,\bigW_2,\cdots, \bigW_{p}\}$. 
While many ways can be explored to evaluate the quality of partition, a particular reasonable one is to use the \emph{total number of verifications among all reducers}.
Thus, we quantify the cost of partition strategies in this way.
We then make a theoretic analysis on how to reduce such cost.

For a particular object $o_i$ and partition $\outeri{h}$, we use a matrix $\NPParas \in \mathbb{R}^{N_\bigR \times p} $ to denote whether $o_i$ belongs to $\outeri{h}$.
\begin{equation}
	\NPPara{i}{h} = \mathrm{\mathbf{1}}_{\{o_i \in \bigW_h\}}
\end{equation}
Then $\outeri{h}$ can be defined as $\outeri{h} = \{ o_i | \NPPara{i}{h}  = 1 \}$.\\

\noindent\textbf{Cost model}. 
With the help of $\NPParas$,  we then give the cost of partition strategies with the following function: 
\begin{eqnarray}\label{equ-cost}
\NPObjF(\NPParas) & = & \mathbb{1}^{\mathrm{T}} \cdot \NPParas \cdot \NPParas^{\mathrm{T}} \cdot \mathbb{1}\nonumber \\ 
& = &\sum_{i,j}\sum_{h}{\NPPara{i}{h} \cdot \NPPara{j}{h}}\nonumber \\
& = &\sum_h\sum_{i,j}{\NPPara{i}{h} \cdot \NPPara{j}{h}} 
\end{eqnarray}

Here $\NPPara{i}{h} \cdot \NPPara{j}{h}$ would be 1 iff. both $o_i$ and $o_j$ locate in the same partition $\bigW_h$.
According to the definition of \outerpart, one object could appear in different partitions, and there would be a heavier cost (more verifications) if the same pair of objects appear on several different partitions.
Thus \emph{the objective for devising partition strategy is to minimize the value of $\NPObjF(\NPParas)$}.

\noindent\textbf{Correctness of partitioning}. Before talking about how to minimize the cost, we first need to guarantee the correctness of the partitioning.
A partition strategy is correct iff the answers aggregated from all the reducers are identical to the correct join results.
To guarantee this, we need to add some constraints on the basis of Equation~\ref{equ-cost}.
The essence of correctness is that any pair of objects $o_i, o_j \in \bigR$ s.t. $\bigD(o_i, o_j) \leq \delta$ must be verified by at least one reducer.

To formally express this relationship, we make the following notations.
Firstly, we perform matrix multiplication and obtain $\NPParas' = \NPParas \cdot \NPParas^{\mathrm{T}}$, where $\NPParas'(i,j)$ means the number of verifications for pair $\langle o_i, o_j\rangle$ among all reducers.
Secondly, we define the matrix $\NPBounds \in \mathbb{R}^{N_\bigR \times N_\bigR}$ to denote whether the distance between two objects is larger than $\delta$, i.e. whether a pair belongs to the correct join result.
\begin{equation}
\NPBound{i}{j} = \mathrm{\mathbf{1}}_{\{\bigD(o_i, o_j)\leq\delta\}}
\end{equation}

Based on the above analysis, we can formally define the constraint for correctness as: \emph{Given the threshold $\delta$, the set of partitions $\bigW$ should satisfy $\NPParas \cdot \NPParas^{\mathrm{T}} \geq \NPBounds$}. \\

\noindent\textbf{Hardness of minimizing the partition cost}.
Then the problem becomes minimizing the value of $\NPObjF(\NPParas)$ under the constraint of correctness. 
Unfortunately, we find this problem is NP-hard, which is formally shown in Theorem~\ref{the-costnp}.
\begin{theorem}\label{the-costnp}
	The problem of minimizing $\NPObjF(\NPParas)$ with the constraint of $\NPParas \cdot \NPParas^{\mathrm{T}} \geq \NPBounds$ is NP-hard.
\end{theorem}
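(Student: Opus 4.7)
The plan is to reduce from a known NP-hard covering problem. A natural candidate is Minimum Edge Clique Cover (or its sum-of-squared-sizes variant), both of which are NP-hard. First I would re-read the constraint $\NPParas \cdot \NPParas^{\mathrm{T}} \geq \NPBounds$ as saying that every pair $(o_i,o_j)$ with $\bigD(o_i,o_j)\le\delta$ must lie in at least one common subset $\outeri{h}$, and the objective $\NPObjF(\NPParas)=\sum_h |\outeri{h}|^2$ as the sum of squared sizes of those subsets. Consequently a partition instance is specified entirely by the ``similarity graph'' $G_B=(V,E_B)$ on the object set, whose edges record which pairs must be co-packed, and the problem becomes: cover every vertex and every edge of $G_B$ by a family of (possibly overlapping) subsets, minimizing $\sum_h |\outeri{h}|^2$.

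Next, given a graph $G=(V,E)$ as input to the source NP-hard problem, I would embed $V$ as $|V|$ objects equipped with $\bigD(u,v)=1$ if $(u,v)\in E$ and $\bigD(u,v)=2$ otherwise. This is a bona fide metric: the only nontrivial axiom, triangle inequality, holds because any two of the distances $\{1,2\}$ sum to at least $2$ and hence upper bound the third. Taking $\delta=1$ makes $\NPBounds$ equal to $I+A_G$, where $A_G$ is the adjacency matrix of $G$. A feasible family $\{\outeri{h}\}$ is then exactly a family of subsets of $V$ that covers every edge of $G$ together with every vertex.

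Then I would argue that the optimum of the partition instance equals the optimum of the corresponding sum-of-squared-sizes edge clique cover on $G$. One direction is immediate: any clique cover of $G$, augmented by singletons for any uncovered vertex, is feasible and has matching cost. The reverse direction is the main obstacle, because $\sum_h |\outeri{h}|^2$ is super-additive: it can be strictly cheaper to include non-adjacent vertices together in one large subset than to split them into two smaller ones. A naive ``remove a non-edge endpoint from $\outeri{h}$'' local move may therefore increase cost. To overcome this, I plan to use one of three devices: (i) reduce from a source problem whose cost already has the sum-of-squared-sizes form, so the equivalence is tight; (ii) attach a high-degree gadget cluster to each vertex whose contribution to $\sum |\outeri{h}|^2$ dominates and forces any optimal $\outeri{h}$ to respect clique structure in the original $G$; or (iii) reduce from a numerical problem such as 3-Partition in which the super-additivity of $x\mapsto x^2$ is exactly what encodes the load-balancing constraint.

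Finally, membership in NP of the decision version ``does there exist $\NPParas$ with $\NPObjF(\NPParas)\le T$?'' is immediate because both the constraint $\NPParas\cdot\NPParas^{\mathrm{T}}\geq \NPBounds$ and the objective $\NPObjF(\NPParas)$ are polynomial-time checkable. Combined with the polynomial-time reduction sketched above, this yields NP-hardness of the minimization problem, completing the argument.
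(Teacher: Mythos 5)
Your proposal and the paper's proof go in opposite directions. The paper rewrites the minimization as a quadratically constrained quadratic program via Kronecker products and vectorization and then appeals to the NP-hardness of QCQP; you instead try to reduce a known NP-hard covering problem \emph{to} the partition problem, embedding a graph as a metric with $\bigD(u,v)=1$ on edges, $2$ on non-edges, and $\delta=1$, so that $\NPBounds = I + A_G$. Your direction is the methodologically sound one for a hardness claim---exhibiting the problem as a special case of QCQP, as the paper does, bounds its complexity from above rather than below---so the reduction-from-a-hard-problem strategy is the right instinct, and your metric gadget and the reading of the objective as $\sum_h |\outeri{h}|^2$ are both correct.

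However, as written the proposal is not a proof: you yourself isolate the crux---a feasible $\outeri{h}$ need not induce a clique of the similarity graph, and because $x\mapsto x^2$ is super-additive it can be strictly cheaper to lump non-adjacent vertices into one large subset than to separate them, so the optimum of your constructed instance need not equal the optimum of the edge-clique-cover instance---and then you defer its resolution to ``one of three devices'' without carrying any of them out. Each device still contains the hard work: (i) requires naming a source problem with sum-of-squared-sizes cost over arbitrary (non-clique) cover families that is already known to be NP-hard, which is essentially the statement in question; (ii) requires specifying gadget sizes and proving the quadratic penalty forces clique structure in any optimal solution; (iii) requires showing how a 3-Partition instance yields a matrix $\NPBounds$ realizable by an actual metric and threshold, and how the fixed number $p$ of columns of $\NPParas$ is controlled in the reduction. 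Until one of these is executed and the correspondence of optima (or of decision thresholds) is established, the argument has a genuine gap precisely at the step you flag as the main obstacle.
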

\begin{proof}
	We will try to rewrite the formula into the form of\\ \textbf{Q}uadratically \textbf{C}onstrained \textbf{Q}uadratic \textbf{P}rogram (QCQP).
	To show this process, we first provide some helper symbols:
	\begin{definition}[KRONECKER PRODUCT]
		Let matrix $\Matrix{X} \in \mathbb{R}^{s \times t}$, matrix $\Matrix{Y} \in \mathbb{R}^{p \times q}$. Then the Kronecker product of $\Matrix{X}$ and $\Matrix{Y}$ is defined as the matrix:
		\begin{equation*}
		\Matrix{X} \otimes \Matrix{Y} = \left[
		\begin{matrix}
		\Matrix{X}(1, 1) \Matrix{Y} & \cdots & \Matrix{X}(1, t) \Matrix{Y} \\
		\vdots  & \ddots & \vdots \\
		\Matrix{X}(s, 1) \Matrix{Y} & \cdots & \Matrix{X}(s, t) \Matrix{Y} \\
		\end{matrix}
		\right]
		\end{equation*}
		the Kronecker product of two matrices $\Matrix{X}$ and $\Matrix{Y}$ is a $sp \times tq$ matrix.
	\end{definition}	
	
	\begin{definition}
		Let matrix $\Matrix{Z}(\cdot, i) \in \mathbb{R}^s $ denotes the columns of $\Matrix{Z} \in \mathbb{R}^{s\times t}$ so that $\Matrix{Z} = [\Matrix{Z}(\cdot, 1), ..., \Matrix{Z}(\cdot, t)]$. Then $vec(\Matrix{Z})$ is defined to be the $st$-vector formed by stacking the columns of $\Matrix{Z}$ on top of one another, i.e., 
		\begin{equation*}
		vec(\Matrix{Z}) = \left[
		\begin{matrix}
		\Matrix{Z}(\cdot, 1) \\
		\vdots \\
		\Matrix{Z}(\cdot, t)
		\end{matrix}
		\right] \in \mathbb{R}^{st}
		\end{equation*}
	\end{definition}
	Then we could rewrite our objective function into the quadratic forms using above operators.
	\begin{eqnarray}
	\NPObjF(\NPParas)&=&\mathbb{1}_{N_\bigR}^{\mathrm{T}} \cdot \NPParas_{N_\bigR \times p} \cdot \NPParas_{N_\bigR \times p}^{\mathrm{T}} \cdot \mathbb{1}_{p} \\ 
	&=&vec(\NPParas^{\mathrm{T}})^{\mathrm{T}} \cdot (\mathbb{1}_{N_\bigR \times N_\bigR} \otimes \mathbb{E}_{p \times p}) \cdot vec(\NPParas^{\mathrm{T}})
	\end{eqnarray}
	which can be considered as the quadratic objective functions.
	
	Next we define a series of Matrix $\prescript{i,j}{}{\mathbb{K}}$ where $\prescript{i,j}{}{\mathbb{K}}(i, j) = 1$ and other elements in $\prescript{i,j}{}{\mathbb{K}}$ are $0$. 
	Then define Matrix $\prescript{i,j}{}{\mathbb{P}} = \prescript{i,j}{}{\mathbb{K}}_{N_\bigR \times N_\bigR} \otimes \mathbb{E}_{p \times p}$.
	Therefore, we can also rewrite the constraint into quadratic forms as following:
	\begin{equation*}
	vec(\NPParas^{\mathrm{T}})^{\mathrm{T}} \cdot \prescript{i,j}{}{\mathbb{P}} \cdot vec(\NPParas^{\mathrm{T}}) - \NPBound{i}{j} \geq 0 , \forall~i,~j \in \mathbb{N}^{N_\bigR}
	\end{equation*}
	Besides, since each element in $vec(\NPParas)$ should be 1 or 0, the constraint can be written as:
	\begin{equation*}
	vec(\NPParas^{\mathrm{T}})^{\mathrm{T}} \cdot \prescript{i,j}{}{\mathbb{K}} \cdot vec(\NPParas^{\mathrm{T}}) - vec{\prescript{i,j}{}{\mathbb{K}}}^{\mathrm{T}} \cdot vec(\NPParas^{\mathrm{T}}) = 0
	\end{equation*}
	Therefore, we reduce our problem to \textbf{QCQP}, which has been proven to be NP-hard, which can be verified by easily reducing to the well-known 0-1 integer programming problem.
\end{proof}	

Therefore, instead of finding an optimal partition scheme, we aim to find some heuristic approaches to get a good partition.
The key insight comes from the observation that one \outerpart consists of two parts: the first part is the set of objects that is uniquely owned by one \outerpart while the second one has overlap with other partitions.
Here we denote the first part as \innerpart, which could be utilized to find a way to reduce the cost.

The \innerpart can be identified with a matrix $\NPParasAdv \in \mathbb{R}^{N_\bigR \times p}, \NPParaAdv{i}{h} \in {0, 1}$. 
Similar with $\NPPara{i}{h}$, the value of $\NPParaAdv{i}{h}$ denotes whether object $o_i$ belongs to $\outeri{h}$. 
Meanwhile, we put more constraints on $\NPParasAdv$: each row of it can only have one element with value 1, and only if the value of $\NPPara{i}{h}$ is 1, we can have $\NPParaAdv{i}{h} = 1$. 
Then we can formally define \innerpart in Definition~\ref{def-inner}.
 
\begin{definition}\label{def-inner}
	Given a matrix $\NPParasAdv$, we define the set of
	\begin{equation}
	\inneri{h} = \{ o_i | \NPParaAdv{i}{h} = 1 \}
	\end{equation} 
	as \innerpart.
\end{definition}

\begin{lemma}\label{def-inout}
	Given a dataset $\bigR$, the \innerpart $\mathcal{V}$ and \outerpart $\mathcal{W}$ satisfying: \\
	
	(1) ${\cup}_{\inneri{i} \in\mathcal{V} } \inneri{i}= \bigR$; $i\neq j \rightarrow \inneri{i} \cap \inneri{j} = \emptyset$ ; 
	
	(2) $\forall \outeri{i} \in \mathcal{W}, \outeri{i} \subseteq \bigR$;
	
	(3) ${\biguplus}_{\inneri{i} \in\mathcal{V}, \outeri{i} \in\mathcal{W}} Reduce(\inneri{i} \cup \outeri{i}) = \bigJ$, where $\bigJ$ is the set of all correct join results.
\end{lemma}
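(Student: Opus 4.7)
The plan is to verify each of the three conditions in turn, drawing on Definition~\ref{def-inner} and the correctness constraint $\NPParas \cdot \NPParas^{\mathrm{T}} \geq \NPBounds$ that was established earlier in Section~\ref{subsec-simplepart}. The overall strategy is: (1) and (2) follow essentially from the structural constraints placed on $\NPParasAdv$ and $\NPParas$ respectively, while (3) is the substantive step and relies on the inclusion $\inneri{h} \subseteq \outeri{h}$ together with the correctness constraint.

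For condition (1), I would exploit the assumption that each row of $\NPParasAdv$ contains exactly one entry equal to $1$. This immediately implies that every object $o_i \in \bigR$ belongs to exactly one \innerpart, which yields both $\bigcup_{h} \inneri{h} = \bigR$ and $\inneri{i} \cap \inneri{j} = \emptyset$ for $i\neq j$. For condition (2), the containment $\outeri{h} \subseteq \bigR$ is immediate from $\outeri{h} = \{o_i \mid \NPPara{i}{h}=1\}$ with $o_i \in \bigR$.

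The substantive part is condition (3), which I would prove by double inclusion. For the easier direction, any pair emitted by $\mathrm{Reduce}(\inneri{h}\cup\outeri{h})$ has passed the verification step (distance $\leq \delta$) and therefore lies in $\bigJ$; summing over $h$ gives $\biguplus_h \mathrm{Reduce}(\inneri{h}\cup\outeri{h}) \subseteq \bigJ$. For the reverse inclusion, let $\langle o_i, o_j\rangle \in \bigJ$, so $\NPBound{i}{j}=1$. The correctness constraint $\NPParas\NPParas^{\mathrm{T}} \geq \NPBounds$ guarantees the existence of some $h$ with $\NPPara{i}{h} = \NPPara{j}{h}=1$, i.e.\ both $o_i,o_j \in \outeri{h} \subseteq \inneri{h}\cup\outeri{h}$. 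Hence reducer $h$ enumerates and verifies this pair, producing it in the output.

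The main subtlety I anticipate is making the $\biguplus$ in condition (3) precise: the statement should be read set-theoretically (modulo duplicate emissions across reducers), and I would argue that duplication does not affect correctness of the final join set. A secondary, minor point is justifying the assumption that every row of $\NPParasAdv$ contains exactly one $1$ rather than at most one; this should be formalized as a completeness requirement on the partition matrix (otherwise some object would belong to no \innerpart and condition~(1) would fail), and I would state it explicitly at the start of the proof so that the rest of the argument goes through cleanly.
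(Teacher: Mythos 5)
The paper never actually proves this statement: it is stated bare (note the label \texttt{def-inout}), and functions as a specification of what the partition scheme must satisfy rather than as a proved claim. The only part the paper does prove is the substantive kernel of your condition (3), and it proves it later, for the concrete construction, as Lemma~\ref{lem-inout}: for every pair with $\bigD(o_x,o_y)<\delta$ there is a partition $k$ with $o_x\in\inneri{k}$ \emph{and} $o_y\in\outeri{k}$, established via the disjoint covering property of the kernel partitions, the $\delta$-expansion defining $\outeri{k}$, and the triangle inequality. Your treatment of conditions (1) and (2) is fine, and your observation that ``each row can only have one element with value 1'' must be read as ``exactly one'' for (1) to hold is a legitimate and worthwhile clarification.

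The gap is in your argument for condition (3). You derive from $\NPParas\cdot\NPParas^{\mathrm{T}}\geq\NPBounds$ that some $h$ has both $o_i,o_j\in\outeri{h}$, and conclude that reducer $h$ verifies the pair. But the reducer semantics implied by the rewritten cost model (Equation~\ref{equ-newcost}), which charges $|\inneri{h}|\cdot|\outeri{h}|$ comparisons per reducer rather than $|\outeri{h}|^2$, is that reducer $h$ only verifies pairs with at least one element in its \emph{kernel} partition $\inneri{h}$. Under that semantics, both objects lying in $\outeri{h}$ is not enough: you need one of them to lie in $\inneri{h}$, and the constraint $\NPParas\cdot\NPParas^{\mathrm{T}}\geq\NPBounds$ says nothing about $\NPParasAdv$, so it cannot deliver this. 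The missing step is exactly the content of Lemma~\ref{lem-inout}: take the unique $k$ with $o_x\in\inneri{k}$ (which exists by your strengthened condition (1)), and then show $o_y\in\outeri{k}$ using the fact that $\outeri{k}$ is obtained from $\Bi{k}$ by expanding every coordinate by $\delta$ and that $|\bigD(a_i,o_x)-\bigD(a_i,o_y)|\leq\bigD(o_x,o_y)<\delta$ by the triangle inequality. Your handling of the $\biguplus$ (reading it modulo duplicate emissions) is acceptable, but note that even with the kernel-based semantics a pair can be emitted by two distinct reducers, so exact disjointness would require an additional tie-breaking rule that neither you nor the paper supplies.
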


\begin{figure}[h]\vspace{-.5em}
	\centering
	\includegraphics [scale=0.20]{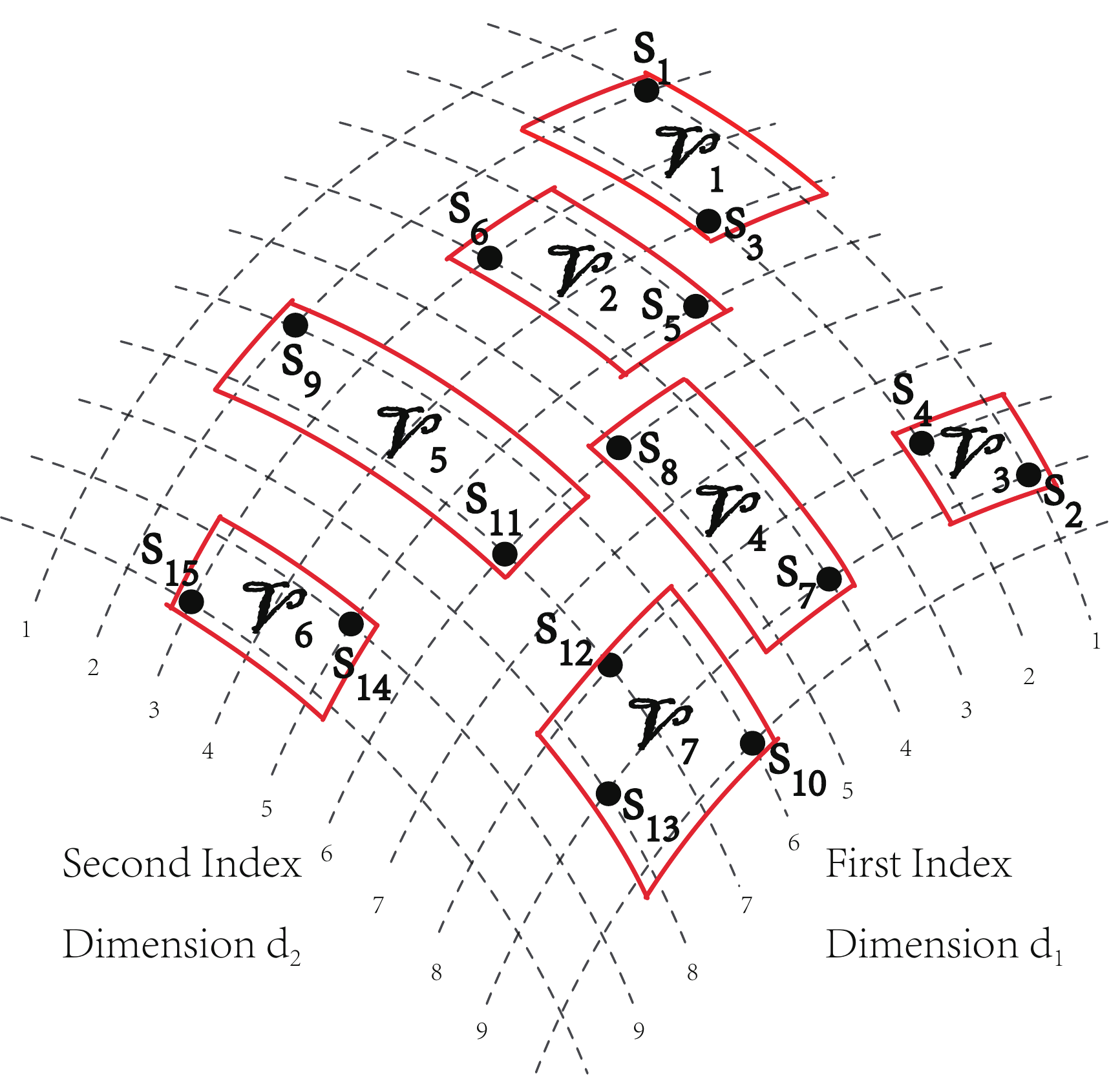}\vspace{-.25em} \hspace{.75em}
	\includegraphics [scale=0.20]{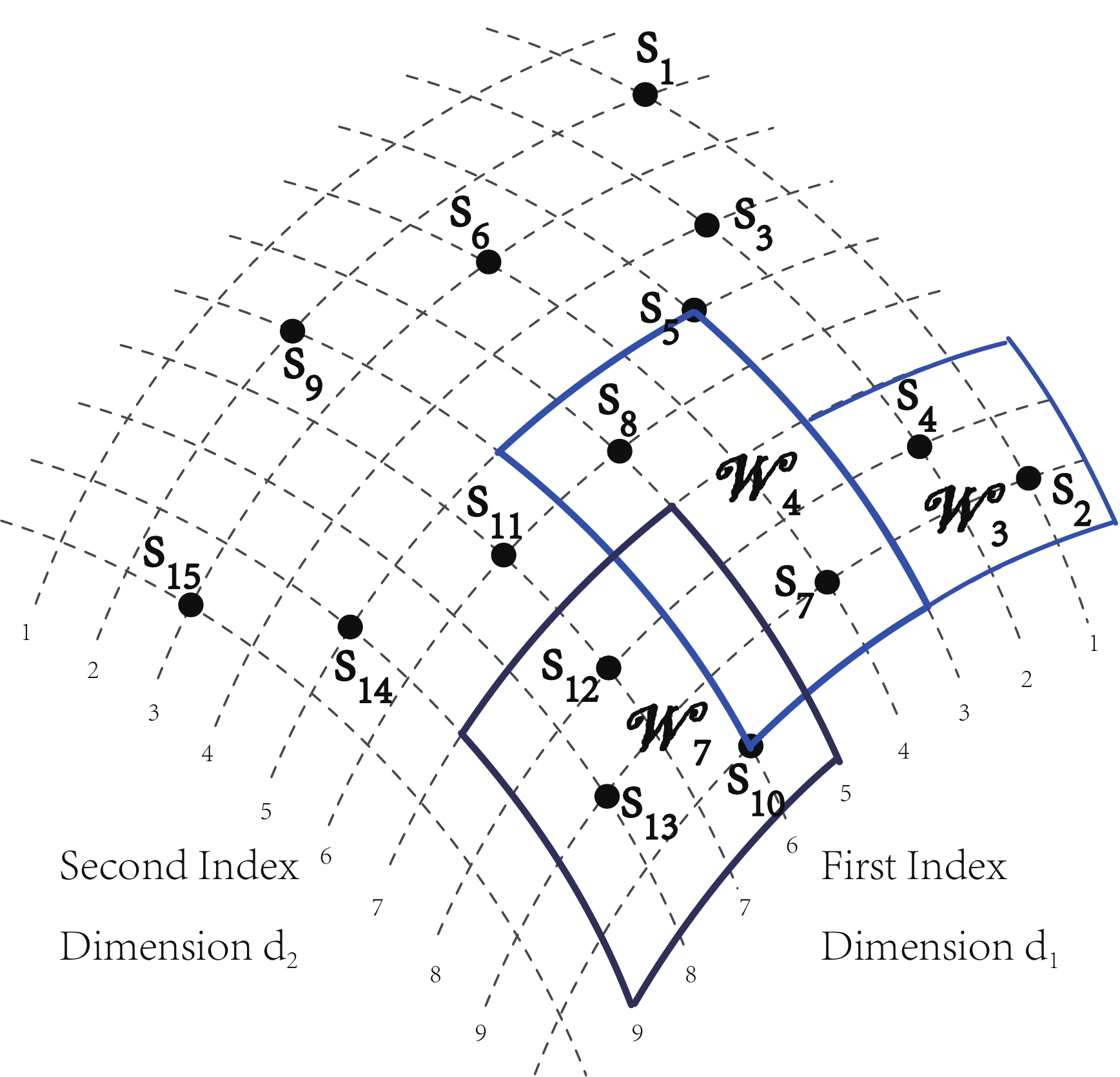}\vspace{-.25em} \hspace{-.75em}
	\caption{The \innerpart and \outerpart}
	\label{fig-inner-outer}\vspace{-1em}
\end{figure}

 We show a running example in Figure~\ref{fig-inner-outer}. All red hyper-rectangles $\inneri{i}$ are in Kernel Partition, and no two red hyper-rectangles have overlaps. Besides, the union of them contains all points shown in the figure. In order to get \outerpart, we expand $\inneri{i}$ with one threshold to get a large hyper-rectangle $\outeri{i}$ in right subfigure denoted by those blue rectangles (We ignored some blue rectangles $\outeri{i}$ in the figure~\ref{fig-inner-outer} for the ease to display).\\

\noindent\textbf{Rewritten cost model}. With the definitions of \innerpart and \outerpart, we can rewrite the cost model in Equation~\ref{equ-cost} as Equation~\ref{equ-newcost}.
\begin{equation}\label{equ-newcost}
\begin{array}{lll}
\NPObjF(\NPParas) & = & \underbrace{\sum_h |\inneri{h}| \cdot |\inneri{h}|}_{inner\ verification\ cost} + \underbrace{\sum_h |\inneri{h}| \cdot (|\outeri{h}| - |\inneri{h}|)}_{outer\ verification\ cost}
\end{array}
\end{equation}
We denote the two parts of Equation~\ref{equ-newcost} as \emph{inner verification cost} and \emph{outer verification cost}, respectively. In the following, we propose two partition strategies: an iterative method (Section~\ref{subsec-basicmap}) to reduce the \emph{inner verification cost} and a learning-based method (Section~\ref{subsec-advmap}) to reduce the \emph{outer verification cost}.

\subsection{Iterative Partition}\label{subsec-basicmap}

In this section, we aim at reducing the inner verification cost, i.e. $\sum_h |\inneri{h}| \cdot |\inneri{h}|$. 
According to the fundamental inequality in Equation~\ref{equ-fdieq}, the cost would be minimized when each \innerpart has the same number of objects since $\sum_h |\inneri{h}| = N_\bigR$ .
\begin{equation}\label{equ-fdieq}
{\sum_h |\inneri{h}|^2}\geq {p} \cdot (\dfrac{\sum_h |\inneri{h}|}{p})^2 = \dfrac{N_\bigR^2}{p}
\end{equation}
 
Therefore, in order to reduce the inner verification cost, we should make all partitions with similar cardinality. 
To this end, we devise the following iterative method to split the dataset \textbf{evenly} in the map phase.
Firstly, in order to utilize the geometric and coordinate information that is unavailable in the metric space, we first perform \emph{Space Mapping} to project all sampled pivots in $\bigS = \{s_1, s_2, \cdots, s_k\}$ into an Euclidean Space with $n$-dimensions, where $n$ is a tunable parameter. 
Here we call the space after mapping \emph{target space} and the space for original similarity metric \emph{origin space}. 
In this way, it will be easy to determine \outerpart and \innerpart with the guarantee of correctness. 
Next we iteratively divide the target space into $p$ areas using the sampled pivots. 
Then we map the remaining objects other than pivots into the target space and allocate them into corresponding areas to decide the \innerpart and \outerpart. 
Finally, all generated partitions  are shuffled to reducers.

We then introduce every step in details. 
In the first step, we need to obtain a set of \emph{dimensional pivot}s, denoted as $\bigA = \{a_1, a_2,\cdots, a_n\}$ to help map a pivot $s_i$ into its \emph{mapped representation} $s_i^n$ with metric distance $\bigD: \bigU \times \bigU \rightarrow \Re^n$:
$s_i^n = (\bigD(a_1, s_i), \bigD(a_2, s_i), ..., \bigD(a_n, s_i))$\\
such a set \bigA can be randomly sampled from the $k$ pivots.
Note that the space mapping will not change the partition an object belongs to. 
Thus once the pivots are evenly distributed in the target space, we can obtain even partitions in origin space.

To split the target space into $p$ areas, we iteratively select a dimension from $[1,n]$ and split the space into two disjoint areas at the median. 
In this process,  as $p$ is not always equal to the power of 2, we can decide whether to split an intermediate area according to the value of $p$. 
For each area $P_i$, we use the Minimum Bounding Box \Bi{i} of $P_i$ which can include all objects in the area to represent the space it occupies. 
We can decide the boundaries of \Bi{i} using the pivots located in $P_i$. 
The minimum and maximum values of the $j^{th}$ dimension of $\Bi{i}$ are denoted as $\Bi{i}^{\bot}[j]$ and $\Bi{i}^{\top}[j]$ respectively. 
And we can denote the boundary of \Bi{i} using its hyper-perimeter denoted as $\prod_j [\Bi{i}^{\bot}[j], \Bi{i}^{\top}[j]]$.
\begin{figure}[h]
	\begin{center}
		\subfigure[\small{Point Distribution}]{
			\label{subfig-pointdist}
			\hspace*{-1.0em}\epsfig{figure=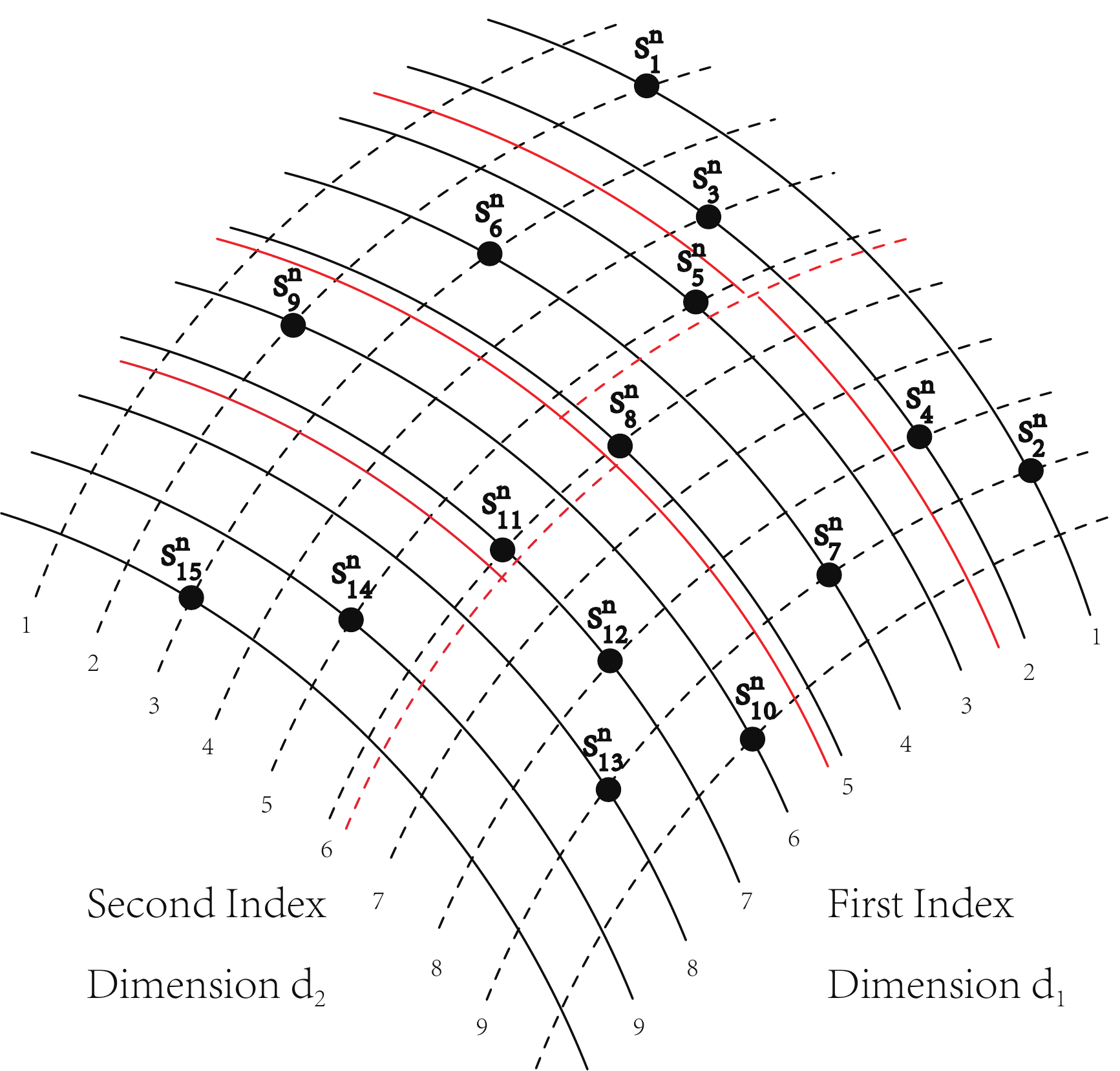,width=0.5\linewidth}}
		\subfigure[\small{Partition Results}]{
			\label{subfig-iterpart}
			\hspace*{-1.0em}\epsfig{figure=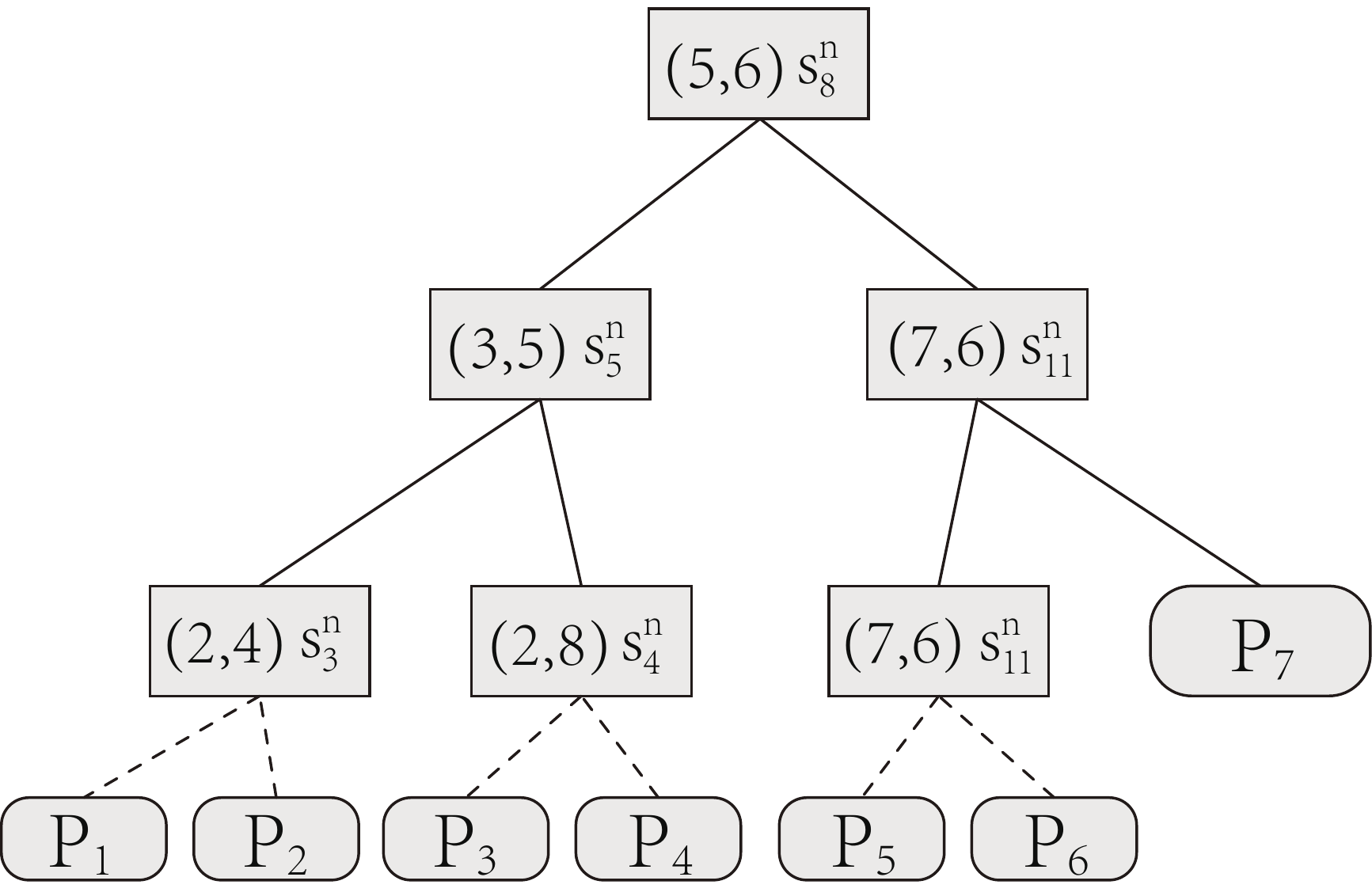,width=0.5\linewidth}}
	\end{center}\vspace{-1em}\vspace{-.5em}
	\caption{The Iterative Partition Method}\label{fig:iter}\vspace{-1em}
\end{figure}

\begin{example}	
Suppose after mapping to target space, we obtain the pivots shown in Figure~\ref{subfig-pointdist}. For ease of display, here we just reduce to 2-dimensional Euclidean Space. We want to split the area into 7 partitions. First we randomly select a dimension, e.g. the first one for splitting. Then we find $s_5^n$ as the fractile, then we split the whole space into two parts represented by the red line nearby $s_5^n$. We repeat this process iteratively with $s_8^n$ and $s_{11}^n$, respectively. Finally, we could get the areas shown in Figure~\ref{subfig-iterpart}.
\end{example}

With the help of target space, we propose the iterative partition method in Algorithm~\ref{alg:basictree}.
We first sort the pivots according to a randomly selected dimension $d$ (line~\ref{alg:basictree:filter}). 
Then we split the mapped pivots into two parts according to the fractile of selected dimension (line~\ref{alg:basictree:split}). 
When we obtain a new area $P_i$, we should first decide whether other pivots belong to this area, then collect them to calculate their Minimum Bounding Box \Bi{i} to represent that area. 
The above process is repeated iteratively until we get $p$ areas.

\begin{figure}[!t]\vspace{-.5em}
	\linesnumbered \SetVline
	\begin{algorithm}[H]
		\caption{Iterative Partition ($S^n$, $p$) \label{alg:basictree}}
		\KwIn{$S^n$: The set of pivots after mapping, $p$: The number of areas}
		\KwOut{$\bigP$: The set of areas}
		\SetVline
		\Begin{
			\If{$p$ is 1}{
				\return leaf node with new index \\
			}
			\Else{
				Sort $S^n$ by a randomly selected dimension $d$\nllabel{alg:basictree:filter} \\
				Let $d_m$ = $\dfrac{\lceil p/2\rceil}{\lfloor p/2\rfloor}$ fractile of $S^n$ in dimension $d$ \nllabel{alg:basictree:split} \\
				Initialize $S_l^n , S_r^n$ as empty sets \\
				\For {each $s^n_i$ in $S^n$} {
					\If { $s^n_{i}[d] < d_m$ }{
						Put $s^n_i$ into $S_l^n$
					}
					\lElse {
						Put $s^n_i$ into $S_r^n$
					}
				}
				$\bigP_l = Iterative Partition(S_l^n, \lceil p/2\rceil)$ \\
				$\bigP_r = Iterative Partition(S_r^n, \lfloor p/2\rfloor)$ \\
				Build $\bigP$ with $\bigP_l$ and $\bigP_r$ \\
				\return \bigP
			}
		}
	\end{algorithm}\vspace{-.5em}
\end{figure}

\begin{figure}[!t]
	\hspace{-1.1em}
	\centering
	\includegraphics [scale=0.3]{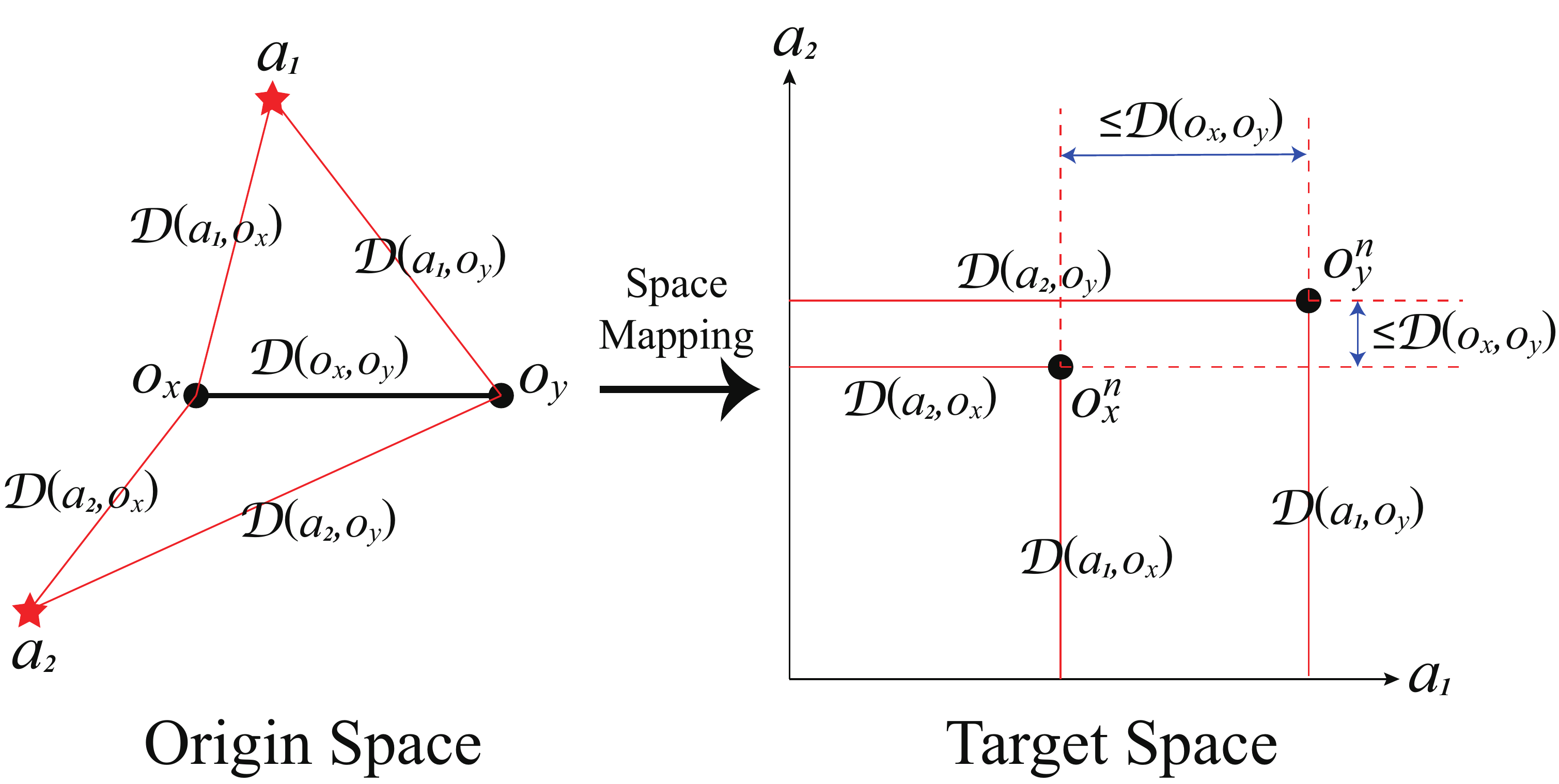}
	\caption{Correctness of Iterative Partition}
	\label{fig-spacemapping}\vspace{-.5em}
\end{figure}
After obtaining the set of areas $\bigP$, we then need to identify the \innerpart and \outerpart for each $P_i \in \bigP$.
As the pivots have been allocated into different areas in the process of iterative partition, we next map the remaining objects into target space and allocate them to corresponding partitions.
For an area $P_i$, its kernel partition $\inneri{i}$ is the collection of objects whose mapped representations are within the hyper-cube subspace $\Bi{i}$, i.e. $\inneri{i} = \{o|o^n\in \Bi{i}\}$.
Similarly, given the threshold $\delta$, its whole partition $\outeri{i}$ is the collection of objects whose mapped representations are contained in a hyper-cube with range $ \prod_d[\Bi{i}^{\bot}[d] - \delta, \Bi{i}^{\top}[d]+\delta]$, which is formally defined as $\outeri{i} = \{o|o^n\in \prod_d[\Bi{i}^{\bot}[d]-\delta, \Bi{i}^{\top}[d]+\delta]\}$, where $o^n$ is the representation forms of object $o$ in target space. 
With such processing, we can also guarantee that all the join results can be found using $\outeri{i}$ and $\inneri{i}$, which is formally stated in Lemma~\ref{lem-inout}.

\begin{lemma}\label{lem-inout}
For any two objects $ o_x, o_y \in \bigR$ s.t. $\bigD(o_x, o_y) < \delta$, there is a partition $k$ s.t. $o_x \in \inneri{k}$ and $o_y \in \outeri{k}$.
\end{lemma}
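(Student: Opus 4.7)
The plan is to exploit the triangle inequality of the metric $\bigD$ dimension by dimension in the target space, leveraging the fact that each coordinate of the mapped representation $o^n$ is itself a metric distance $\bigD(a_d, o)$ to a dimensional pivot $a_d \in \bigA$.

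First I would locate the partition $k$ by appealing to $o_x$. Since the areas $\bigP$ are constructed iteratively so that the corresponding \innerpart s are disjoint and collectively cover $\bigR$, there is a unique index $k$ with $o_x \in \inneri{k}$, i.e.\ $o_x^n \in \Bi{k}$. Unpacking this means $\Bi{k}^{\bot}[d] \leq \bigD(a_d, o_x) \leq \Bi{k}^{\top}[d]$ for every dimension $d \in [1,n]$. The goal then reduces to showing $o_y \in \outeri{k}$ for this same $k$, i.e.\ $\Bi{k}^{\bot}[d] - \delta \leq \bigD(a_d, o_y) \leq \Bi{k}^{\top}[d] + \delta$ for every $d$.

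The key step is to apply the triangle inequality from Definition~\ref{def-metric} twice for each dimensional pivot $a_d$: the upper side gives $\bigD(a_d, o_y) \leq \bigD(a_d, o_x) + \bigD(o_x, o_y) < \bigD(a_d, o_x) + \delta \leq \Bi{k}^{\top}[d] + \delta$, and symmetrically (using $\bigD(a_d, o_x) \leq \bigD(a_d, o_y) + \bigD(o_y, o_x)$, rearranged) one gets $\bigD(a_d, o_y) \geq \bigD(a_d, o_x) - \delta \geq \Bi{k}^{\bot}[d] - \delta$. Combining both inequalities for all $d$ places $o_y^n$ inside the expanded hyper-cube $\prod_d[\Bi{k}^{\bot}[d]-\delta,\, \Bi{k}^{\top}[d]+\delta]$, which by definition means $o_y \in \outeri{k}$, completing the argument.

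Honestly, I do not expect a real obstacle here, because the $\delta$-expansion of each \innerpart box was designed precisely to absorb the triangle-inequality slack. The only subtle point worth flagging is the implicit claim that the \innerpart s truly partition $\bigR$ (so that $k$ exists and is unique). This is a property of the iterative median-splitting in Algorithm~\ref{alg:basictree} together with the definition $\inneri{i} = \{o : o^n \in \Bi{i}\}$; if boundary objects lying on a splitting median could fall into two boxes, one would either tie-break deterministically or observe that even a non-unique $k$ still satisfies the lemma, so the argument is unaffected.
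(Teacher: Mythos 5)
Your proof is correct and follows essentially the same route as the paper's: both identify $k$ as the unique kernel partition containing $o_x$ (via the disjoint-cover property of \innerpart), derive $|\bigD(a_d,o_x)-\bigD(a_d,o_y)|\leq\bigD(o_x,o_y)<\delta$ coordinate-wise from the triangle inequality, and conclude that $o_y^n$ lies in the $\delta$-expanded box defining $\outeri{k}$. Your explicit remark about existence and uniqueness of $k$ is a point the paper handles only by citing the disjointness property of the kernel partitions, so it is a welcome clarification rather than a deviation.
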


\begin{proof}
First we show that using $\delta$ to expand the subspace $\Bi{i}$ in target space is reasonable. 
For any pair of objects $o_x, o_y \in \bigR$ s.t. $\bigD(o_x, o_y) < \delta$ and $o_x^n, o_y^n$, we have $|\bigD(a_i, o_x) - \bigD(a_i, o_y)| \leq \bigD(o_x, o_y) < \delta$ for $i \in [1, n]$ because of the \emph{Triangle Inequality} property that holds for all distance functions in \emph{origin space} as the left of figure~\ref{fig-spacemapping}). 
In other word, after Space Mapping in \emph{target space}, we can deduce that $|o_x^n[i] - o_y^n[i]| \leq \bigD(o_x, o_y) < \delta$ for any dimension $i$ as the right of figure~\ref{fig-spacemapping}. 
Then, according to the first property in Lemma~\ref{lem-inout}, $o_x$ must exist in one and only one \innerpart $\inneri{k} = \{o|o^n\in \Bi{k}\}$. 
Meanwhile, from the definition of its corresponding \outerpart and the fact that $|o_x^n[d] - o_y^n[d]| < \delta$ is established for any dimension $i$, we can have: $o_y \in \outeri{k} = \{o|o^n\in \prod_d[\Bi{k}^{\bot}[d]-\delta, \Bi{k}^{\top}[d]+\delta]\}$. 
The proof completes.
\end{proof} 

\subsection{Learning-based Partition}\label{subsec-advmap}

Although the iterative partition method can reduce the inner verification cost, it cannot effectively bound the size of \outerpart.
This is because it only relies on the information in target space and loses some information in the origin space, which may lead to skewness in partitions. 
To alleviate this problem, we devise a learning-based method to reduce the outer verification cost i.e. $|\inneri{h}| \cdot (|\outeri{h}| - |\inneri{h}|)$. 
The key point is that for a given number $p$ of areas, in each iteration we should find a proper dimension with which the pivots can distribute evenly in the origin space. 
To this end, we define a cost function to measure the \emph{compactness} of areas.
The less compactness an area has, the larger radius a node will have, which could lead to an irregular subspace of \innerpart. 
Thus the size of corresponding \outerpart will be larger.
Therefore, we can reduce the size of \outerpart by trying to get partitions with higher degree of compactness. 

To identify the compactness, we first need to recognize similar pivots in the origin space. 
To this end, we assign a label to each $s_i \in \bigS$ so that pivots with the same label are close to each other in origin space. 
We call pivots with the same label ``similar'' pivots.
The labeling function is formally defined in Definition~\ref{def-lbfnc}.
\begin{definition}\label{def-lbfnc}
		A \textbf{Labeling Function} on the set of pivots $\bigS$ is a function $L$: $\bigS$ $\rightarrow$ $N$ s.t.
		\begin{compactitem}
			\item non negative: $\forall s_i \in \bigS, L(\bigS) \geq 0$
			\item finite co-domain: $|L(\bigS)|< +\infty$
		\end{compactitem}
\end{definition}
where the labeling function $L(\cdot)$ can be easily found with clustering techniques. 
That is, we allocate similar pivots into clusters and assign a distinct label to each cluster. 
As \emph{hierarchical clustering} is a universal way to adapt our techniques to various types of distance functions, here we adopt it to fulfill the task of labeling. 
In this way, the similar pivots will be assigned the same label.
	
With the help of assigned labels, we can use the labels of pivots to evaluate the compactness of areas and then utilize a learning-based technique to select the dimension for splitting in each iteration. 
Specifically, we use $Cost(\bigS,L)$ to quantify the compactness of an area. 
The lower value it is, the more pivots with the same labels are in the area, and the better compactness there will be.
Given $\bigS$ and $L$, $\forall y \in L(\bigS)$, the frequency of label $y$ is denoted as $freq(y) = |\{s_i|L(s_i) = y\}|$. 
The cost value is defined as the proportion between the frequency of label $y$ and the total number of labels with a logarithm regularization.
\begin{equation}\label{equ-gain}
Cost(\bigS,L) = \int\limits_{y \in L(\bigS)}{\dfrac{freq(y)}{|\bigS|}\\
			\cdot (- \log{ \dfrac{freq(y)}{|\bigS|}})} \mathrm{d} y
\end{equation}
		
With the help of Equation~\ref{equ-gain}, we then select the dimension for splitting by calculating the \emph{cost variation} in each iteration. 
Specifically, in current iteration if we split the space by dimension $d$ ($d \in [1,n]$), we can get a set of subspaces, denoted as $K_d$. 
The set of $K_d$ is denoted as $\bigK$.
And the cost variation on dimension $d$, denoted as $\bigC_d$, can be calculated as:
\begin{equation}
		\bigC_d(\bigS, \bigK, L) = Cost(\bigS, L) - \int\limits_{K\in\bigK}\dfrac{|K|}{|\bigS|} \cdot Cost(K, L)\mathrm{d}K
\end{equation}
This cost variation illustrates whether there will be better compactness after splitting. 
The larger value it is, the more percentage of pivots with the same label will be in the areas after splitting.

Therefore, we can calculate $\bigC_d(\bigS, \bigK, L)$ for each dimension $d$ and select the one with the maximum value.
To avoid overfitting, we also apply logarithm regularization on the cost variation to compensate for deviations instead of directly using cost variation when proposing a measure function $\bigF_d$, which is detailed in Equation~\ref{equ-mfunc}.
\begin{equation}\label{equ-mfunc}
	\bigF_d(\bigS, \bigK, L) = \dfrac{	\bigC_d(\bigS, \bigK, L)}{\int\limits_{K \in \bigK}{\dfrac{|K|}{|\bigS|} \cdot (-\log{\dfrac{|K|}{|\bigS|}})}\mathrm{d}K}
\end{equation}
	
We can use this measure to rank dimensions and select one with the largest value of $\bigF_d(\cdot, \cdot, \cdot)$ among all dimensions for splitting. 
In this process, the compactness for all areas will be increased after each iteration. 
Correspondingly, the total size of all whole partitions will tend to be smaller. 
As a result, this approach will improve the quality of partition and reduce the outer verification cost.
We can implement the learning-based method by simply replacing line 5 in Algorithm~\ref{alg:basictree}. 
Details are shown in Algorithm~\ref{alg:mltree}.

\begin{figure}[h]\vspace{-.5em}
	\linesnumbered \SetVline
	\begin{algorithm}[H]
		\caption{Learning-based Partition ($\bigS^n$, $p$) \label{alg:mltree}}
		\KwIn{$\bigS^n$: The dataset of sampling data at target space with type, $p$: The number of areas}
		\SetVline  //replace line 5 in Algorithm~\ref{alg:basictree} \\
		\Begin{
			Let $max\_d, max\_gain = 0$ \\
			\For {each dimension $d$}{
				Sort $\bigS^n$ by $d$; \\
				Split $\bigS^n$ into two parts with fractile $m$ by $d$ into $\bigS_l = \{s|s\in \bigS \wedge s^n[d] < m\}$ and $\bigS_r = \{s|s\in \bigS \wedge s^n[d] \geq m\}$ \\
				Let $gain = \bigF_d(\bigS^n,\{\bigS_l, \bigS_r\}, L_c)$\\
				\If{$gain > max\_gain$ }{$max\_gain = gain, max\_d = d$}
			}
			Use dimension $max\_d$ to sort the pivots;\\
		}
	\end{algorithm}\vspace{-0.5em}
\end{figure}

\subsection{Complexity of Partition Algorithms}\label{subsec-complexity}

We make an analysis on the time and space complexity of the two partition strategies.

For the iterative method (Algorithm~\ref{alg:basictree}), the complexity of selecting the dimension for splitting is $\bigo(nk)$ since we need to calculate the variance and correlation by traversing $S^n$ and sorting $S^n$ in $\bigo(k\log k)$ time in each layer of recursion. Meantime, we have $O(\log p)$ layers of recursion considering the termination condition. Thus the time complexity of Iterative Partition would be $\bigo(\log p\cdot(k\log k+nk))$.

For the Learning-based method (Algorithm~\ref{alg:mltree}), first we need to sort $S^n$ once in each dimension per layer of recursion, thus the time complexity is $O(nk\log k)$. 
Meanwhile, calculating the measure function $\bigF$ needs $\bigo(k)$ time if $\bigS^n$ is sorted as we only need to traverse the $\bigS^n$ and the count number of difference labels. 
Same to Algorithm~\ref{alg:basictree}, the recursion has $\bigo(\log p)$ layers. 
Thus, the total time complexity of Learning-based method is $\bigo(\log p\cdot nk\log(k))$.

\section{Discussion}\label{sec-dis}

\subsection{More about Sampling Error Bound}\label{subsec-dis-sample}

First we argue that similar with the generative sampling approach, there is also quality guarantee of the distribution-aware one.
As is introduced before, the distribution-aware sampling approach fetches samples from the local nodes with the help of local distribution. 
In this process, we will get the similar bound mentioned in Theorem~\ref{tho-errbd} with the local sample size $lc$.
The reason is that the distribution-aware approach can be considered as the stratified sampling on the global distribution which is constructed with a mixture of local distribution. 
Therefore the global error bound also can be obtained as the combination of the error bound of each local node.
Specifically, it can be calculated by selecting the maximum error among all local nodes.
Actually, the worst case of using the maximum error as the error bound only influences the performance seriously when the data distribution is extremely skewed.
And it seldom happens in practice.

\subsection{Support Metrics for String and Set}\label{subsec-dis-str}

Though the sampling and partition techniques are designed for vector data, it is natural to apply them to similarity metrics for string and set data.
The reason is that we can transform a string or set to a dense vector with existing techniques such as the ordering methods proposed in~\cite{DBLP:conf/sigmod/ZhangHOS10,DBLP:conf/icde/ZhangLWZXY17,zhang2018transformation}.
Previous studies have proposed many filter techniques for particular distance functions. 
Compared with them, our \name is a general framework which aims at making optimization for various kinds of distance functions. 
Moreover, existing filtering techniques for string and set data can be seamlessly integrated into the reduce phase of our framework to further improve the performance. 
Such integration is straightforward: for the objects on each reducer, we can just regard it as an independent dataset and apply the existing techniques designed for a single machine, such as length filter, prefix filter~\cite{DBLP:conf/icde/ChaudhuriGK06}, position filter~\cite{DBLP:conf/www/XiaoWLY08} and segment filter~\cite{DBLP:journals/vldb/YuWLZDF17}. 
This filtering process can be finished within one reducer, which does not need any network communication.

\section{Evaluation}\label{sec-exp}

\begin{table*}  
	\caption{Statistics of Datasets}\vspace{-1em}
	\centering  
	\subtable[Vector Data]{  
		\centering
		\label{tbl:vec-datastat}
		\begin{tabular}{|c|c|c|c|}\hline
			\multirow{2}*{Dataset} & \multirow{2}*{\#  ($10^5$)}  & \multirow{2}*{Length} & \multirow{2}*{Metric}\\ & & &  \\ \hline
			\vdatao & 4.2  & 20 & \eu \\ \hline
			\vdatat & 10.0  & 128 & \lonenorm  \\ \hline
		\end{tabular}
	}  
	\hfill
	\subtable[String Data]{     
		\centering     
		\label{tbl:str-datastat}
		\begin{tabular}{|c|c|c|c|c|c|c|}\hline
			\multirow{2}*{Dataset} & \multirow{2}*{\tabincell{c}{ \# \\ ($10^6$)}} & \multicolumn{2}{c|}{Rec Len.} & \multicolumn{2}{c|}{Token ($10^6$)} & \multirow{2}*{Metric}\\ \cline{3-6}
			& & Max & Avg & Size & $\mathrm{Freq_{max}}$ & \\ \hline
			\sdatao& 10.2 & 245 & 3 & 3.9  & 0.42  & \ed \\ \hline
			\sdatat & 7.4 & 3383 & 110 & 31.08 & 150.8 &  \jac\\ \hline
		\end{tabular} 
	}  
	\label{tbl:datastat}
\end{table*}

\begin{figure*}[h!t]\vspace{-1.5em}
	\begin{center}
		\subfigure[\small{\vdatao}]{
			\label{subfig:samplemr1}
			\hspace*{-1.0em}\epsfig{figure=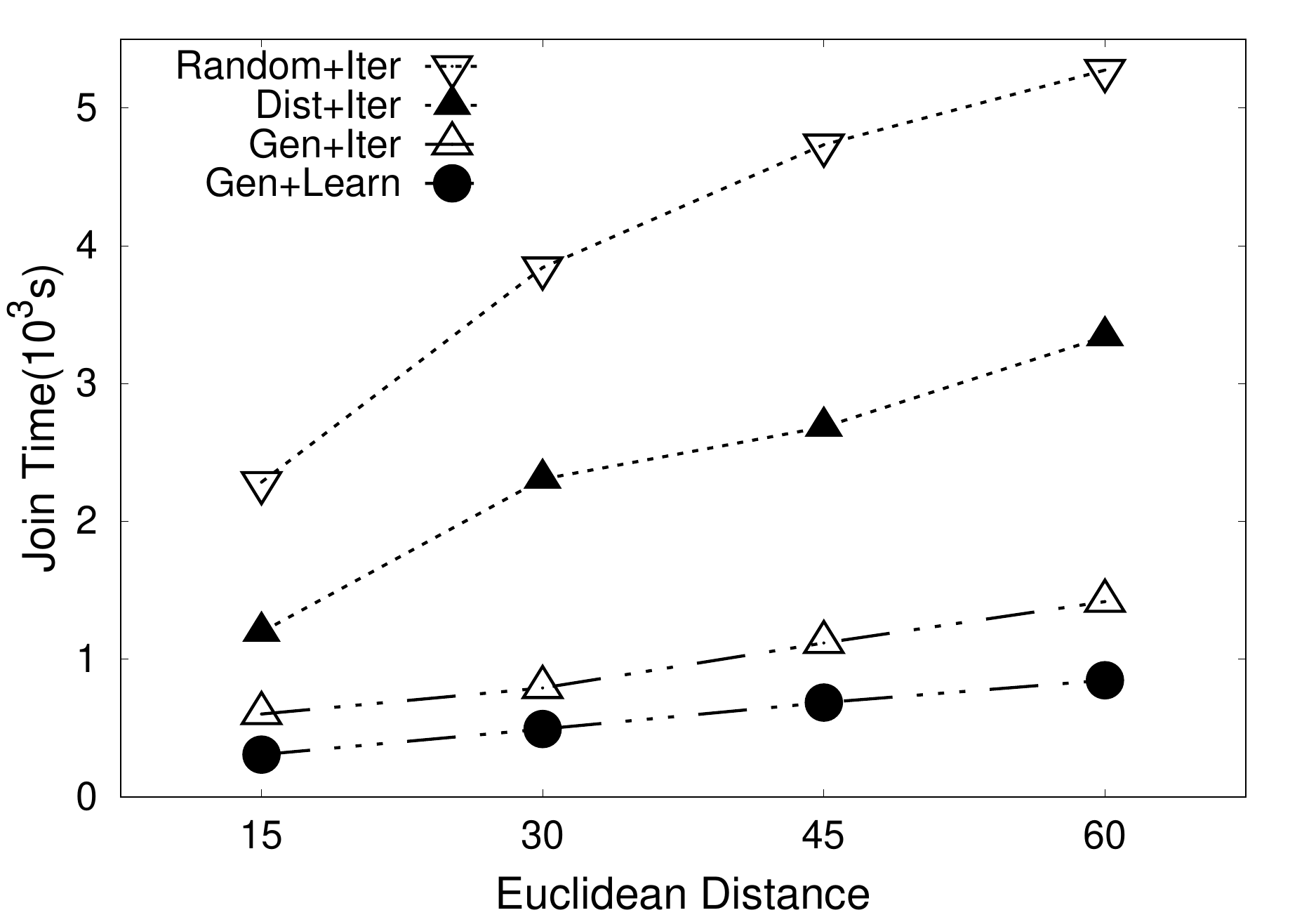,width=0.25\textwidth}}
		\subfigure[\small{\vdatat}]{
			\label{subfig:samplemr2}
			\hspace*{-1.0em}\epsfig{figure=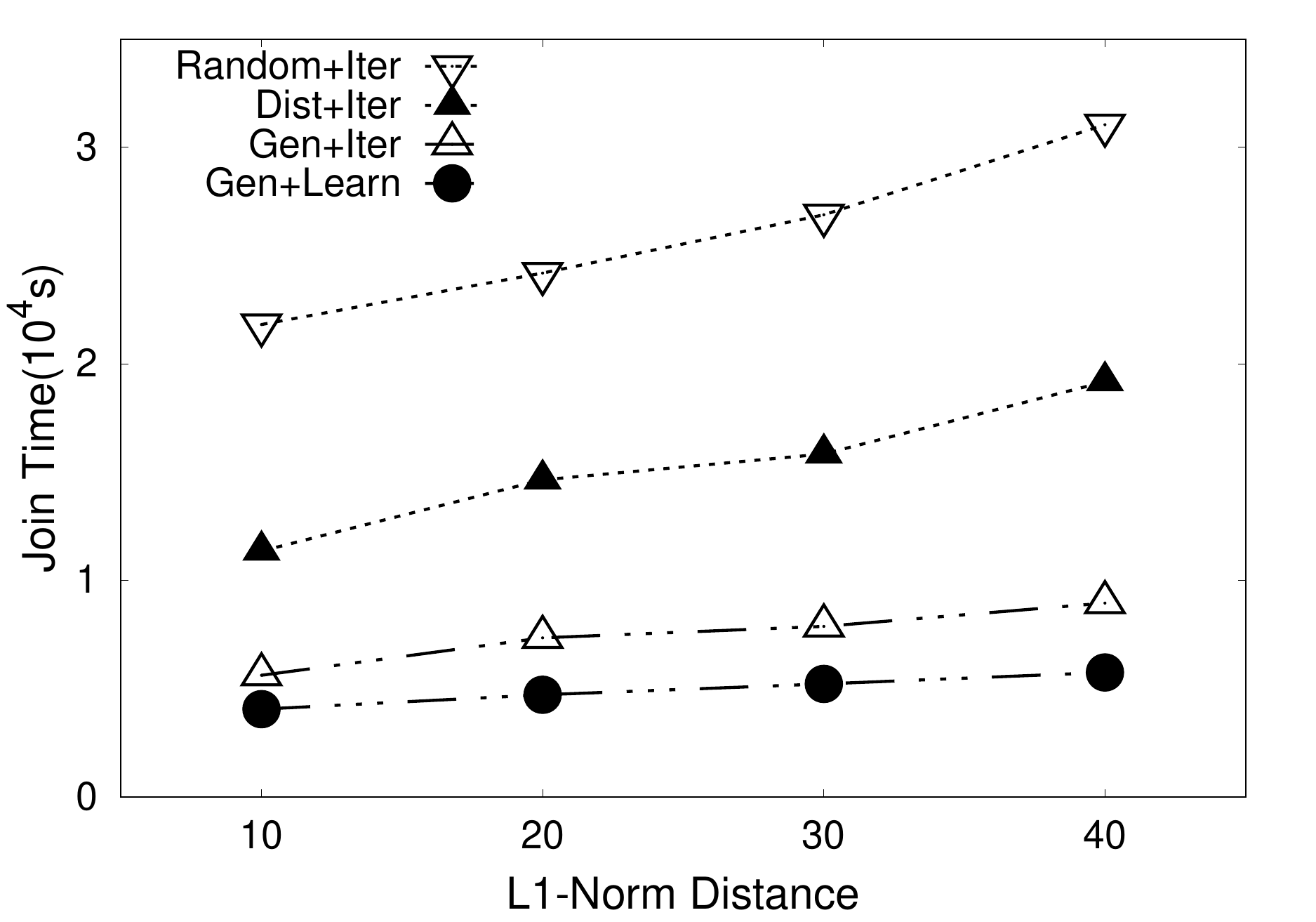,width=0.25\textwidth}}
		\subfigure[\small{\sdatao}]{
			\label{subfig:samplemr3}
			\hspace*{-1.0em}\epsfig{figure=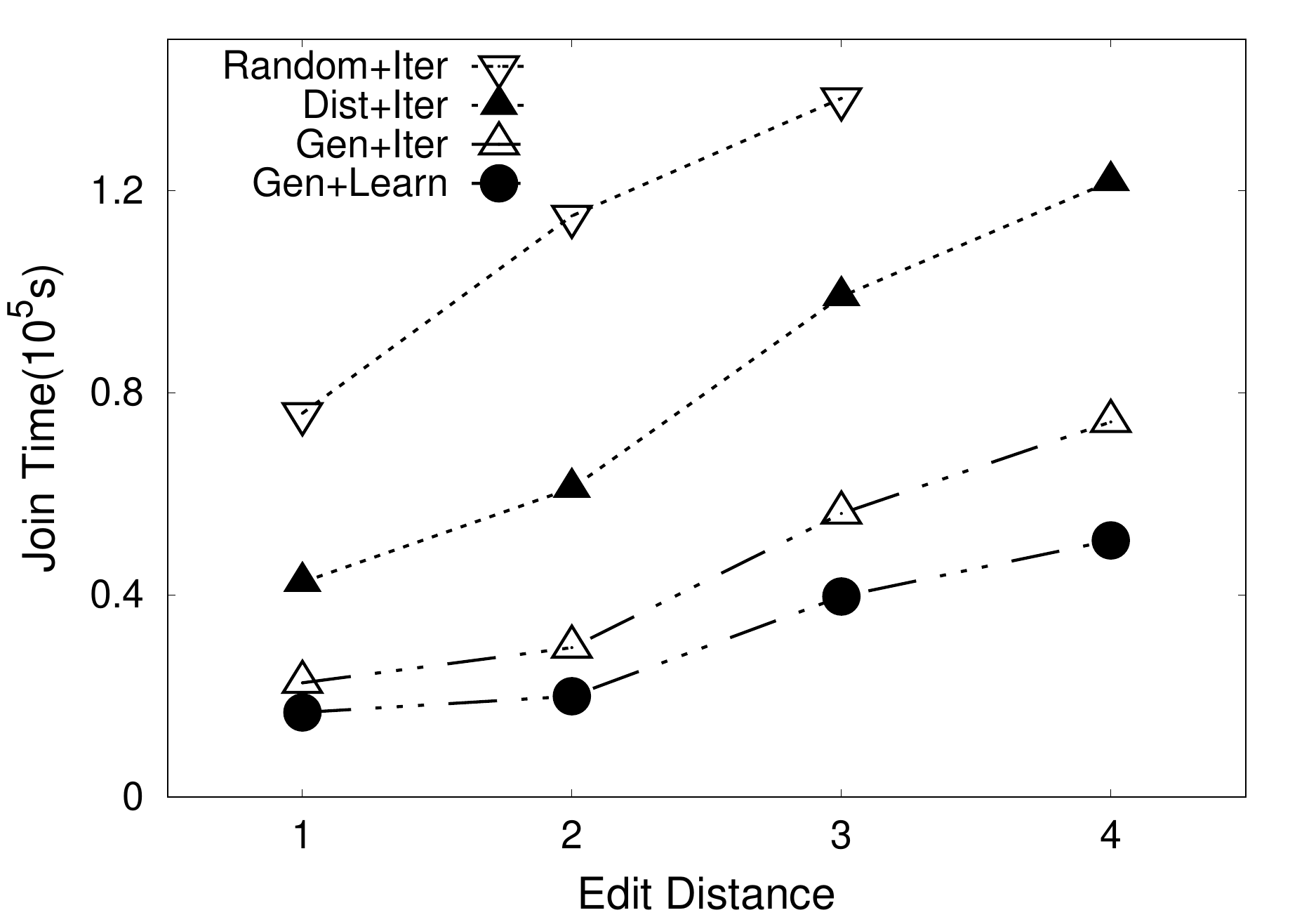,width=0.25\textwidth}}
		\subfigure[\small{\sdatat}]{
			\label{subfig:samplemr4}
			\hspace*{-1.0em}\epsfig{figure=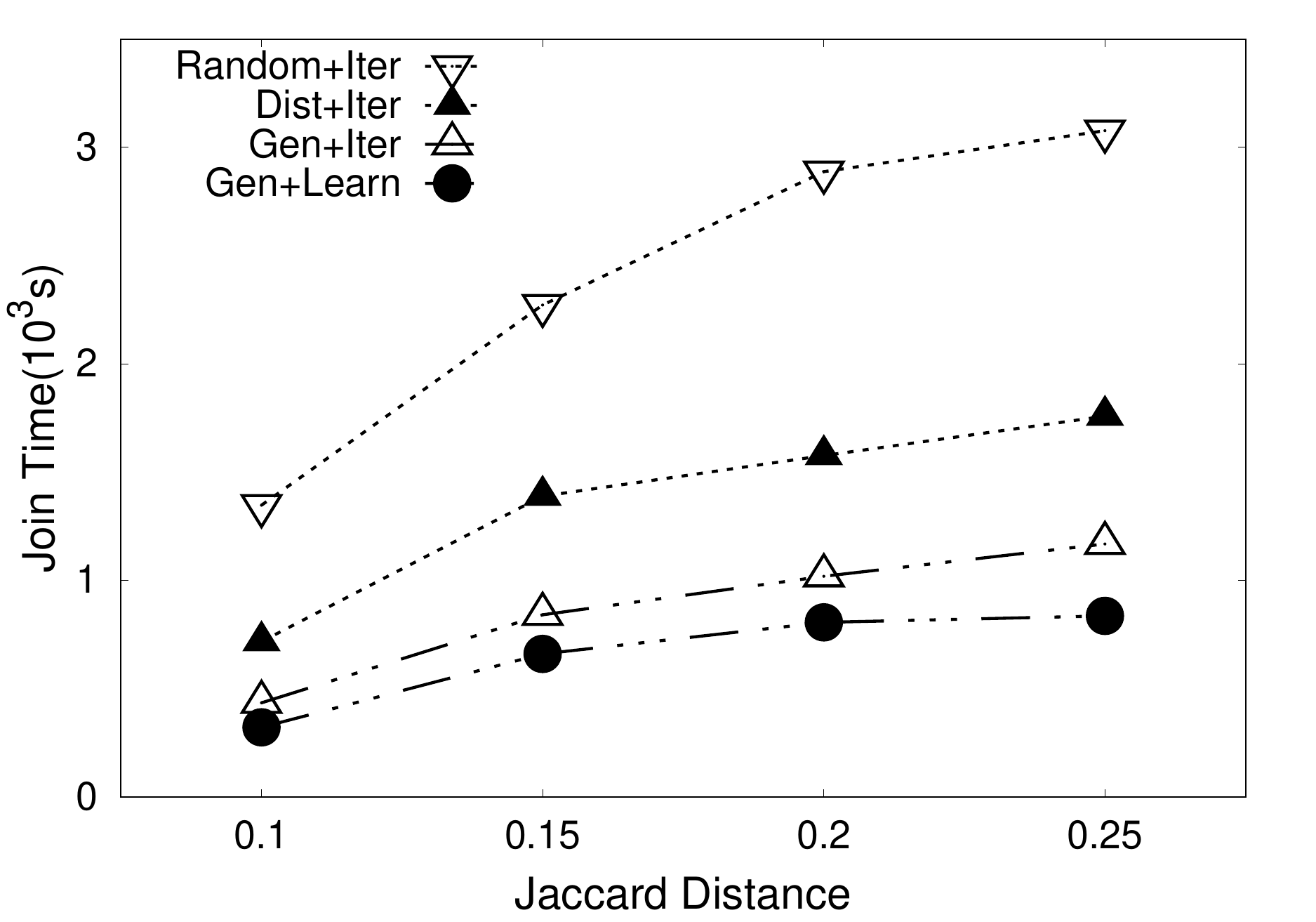,width=0.25\textwidth}}
	\end{center}\vspace{-1em}\vspace{-.5em}
	\caption{Effect of Proposed Techniques}\label{fig:samplemr}\vspace{-.95em}
	\begin{center}
		\subfigure[\small{\vdatao}]{
			\label{subfig:random1}
			\hspace*{-1.0em}\epsfig{figure=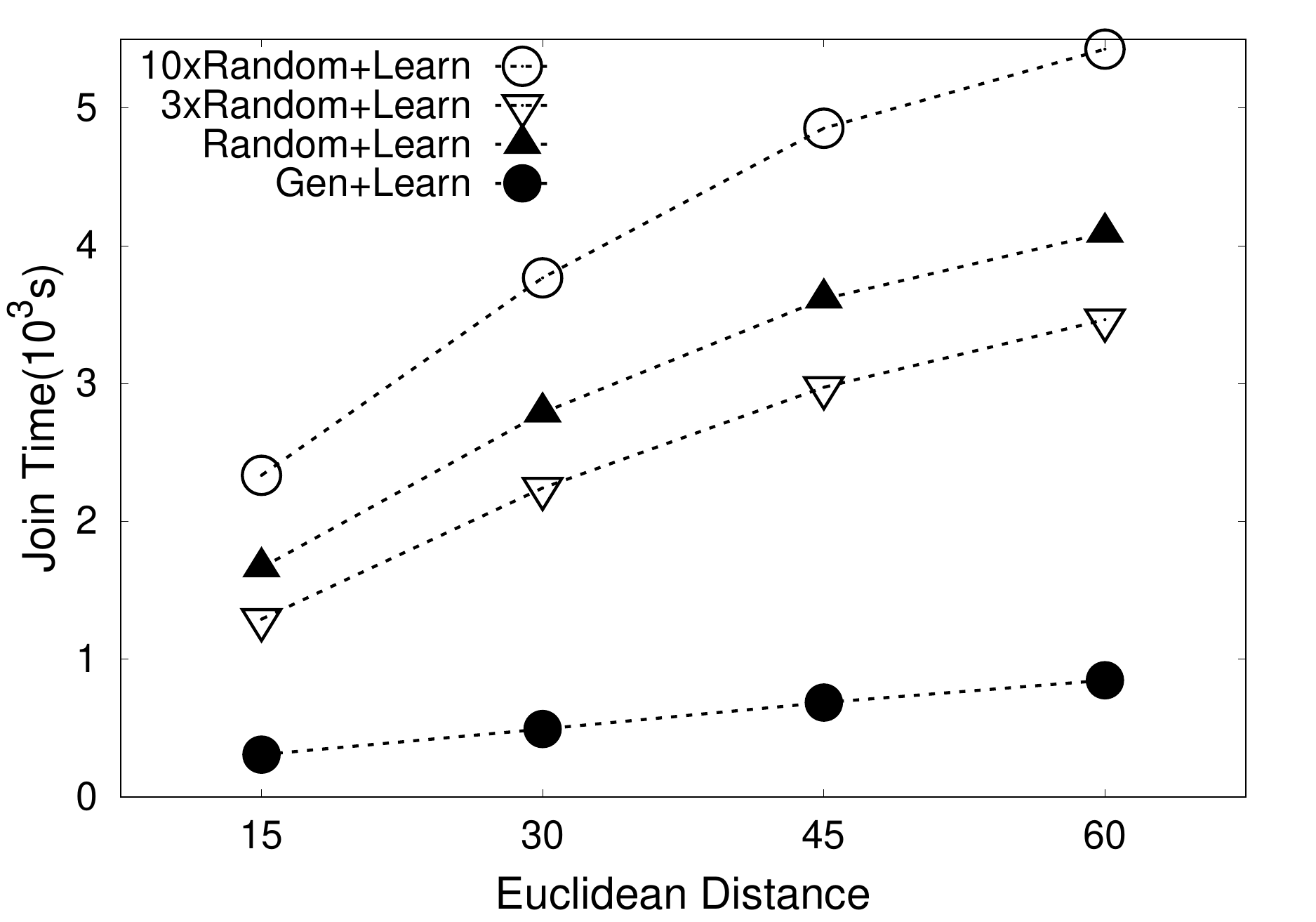,width=0.25\textwidth}}
		\subfigure[\small{\vdatat}]{
			\label{subfig:random2}
			\hspace*{-1.0em}\epsfig{figure=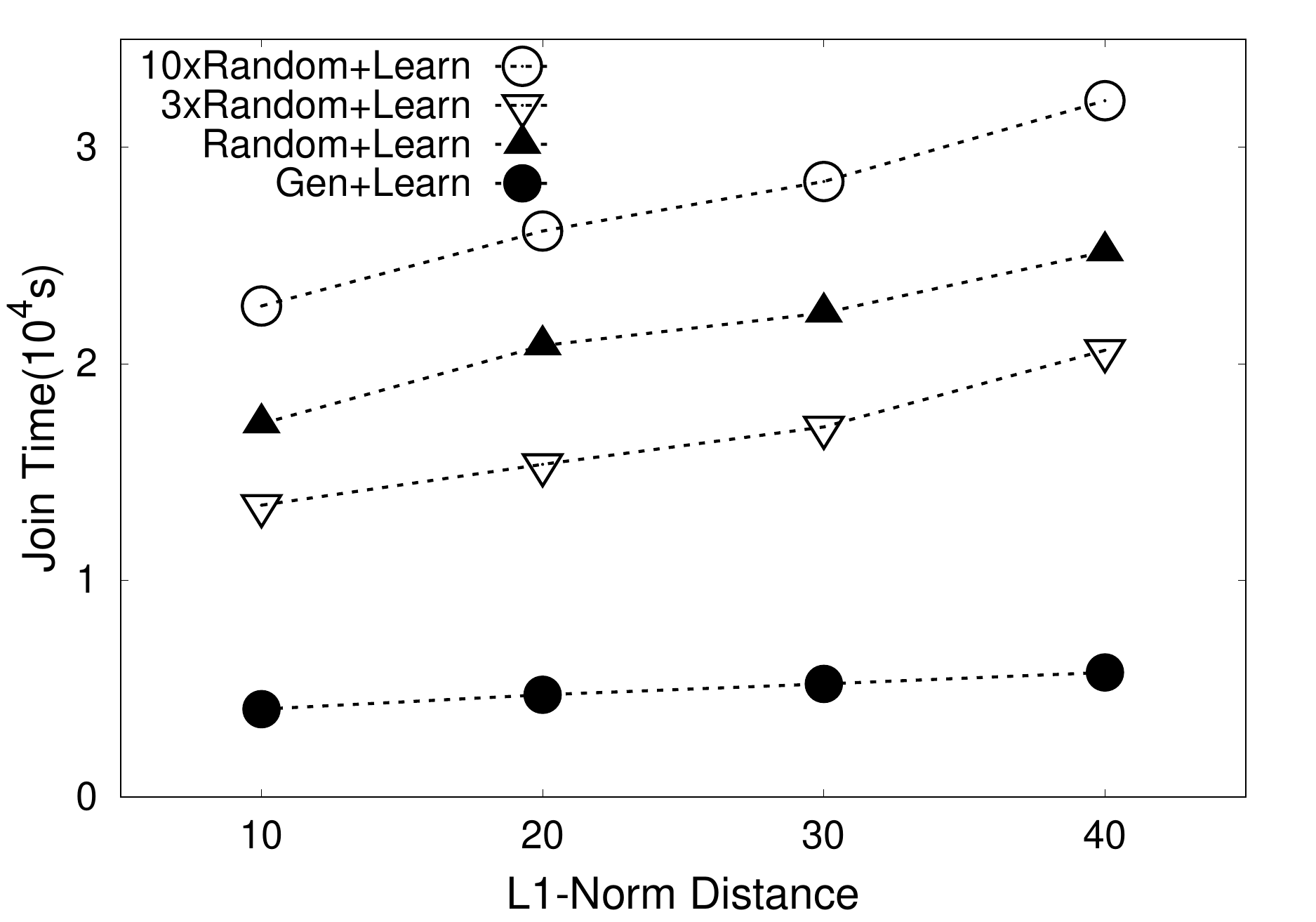,width=0.25\textwidth}}
		\subfigure[\small{\sdatao}]{
			\label{subfig:random3}
			\hspace*{-1.0em}\epsfig{figure=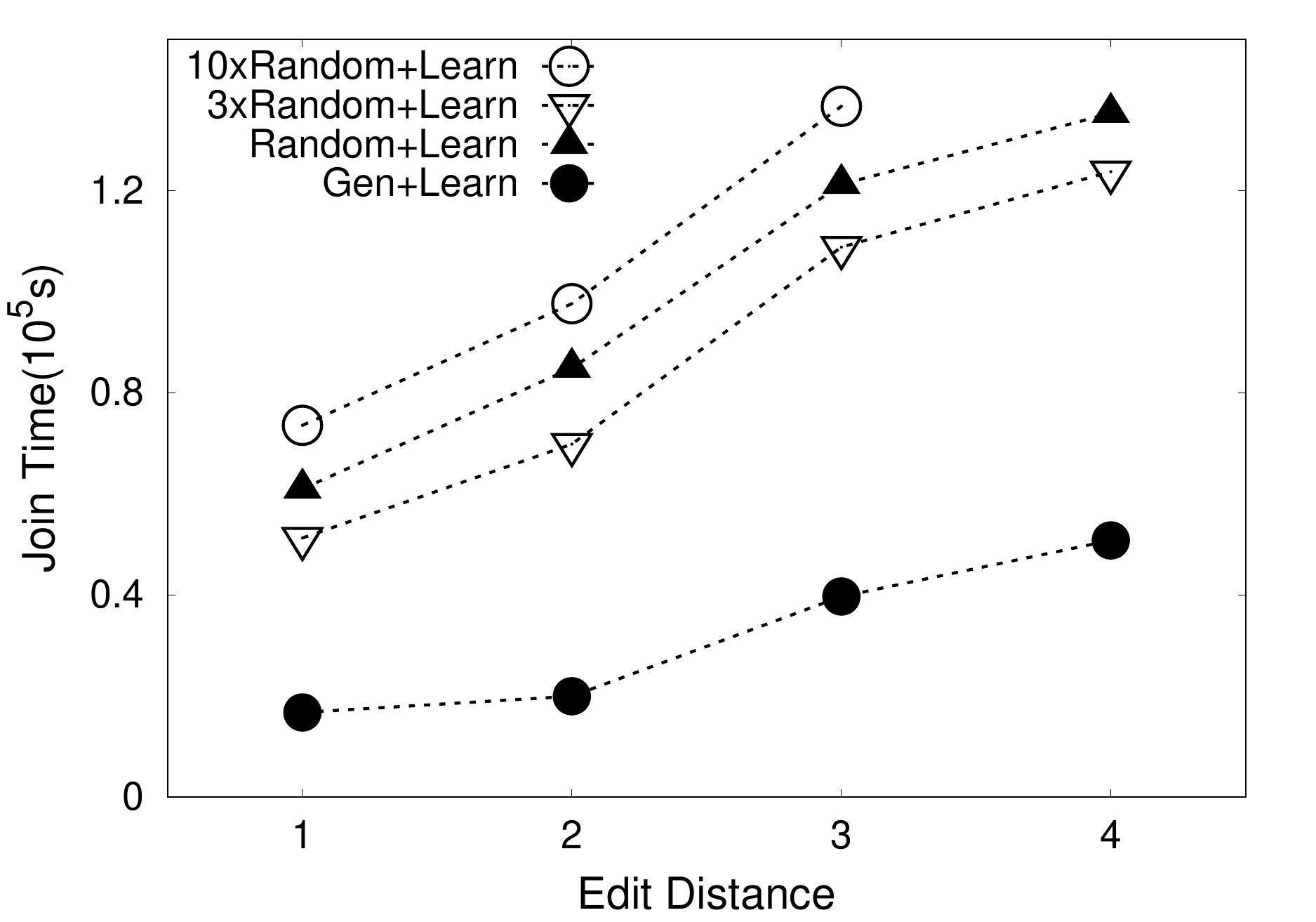,width=0.25\textwidth}}
		\subfigure[\small{\sdatat}]{
			\label{subfig:random4}
			\hspace*{-1.0em}\epsfig{figure=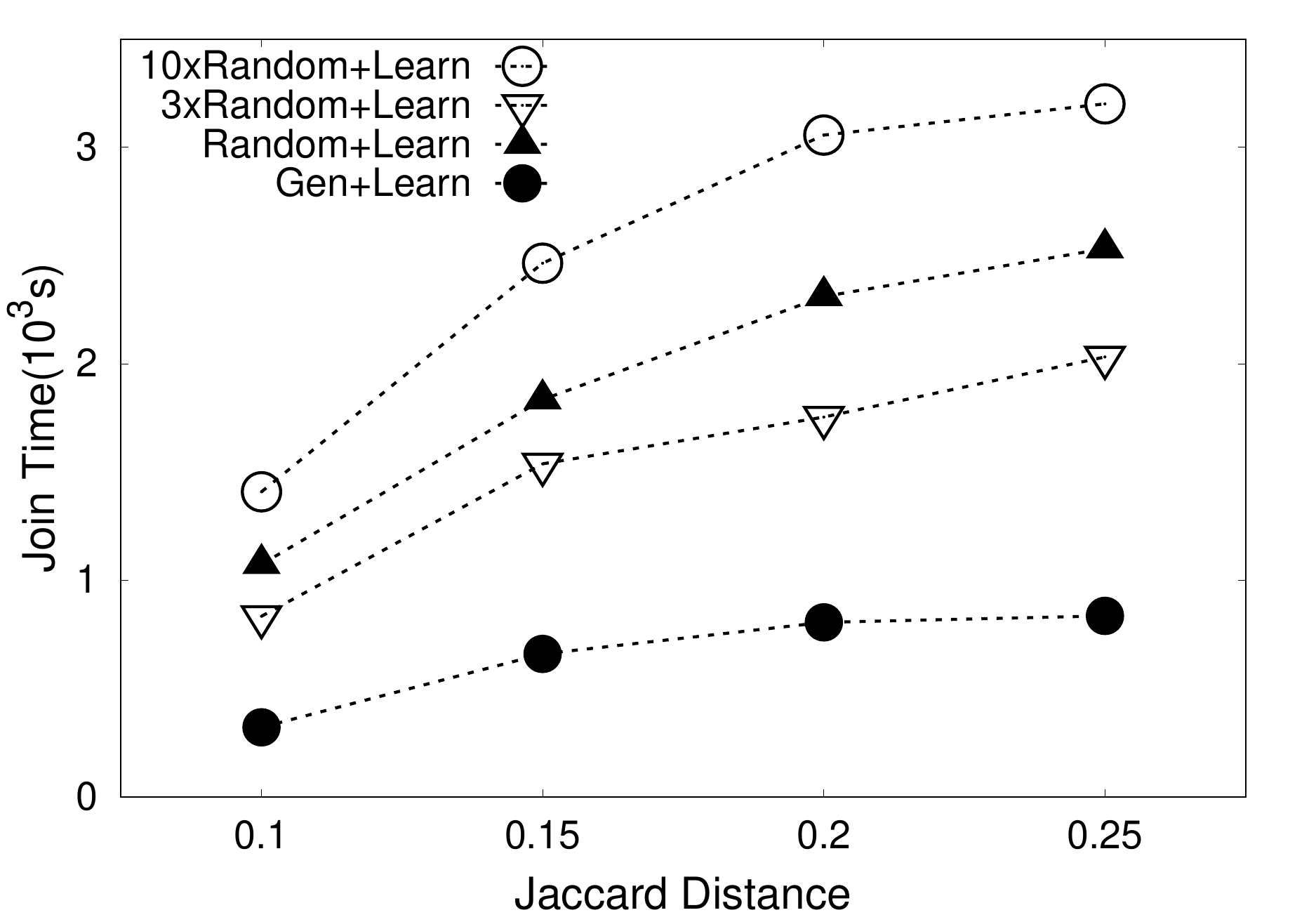,width=0.25\textwidth}}
	\end{center}\vspace{-1em}\vspace{-.5em}
	\caption{Effect of Varying Sample Size}\label{fig:random}\vspace{-1em}
\end{figure*}
\subsection{Experiment Setup}\label{subsec-setup}

The evaluation is conducted based on the methodology of a recent experimental survey~\cite{DBLP:journals/pvldb/FierABLF18}. 
We evaluate our proposed techniques on four real world datasets. 
The first two datasets are used to evaluate distance functions for vector data: \vdatao is a dataset of movie rating~\footnote{www.netflix.com}. 
Each movie is rated with an integer from 1 to 5 by users. 
We select the top 20 movies with the most number of ratings from 421,144 users as previous work did~\cite{DBLP:conf/kdd/WangMP13}. 
\vdatat is a widely-used dataset in the field of image processing~\footnote{http://corpus-texmex.irisa.fr/}. 
The next two datasets are used to evaluate distance functions for string data: \sdatao~\footnote{http://www.gregsadetsky.com/aol-data/} is a collection of query log from a search engine. 
\sdatat is a dataset of abstracts of biomedical literature from PubMed~\footnote{https://www.ncbi.nlm.nih.gov/pubmed/}. 
The details of datasets are shown in Table~\ref{tbl:datastat}. 

We compare \name with state-of-the-art methods: 
For vector data, we compare our work with four methods: \mrsim~\cite{DBLP:conf/sigmod/SilvaR12}, \mpass~\cite{DBLP:conf/kdd/WangMP13}, \clusterjoin~\cite{DBLP:journals/pvldb/SarmaHC14} and \kdtree~\cite{DBLP:journals/tkde/ChenYCGZC17} on \lonenormdist and \eudist, respectively. 
For string data, we compare our work with \mrsim~\cite{DBLP:conf/sigmod/SilvaR12}, \massjoin~\cite{DBLP:conf/icde/DengLHWF14} and \kdtree~\cite{DBLP:journals/tkde/ChenYCGZC17} for \jacdist and \eddist. 
For \jacdist, we also compare with \fsjoin~\cite{DBLP:conf/icde/RongLSWLD17}. 
As there are no public available implementations of above methods on Spark, we implement all of them by ourselves. Although there are also some other related work~\cite{DBLP:conf/sigmod/OkcanR11,DBLP:conf/icde/FriesBSS14,DBLP:conf/sigmod/VernicaCL10,DBLP:conf/icde/AfratiSMPU12,DBLP:journals/pvldb/MetwallyF12}, previous studies have showed that they cannot outperform above selected algorithms, so we do not compare with them due to space limitation.

All the algorithms, including both our proposed methods and state-of-the-art ones, are implemented with Scala on the platform of Apache Spark 2.1.0. 
We use the default settings of HDFS. 
The experiments are run on a 16-node cluster (one serves as a master node and others serve as slave nodes). 
Each node has four 2.40GHz Intel Xeon E312xx CPU, 16GB main memory and 1TB disk running 64-bit Ubuntu Server 16.04. 
We run all the algorithms 10 times and reported the average results.
Some methods cannot finish within $150,000$ seconds.
In this case, we regard it as a timeout and do not report their results in the figure. 

Among all experiments, we use the default parameters as: the sample size $k = 3200$, the number of partitions $p = 60$, the number of dimensions of mapping space $n = 10$ and the number of computing nodes (slave nodes) as $15$.
The choice of hyper-parameters will be discussed later in Section~\ref{subsec-evals}.

\subsection{Effect of Proposed Techniques}\label{subsec-evals}
\begin{figure}[h]
	\begin{center}
		\subfigure[\small{\vdatat}]{
			\label{subfig:samplesize-data3}
			\hspace*{-1.0em}\epsfig{figure=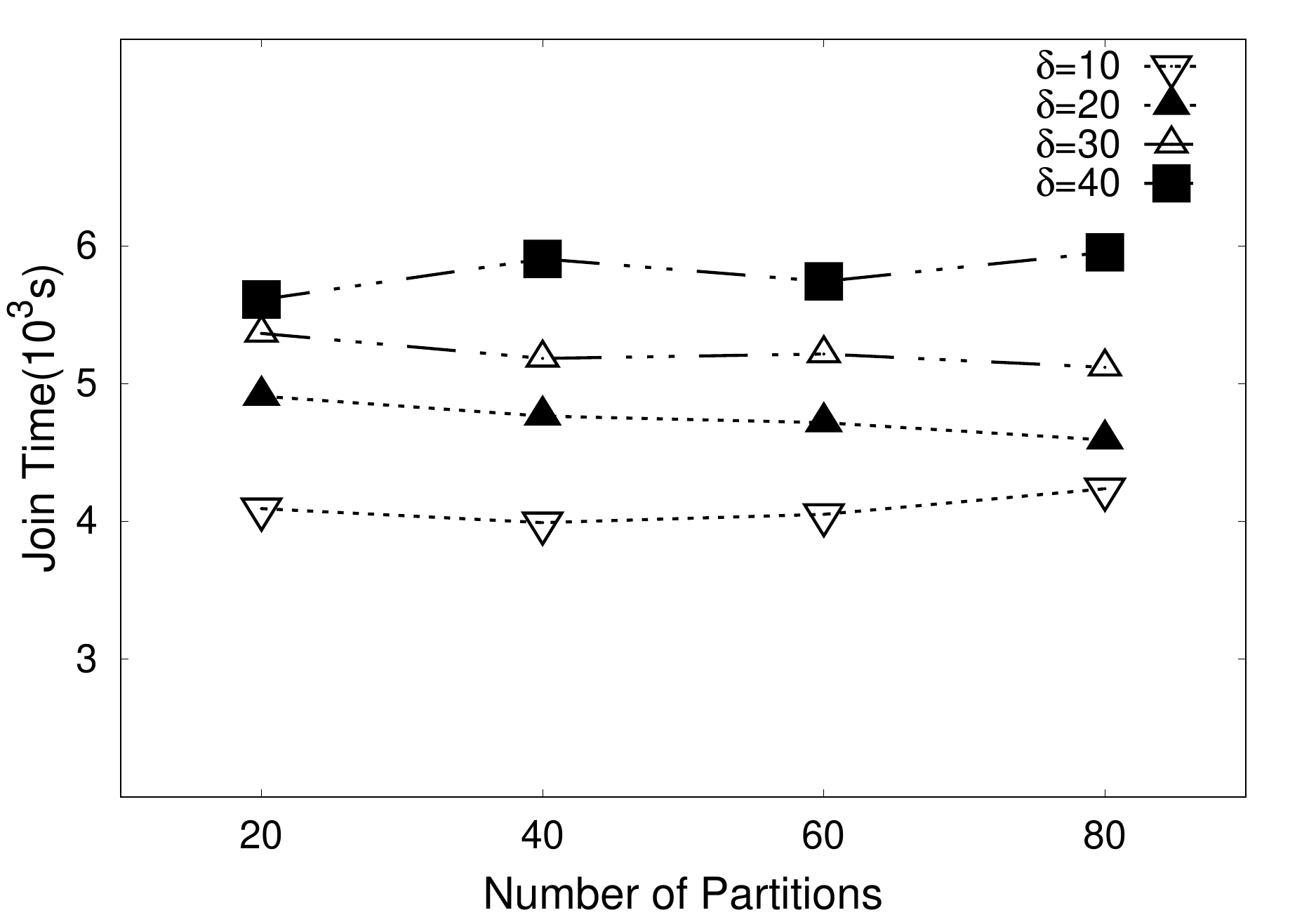,width=0.5\linewidth}}
		\subfigure[\small{\sdatao}]{
			\label{subfig:samplesize-data4}
			\hspace*{-1.0em}\epsfig{figure=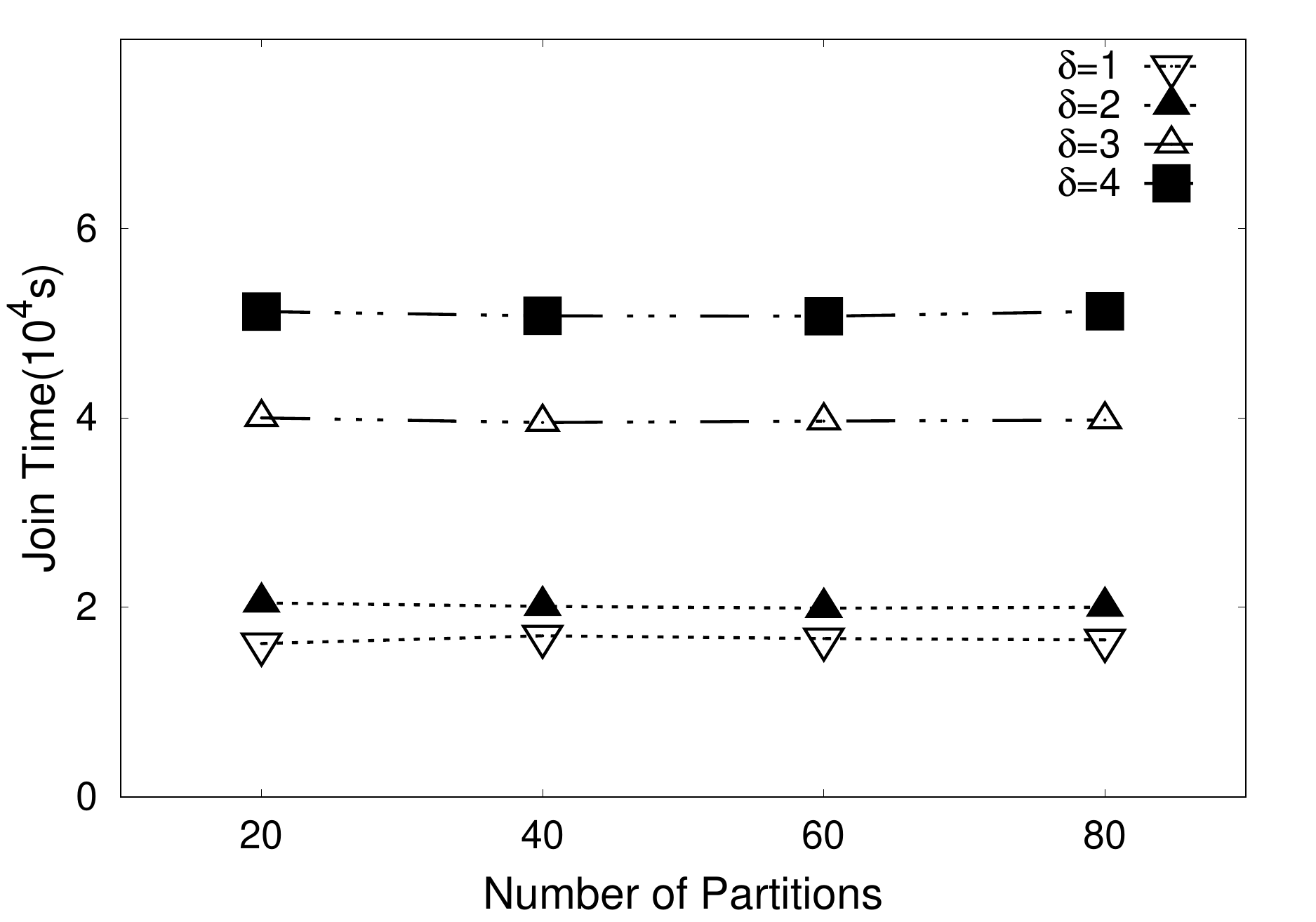,width=0.5\linewidth}}
	\end{center}\vspace{-1em}\vspace{-.5em}
	\caption{Effect of Partition Number}\label{fig:para:partition_samplesize}\vspace{-1em}
\end{figure}
\begin{figure*}[t]
	\begin{center}
		\subfigure[\small{\vdatao}]{
			\label{subfig:baseline1}
			\hspace*{-1.0em}\epsfig{figure=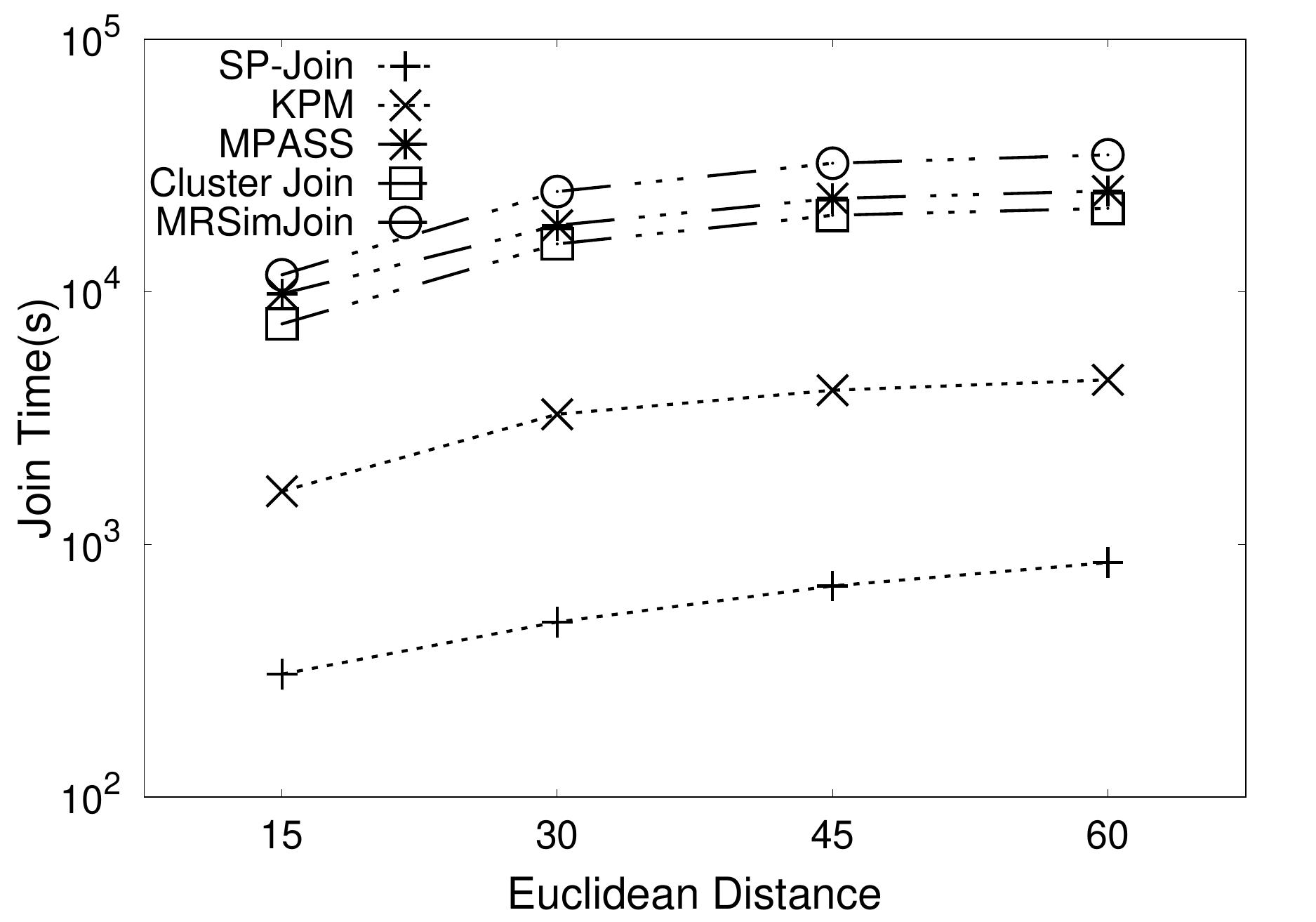,width=0.25\textwidth}}
		\subfigure[\small{\vdatat}]{
			\label{subfig:baseline2}
			\hspace*{-1.0em}\epsfig{figure=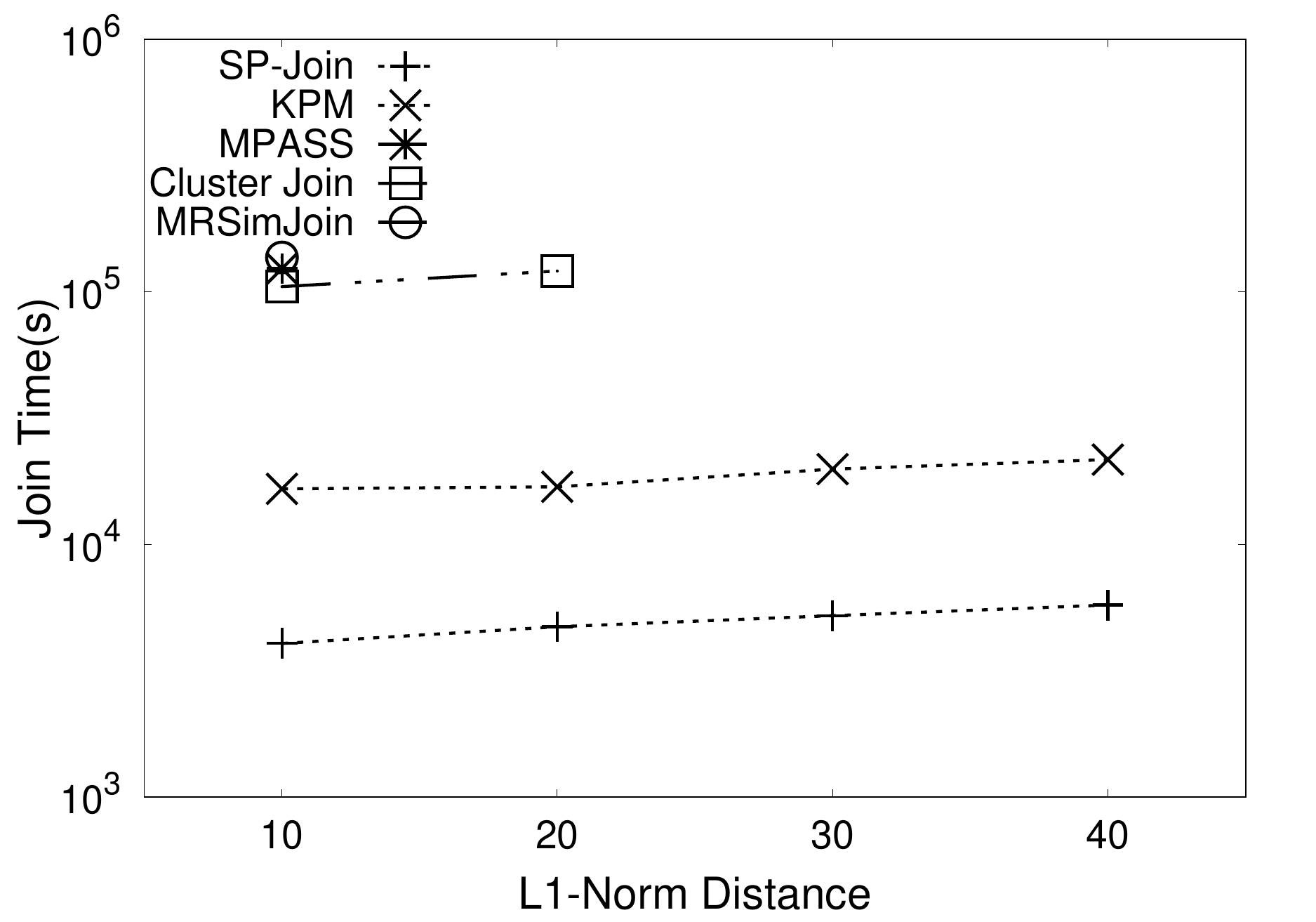,width=0.25\textwidth}}
		\subfigure[\small{\sdatao}]{
			\label{subfig:baseline3}
			\hspace*{-1.0em}\epsfig{figure=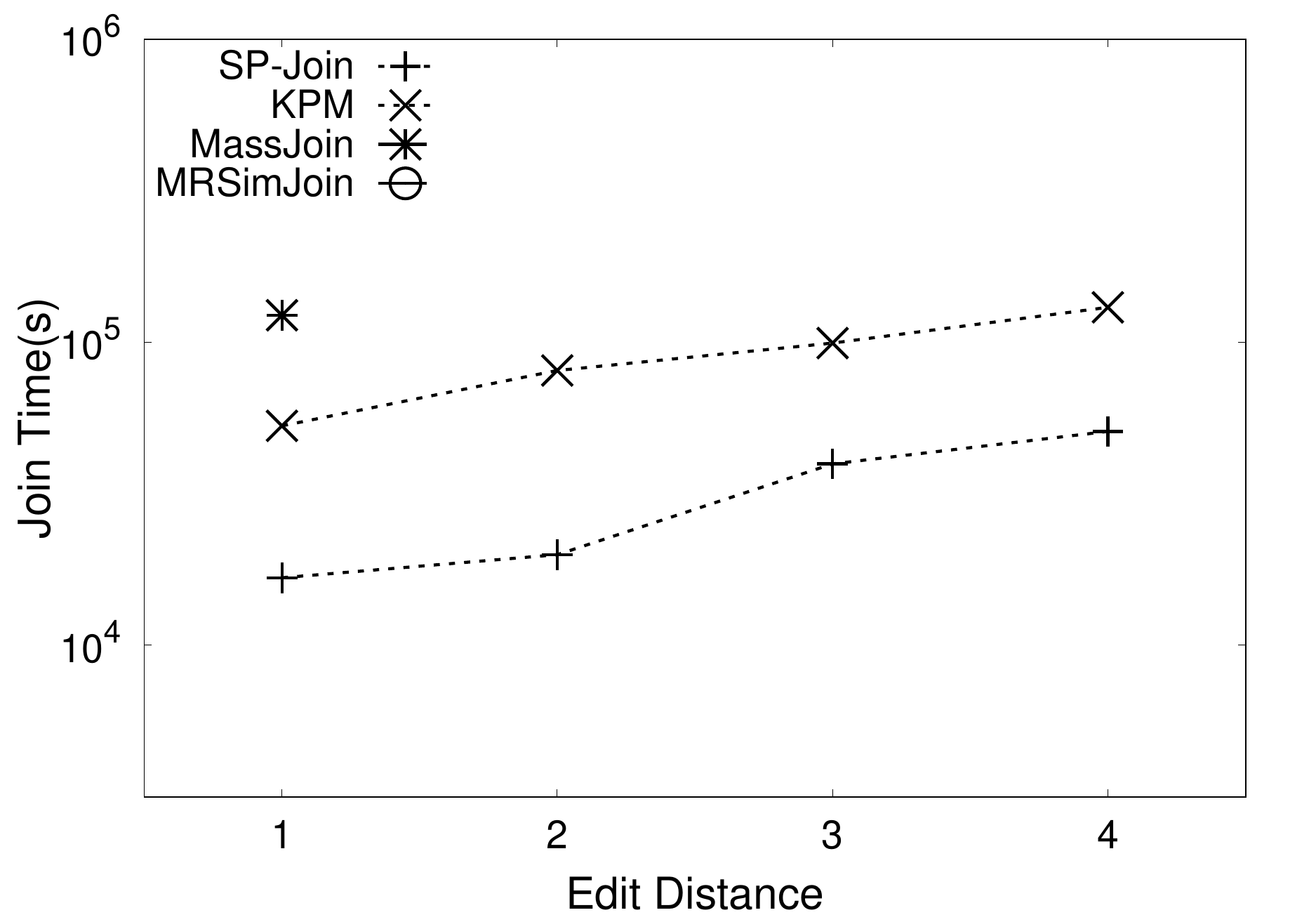,width=0.25\textwidth}}
		\subfigure[\small{\sdatat}]{
			\label{subfig:baseline4}
			\hspace*{-1.0em}\epsfig{figure=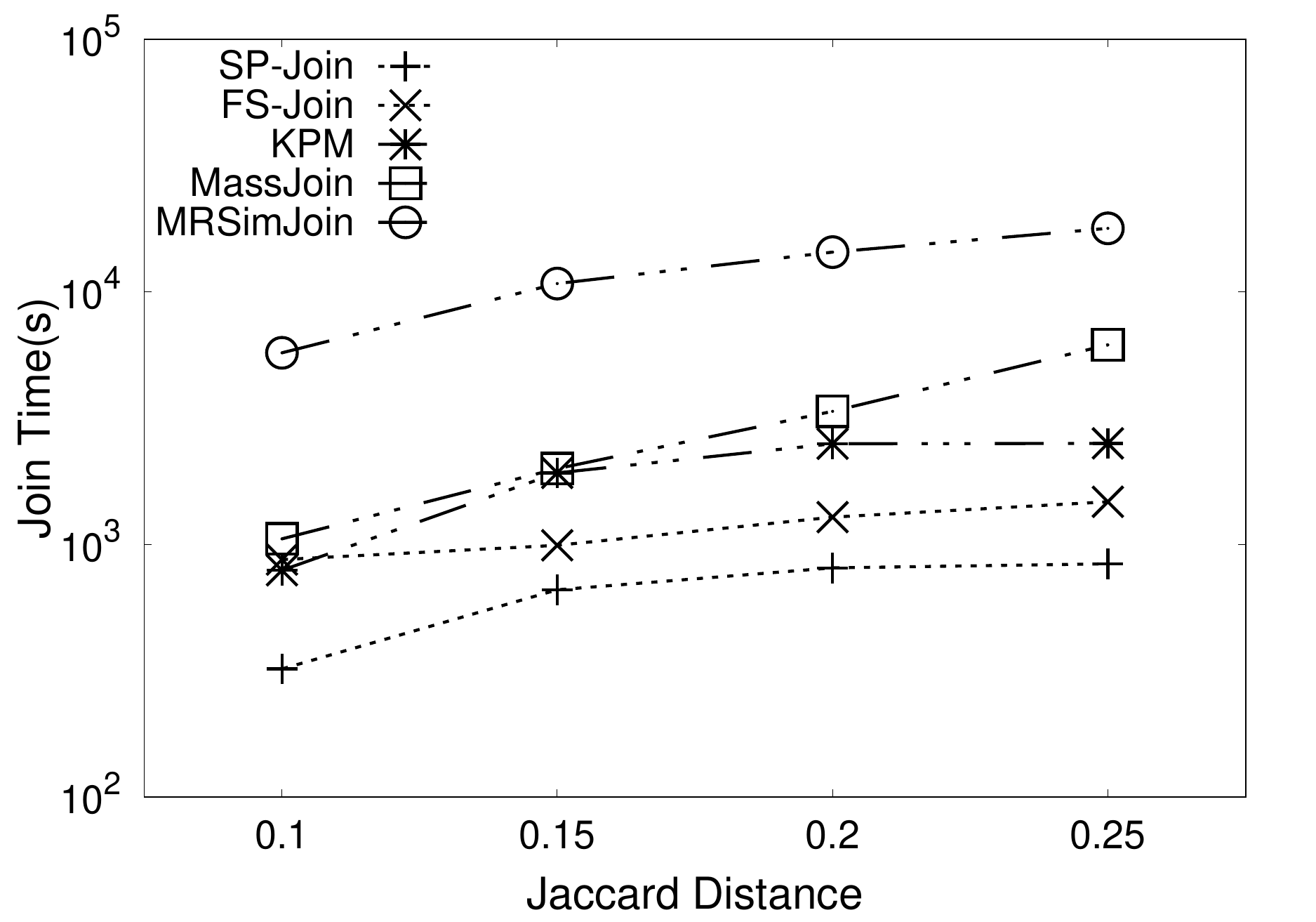,width=0.25\textwidth}}
	\end{center}\vspace{-1em}\vspace{-.5em}
	\caption{Compare with State-of-the-art Methods}\label{fig:baseline}\vspace{-1em}
\end{figure*}

First, we evaluate the effect of proposed techniques. 
We implement four methods combining different techniques proposed in sampling and map phases.
For the sampling phase, we implement 3 methods: \randm is the simple random sampling approach.
\dstaw is the distribution-aware sampling approach proposed in Section~\ref{subsec-distsample} and \gener is the generative sampling method proposed in Section~\ref{subsec-gensample}. 
For map phase, \basictr is the method that splits the space iteratively by randomly selecting a dimension (Section~\ref{subsec-basicmap}).
\entropy is the learning-based method using cost variation to select the dimension in each iteration (Section~\ref{subsec-advmap}). 
The results are shown in Figure~\ref{fig:samplemr}. 
Regarding the join time breakdown, the random sampling consists of 2\% of overall runtime on average while the generative sampling costs about 6\%.

We can see that the \gener and \dstaw outperform \randm under all the settings. 
The reason is that we utilize distribution information from the dataset to obtain samples instead of just performing random sampling. 
For example, when the threshold is 30 on \vdatat, the overall join times of \gener\unskip+\basictr and \dstaw\unskip+\basictr are 7872 and 15835 seconds respectively, while that of \randm\unskip+\basictr is 26885 seconds.  
Moreover, we observe that \gener outperforms \dstaw in most cases. 
The saving of cost mainly comes from the communication cost: \gener only needs to transmit the parameters of distribution and is independent of the sample size. 
While the sampling quality is comparable, \gener can obviously reduce the overhead of sampling. 

To further demonstrate the advantage of our proposed sampling technique w.r.t random sampling, we conduct another experiment by significantly increasing the sample size of the random method to 3 times and 10 times of that of generative method.
The results are shown in Figure~\ref{fig:random}.
We can see that when the sample size increases slightly (3X), there is a marginal improvement on the performance of random sampling based framework.
However, when the sample size increases significantly (10X), the performance degrades seriously and becomes even worse than the original ones. 
The reason is that the overhead of partition strategies in the map phase increases along with larger sampling size.
Thus the benefits of larger sample size could be counteracted by such overhead.
This further shows that the random sampling cannot make good use of larger sample size and it is necessary to devise more effective sampling techniques.

Finally, we discuss the effect of hyper-parameter.
For sample size $k$, we decide it according to the theoretical analysis.
With the help of learning-based partition strategy, our approach is not sensitive to the dimension of target space $n$ in the range of [5,30] and we set it empirically.
We then show the empirical results of varying the number of partitions $p$. 
Due to the space limitation, here we only show the results on two datasets: \vdatat and \sdatao in Figure~\ref{fig:para:partition_samplesize}. 
On the other two datasets, it also shows a similar trend. 
We can see that our method is not sensitive to $p$: the difference between the best and worst performance by varying $p$ is just about 10\%. 
This demonstrates the robustness of our methods.

\subsection{Compare with State-of-the-art Methods}\label{subsec-baseline}

Next we compare \name with state-of-the-art methods. 
For all baseline methods, we try our best to tune their parameters according to the descriptions in previous studies. 
The results are shown in Figure~\ref{fig:baseline}.
We can see that \name achieves the best result on all datasets. 
For example, on \vdatao($\delta$ is 45), the join time of \name is only 684 seconds, while \mrsim, \mpass, \clusterjoin and \kdtree use 32266, 23382, 20064 and 4074 seconds respectively. 
On \sdatat($\delta$ is 0.1), the join time of \name is 450 seconds, while \mrsim, \massjoin, \fsjoin and \kdtree use 5726, 1470, 1218 and 1104 seconds respectively.
\begin{figure*}[!t]
	\begin{center}
		\subfigure[\small{\vdatao}]{
			\label{subfig:scal-node1}
			\hspace*{-1.0em}\epsfig{figure=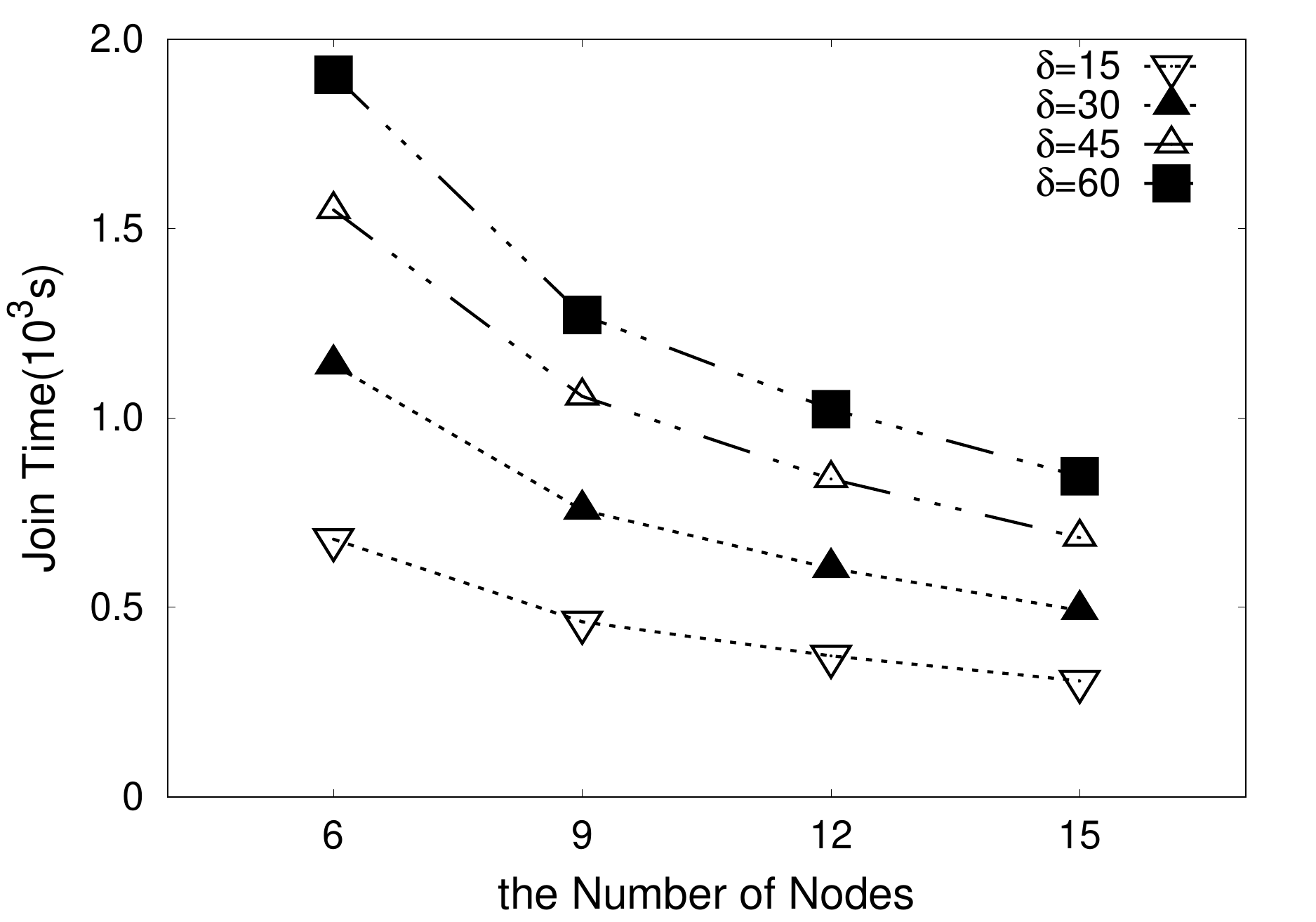,width=0.25\textwidth}}
		\subfigure[\small{\vdatat}]{
			\label{subfig:scal-node2}
			\hspace*{-1.0em}\epsfig{figure=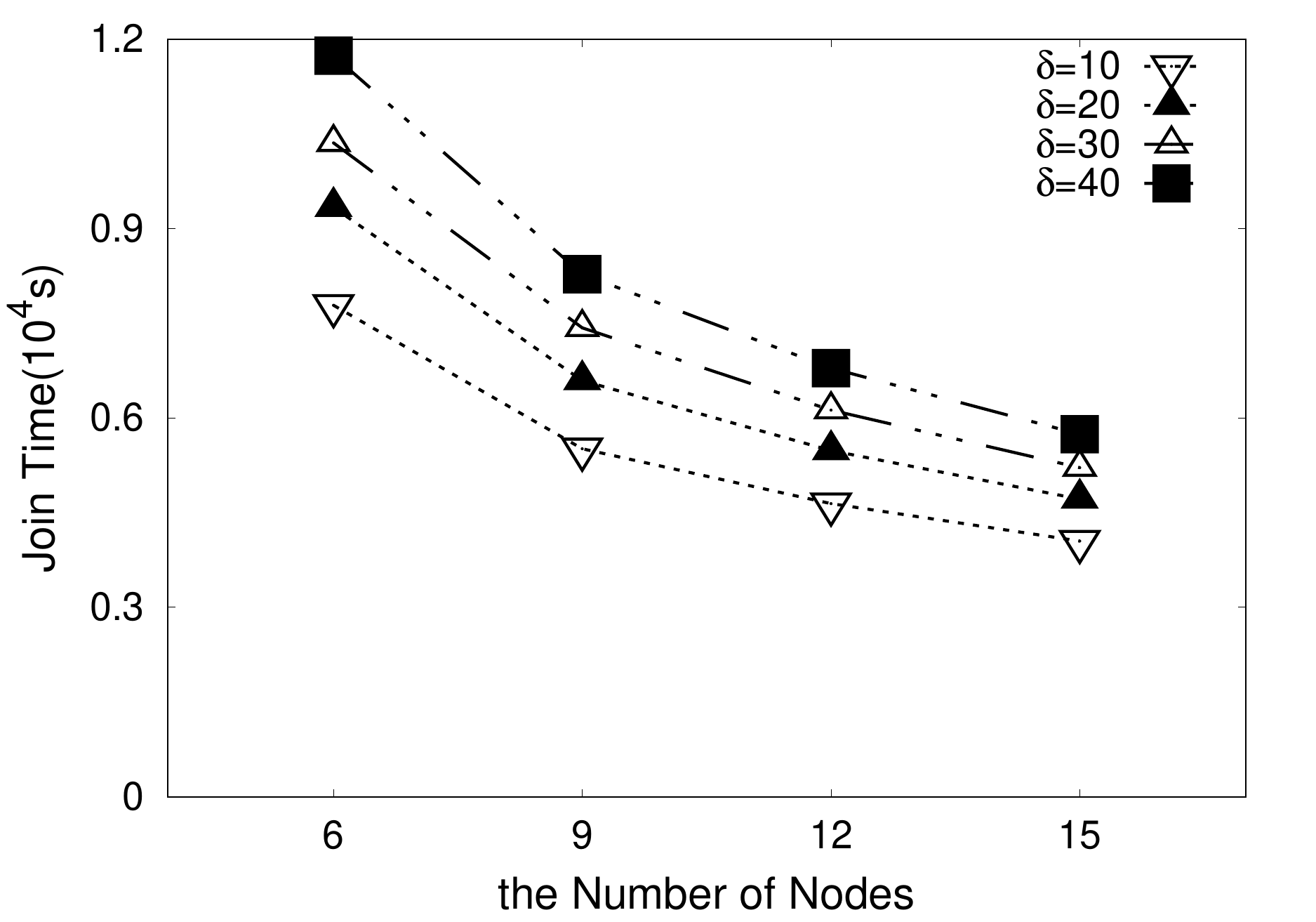,width=0.25\textwidth}}
		\subfigure[\small{\sdatao}]{
			\label{subfig:scal-node3}
			\hspace*{-1.0em}\epsfig{figure=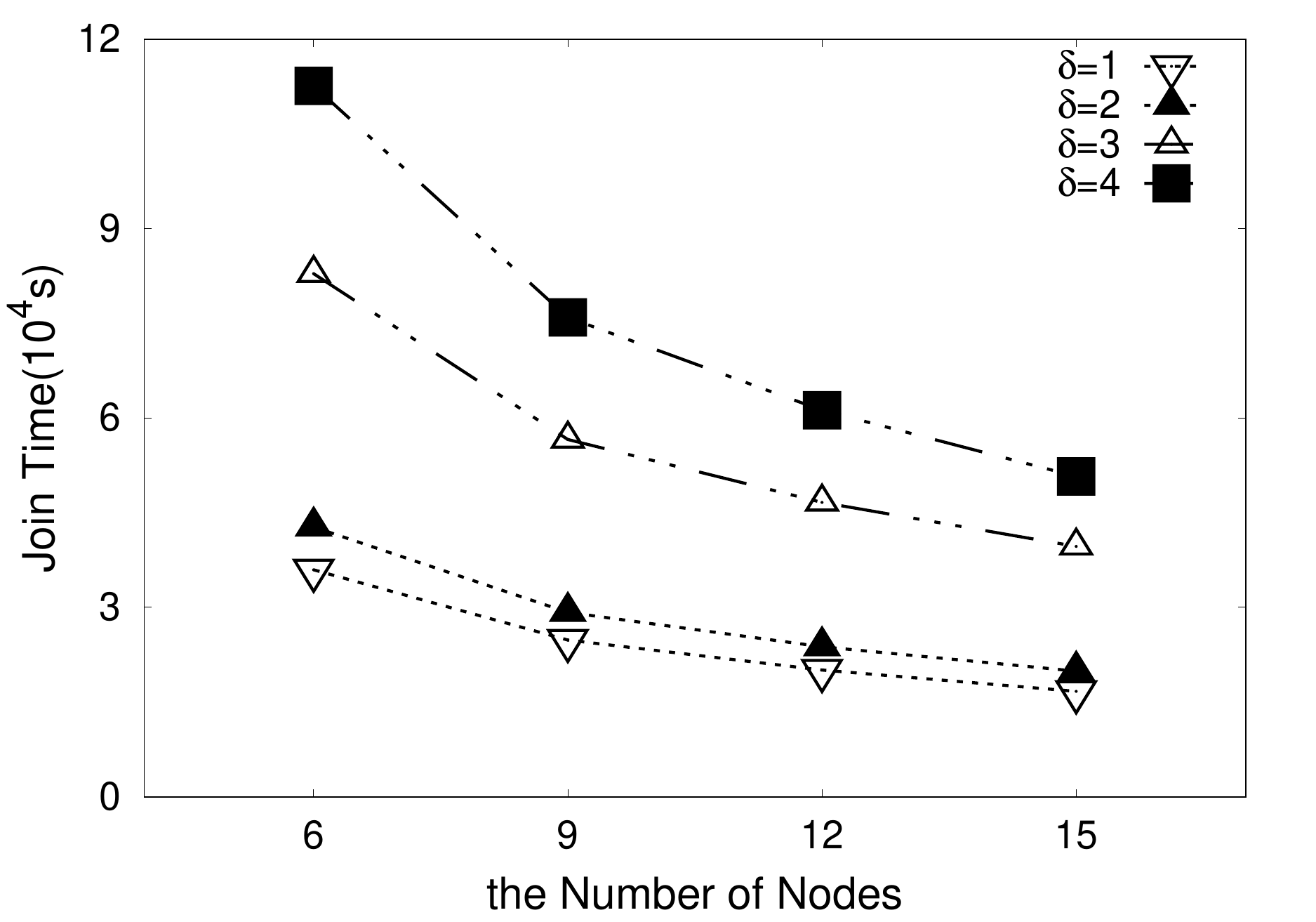,width=0.25\textwidth}}
		\subfigure[\small{\sdatat}]{
			\label{subfig:scal-node4}
			\hspace*{-1.0em}\epsfig{figure=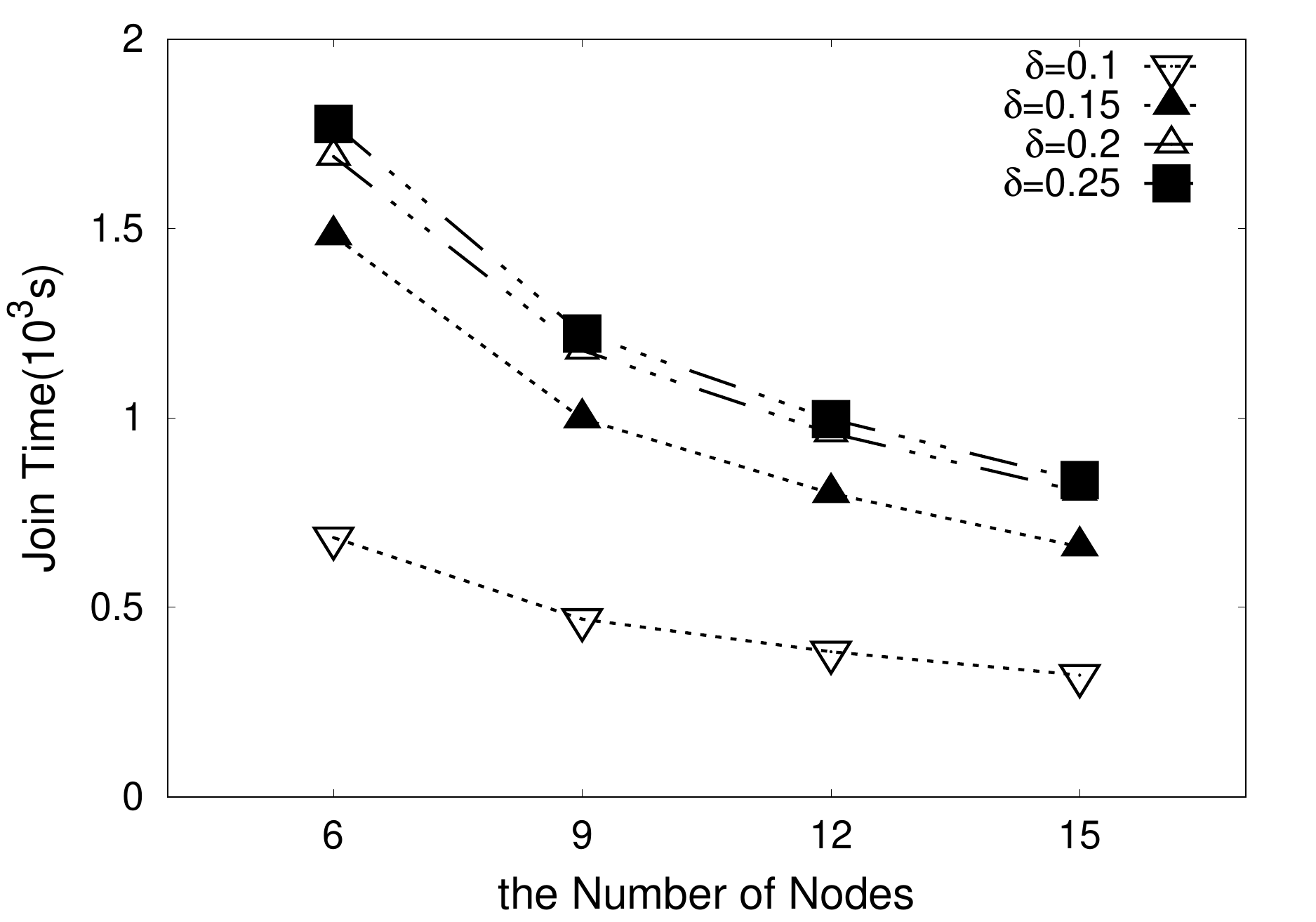,width=0.25\textwidth}}
	\end{center}\vspace{-1em}\vspace{-.5em}
	\caption{Scalability: Effect of Number of Nodes}\label{fig:scal:node}\vspace{-1em}
	\begin{center}
		\subfigure[\small{\vdatao}]{
			\label{subfig:scal-data1}
			\hspace*{-1.0em}\epsfig{figure=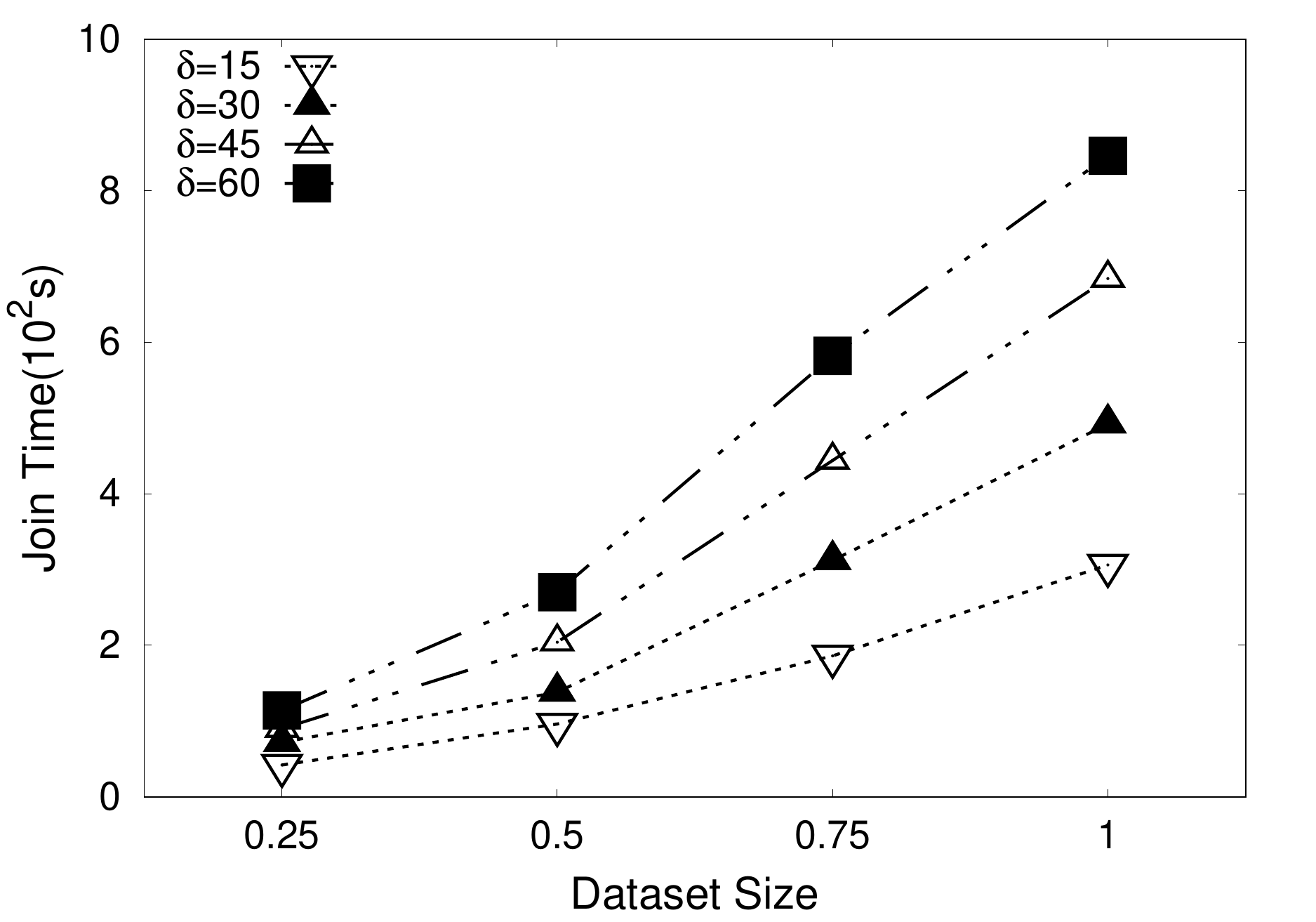,width=0.25\textwidth}}
		\subfigure[\small{\vdatat}]{
			\label{subfig:scal-data2}
			\hspace*{-1.0em}\epsfig{figure=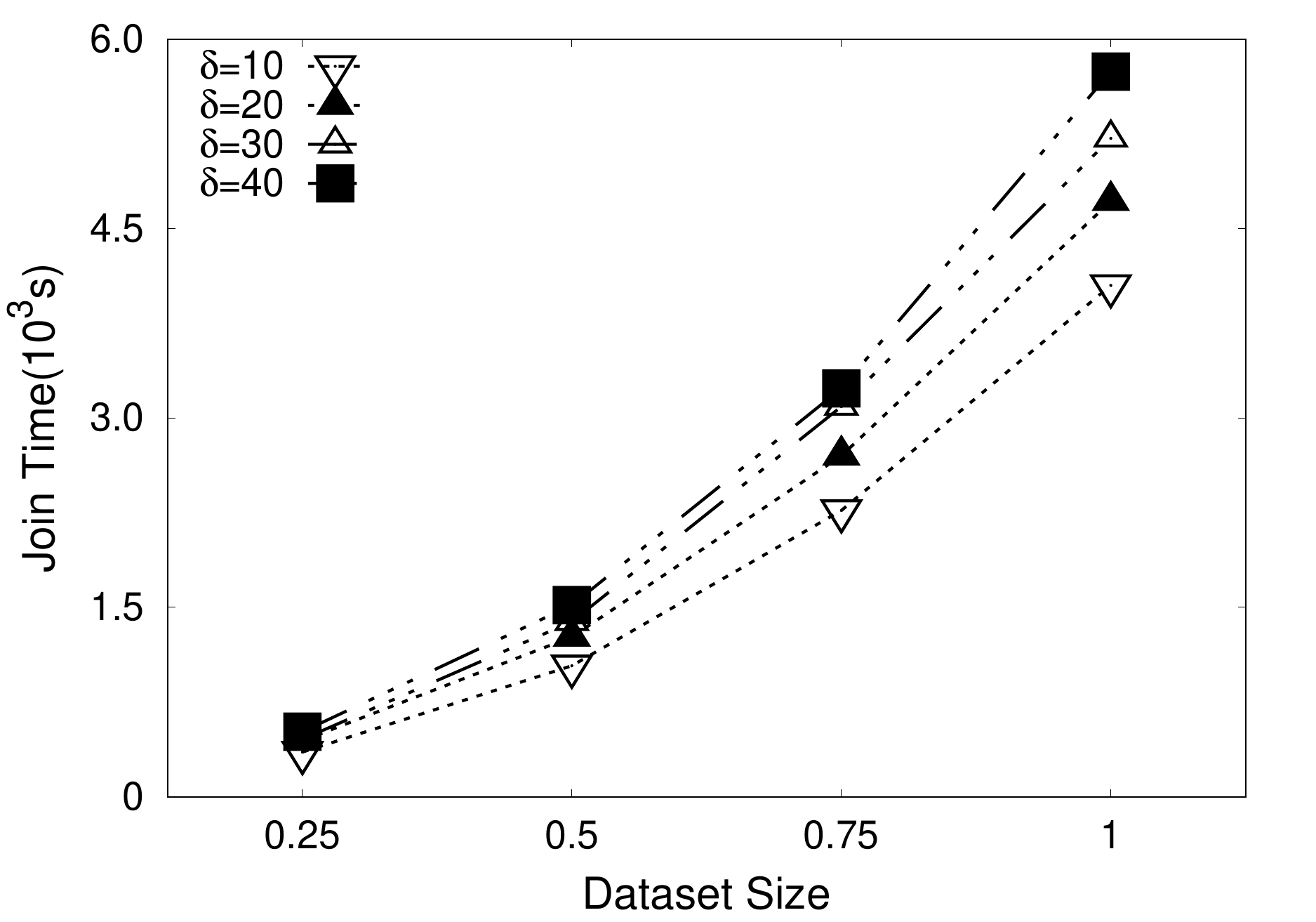,width=0.25\textwidth}}
		\subfigure[\small{\sdatao}]{
			\label{subfig:scal-data3}
			\hspace*{-1.0em}\epsfig{figure=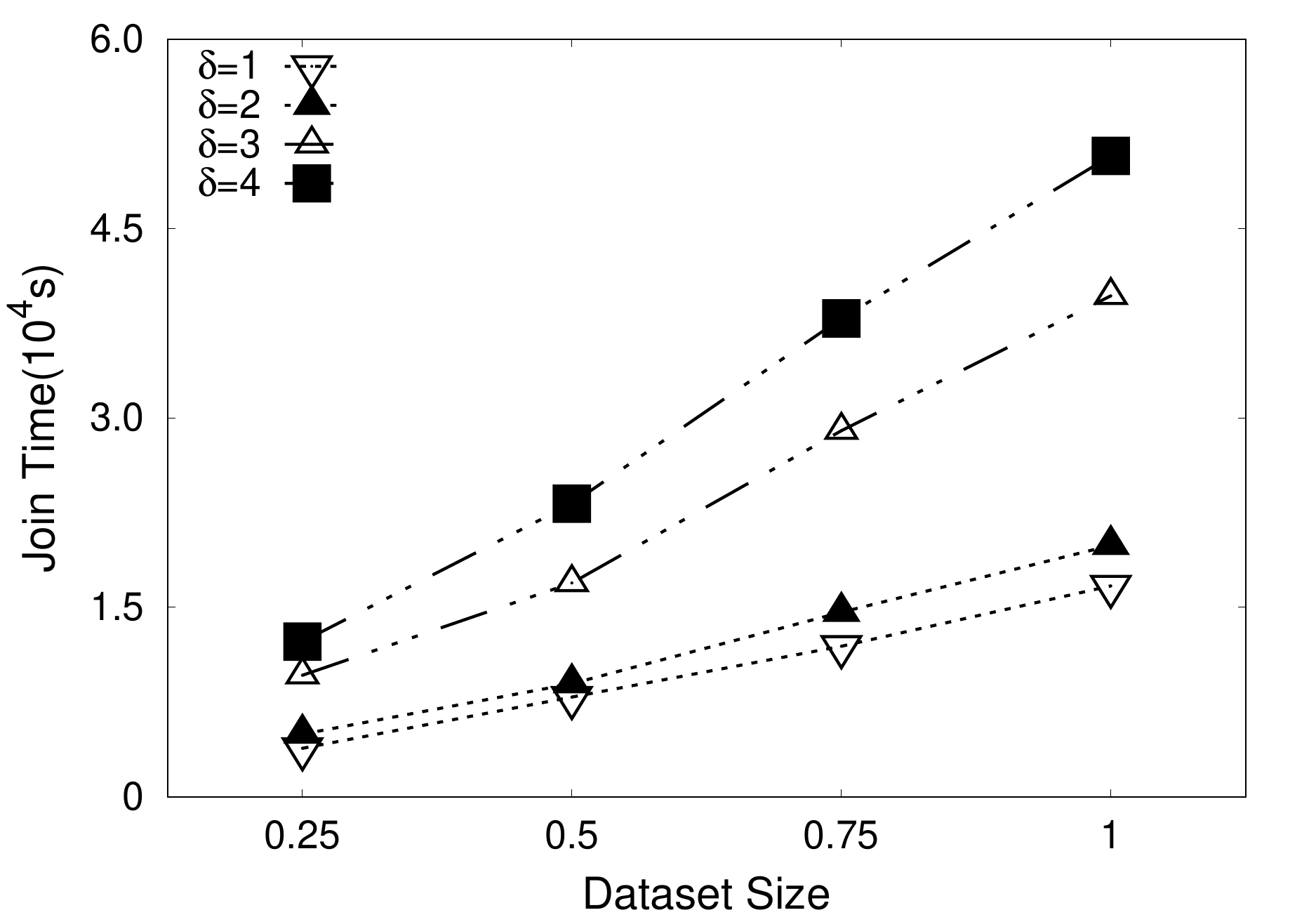,width=0.25\textwidth}}
		\subfigure[\small{\sdatat}]{
			\label{subfig:scal-data4}
			\hspace*{-1.0em}\epsfig{figure=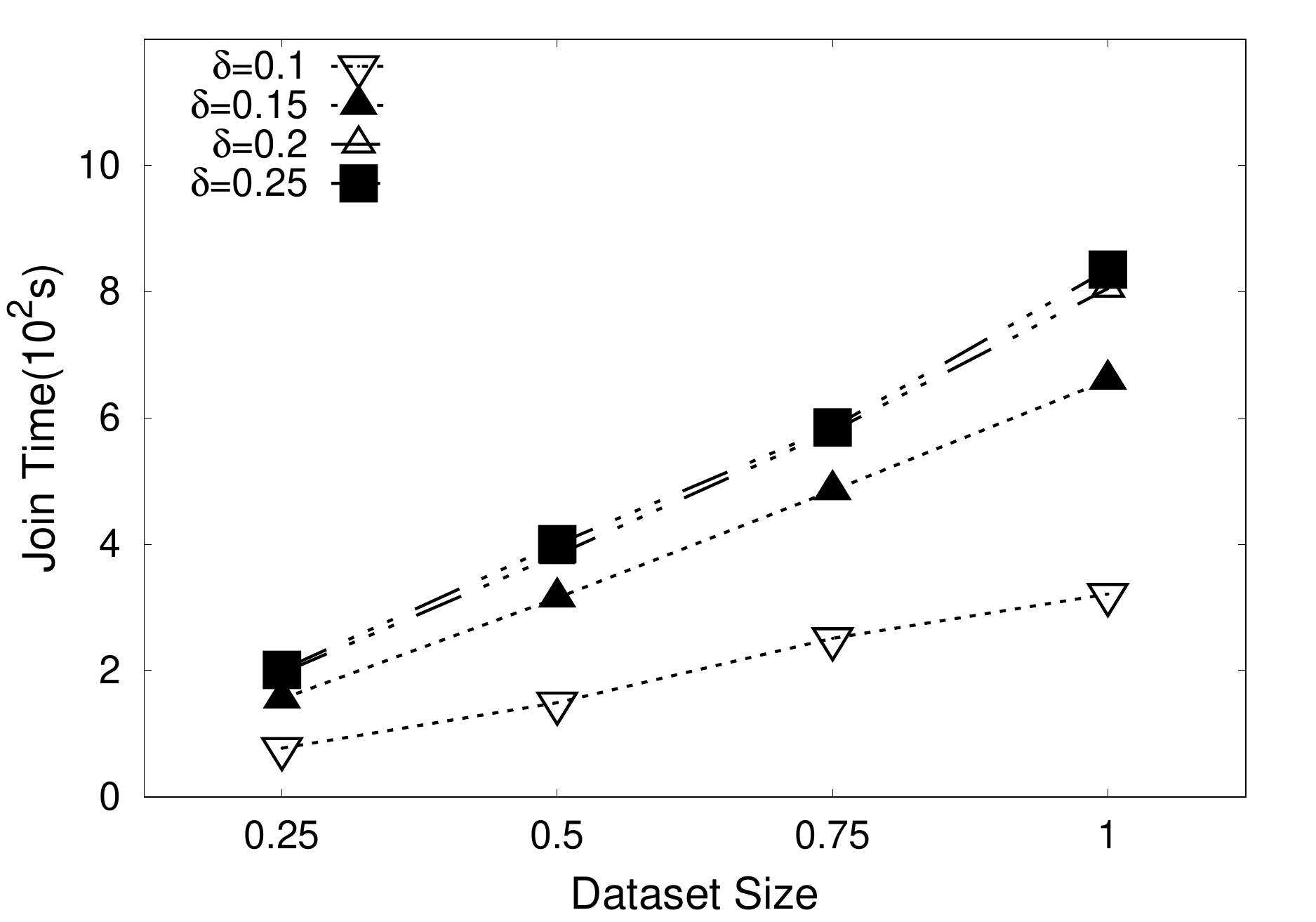,width=0.25\textwidth}}
	\end{center}\vspace{-1em}\vspace{-.5em}
	\caption{Scalability: Effect of Data Size}\label{fig:scal:data}\vspace{-1em}
\end{figure*}

For vector data, the performance of \kdtree ranks second because it can avoid repartition by trying to divide the space evenly, while repartition is required for \mpass and \clusterjoin. 
\name is better than \kdtree mainly because it employs effective sampling techniques to obtain the pivots of partition, while \kdtree just uses random sampling.

For string data, \name performs best because \fsjoin and \mpass need to generate a large amount of signatures for filtering, which leads to heavy overhead. 
Similar phenomenon has also been found in the experiments of~\cite{DBLP:conf/icde/RongLSWLD17}. 
Note that \mrsim run out of time even when $\delta=1$ on \sdatao, therefore no results are shown in Figure~\ref{subfig:baseline3}.
As their methods need multiple stages of MapReduce jobs, there would be a very heavy overhead for filtering and preprocessing, such as the expensive signature generation process. 
However, compared with them our sampling and partition techniques are rather light-weighted. 
Therefore, we also achieve good performance for string data. 

Besides, we also adopt the notion of gap factor employed in~\cite{DBLP:journals/pvldb/MannAB16} to show the robustness of \name. 
More specifically, we measure the average, median, and maximum deviation of the runtime of \name from the best reported one, which are 1.23, 1.04 and 3.08 over all datasets and thresholds. 

\subsection{Scalability}\label{subsec-scale}

Finally, we evaluate the scalability of our method. 
We conduct experiments to test effects of both scaling out (the number of nodes) and scaling up (the data size). 
The results of varying the number of computing nodes from 6 to 15 are shown in Figure~\ref{fig:scal:node}. 
We can see that the performance of our algorithm improves obviously with the increasing number of nodes in a cluster. 
For example, on the \vdatat dataset for \lonenormdist($\delta$ is 10), the join time with $6,9,12,15$ nodes are 7785, 5513, 4644 and 4050 seconds respectively. 
It demonstrates that our method is able to make good use of computing resources.
The results of varying data size are shown in Figure~\ref{fig:scal:data}. 
We can see that as the size of a dataset increases, our method also scales very well for different thresholds. 
For example, on the \sdatat dataset for \jacdist ($\delta$ is 0.25), the time costs of join with 25\%, 50\%, 75\%, 100\% dataset are 201, 400, 565, 835 seconds respectively.

\subsection{More Experimental Results}\label{subsec-moreexp}
\begin{figure*}
	\begin{center}
		\subfigure[\small{\vdatao}]{
			\label{subfig:cand-eculid}
			\hspace*{-1.0em}\epsfig{figure=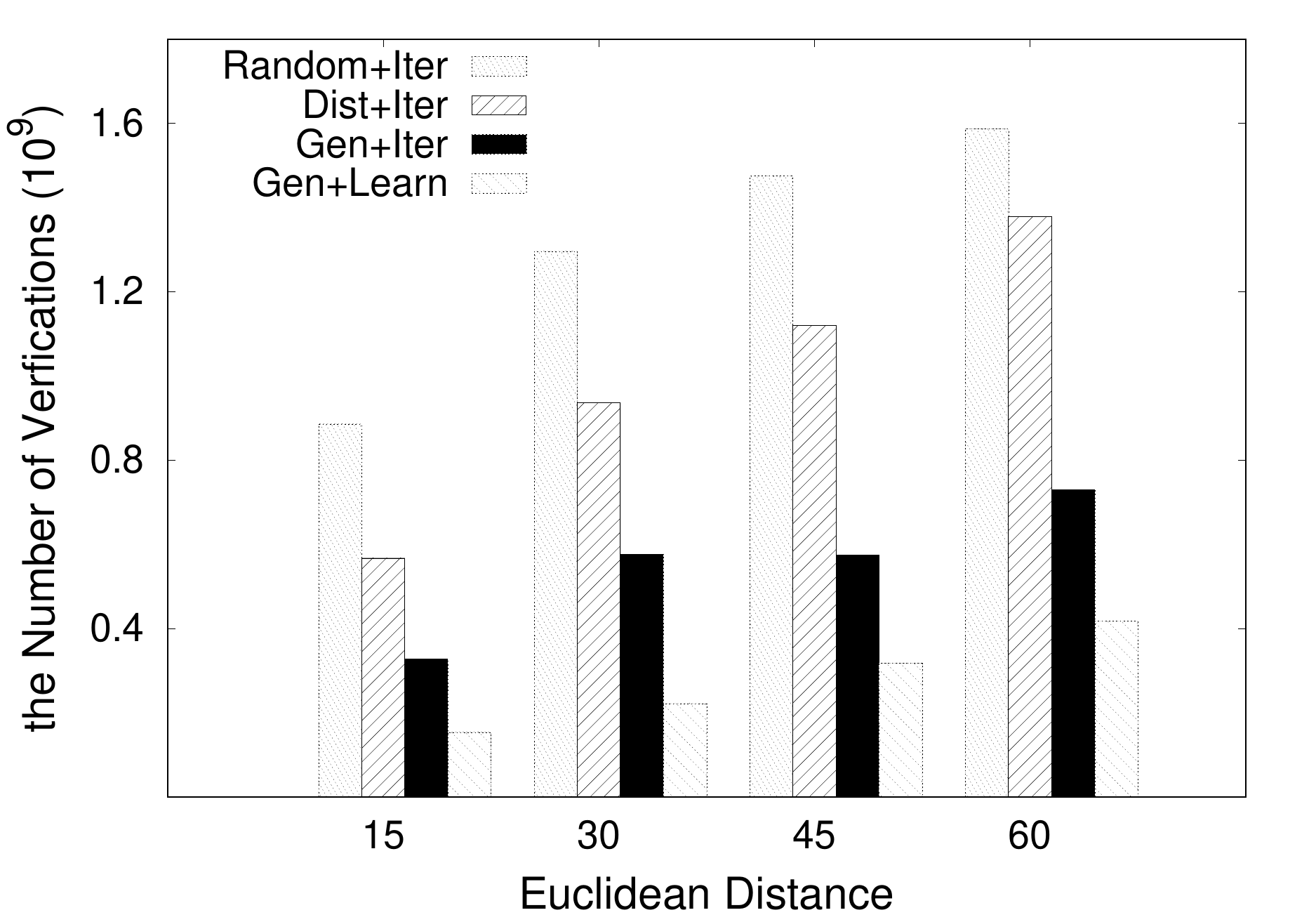,width=0.25\textwidth}}
		\subfigure[\small{\vdatat}]{
			\label{subfig:cand-onenorm}
			\hspace*{-1.0em}\epsfig{figure=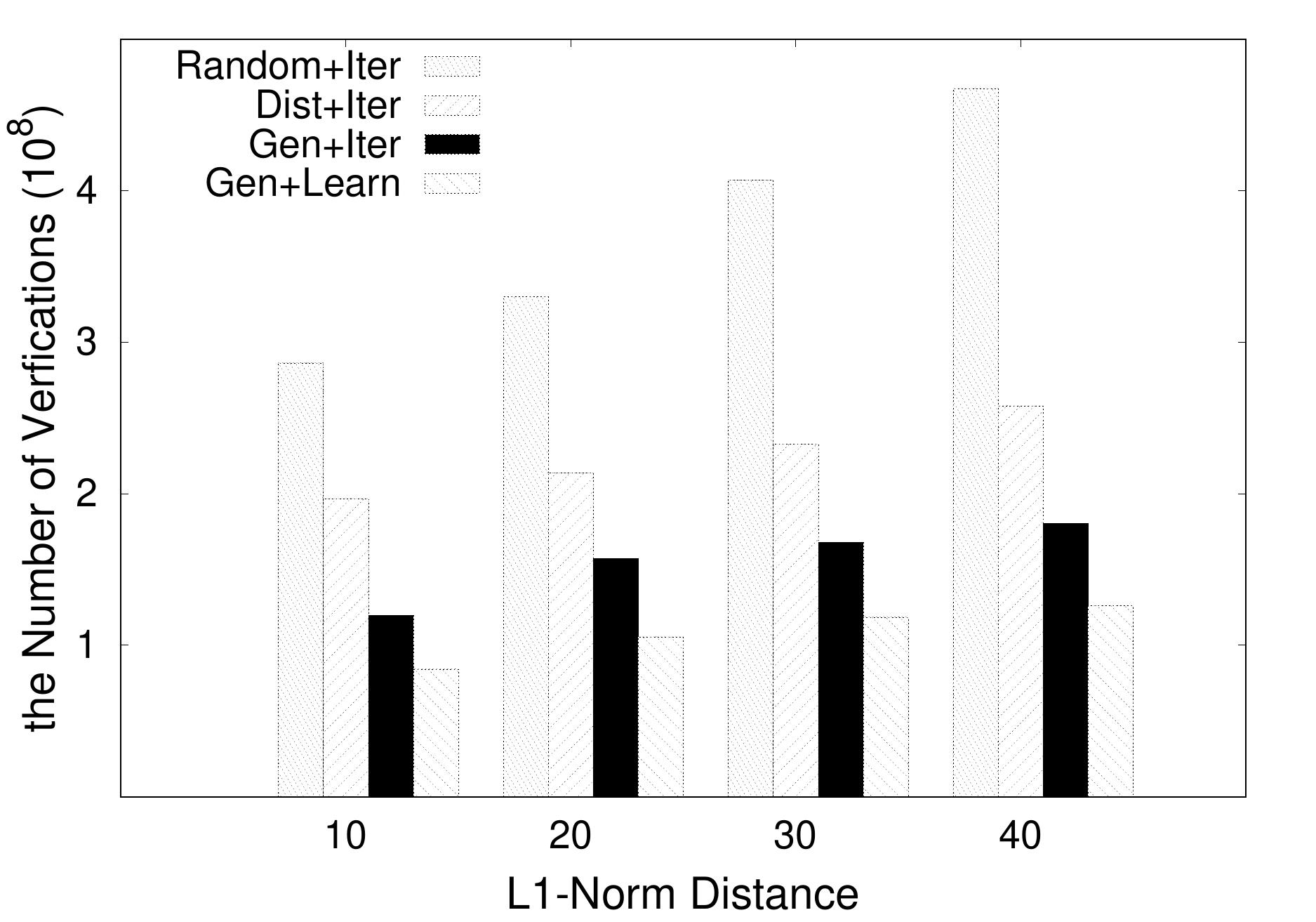,width=0.25\textwidth}}
		\subfigure[\small{\sdatao}]{
			\label{subfig:cand-edit}
			\hspace*{-1.0em}\epsfig{figure=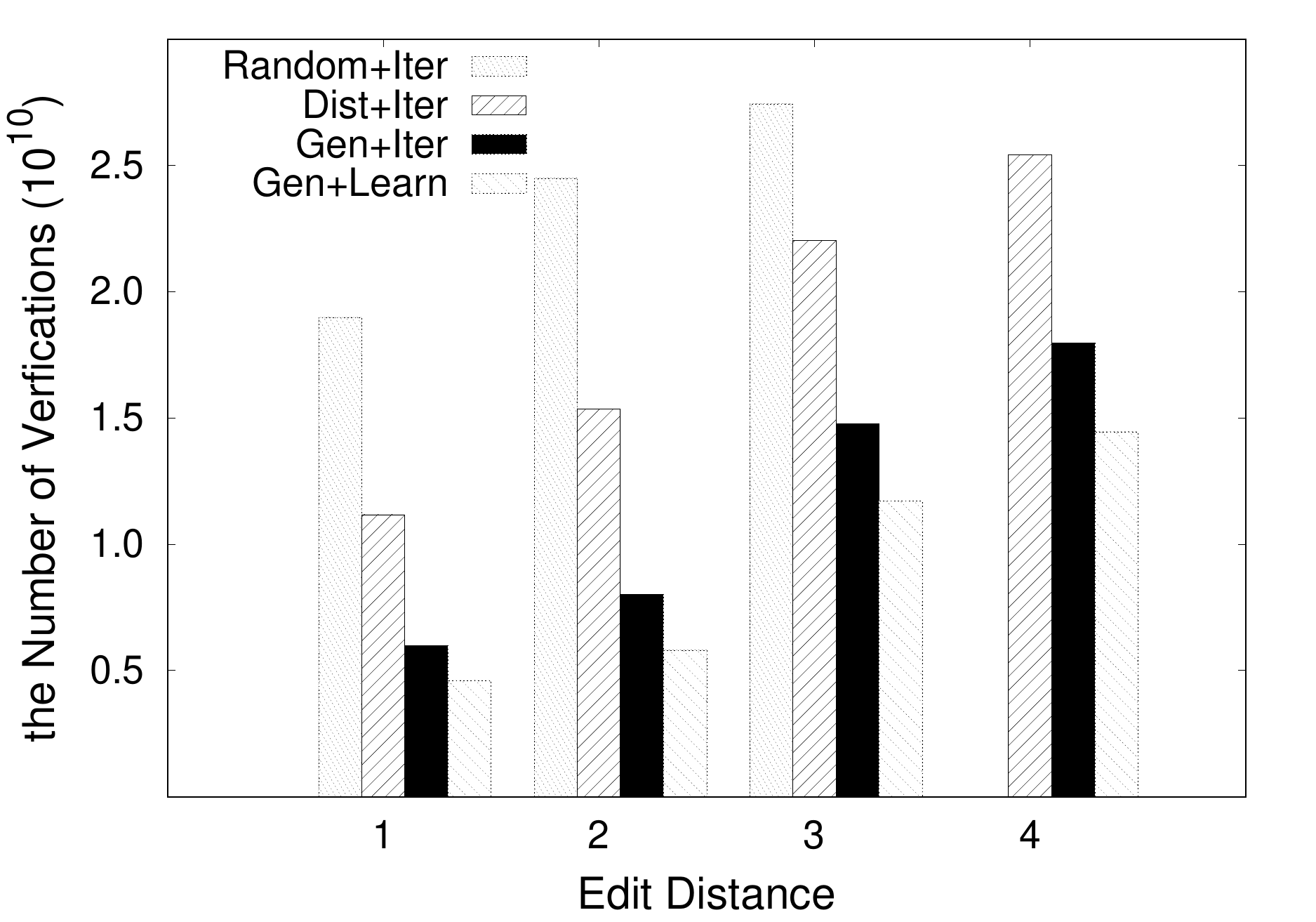,width=0.25\textwidth}}
		\subfigure[\small{\sdatat}]{
			\label{subfig:cand-jaccard}
			\hspace*{-1.0em}\epsfig{figure=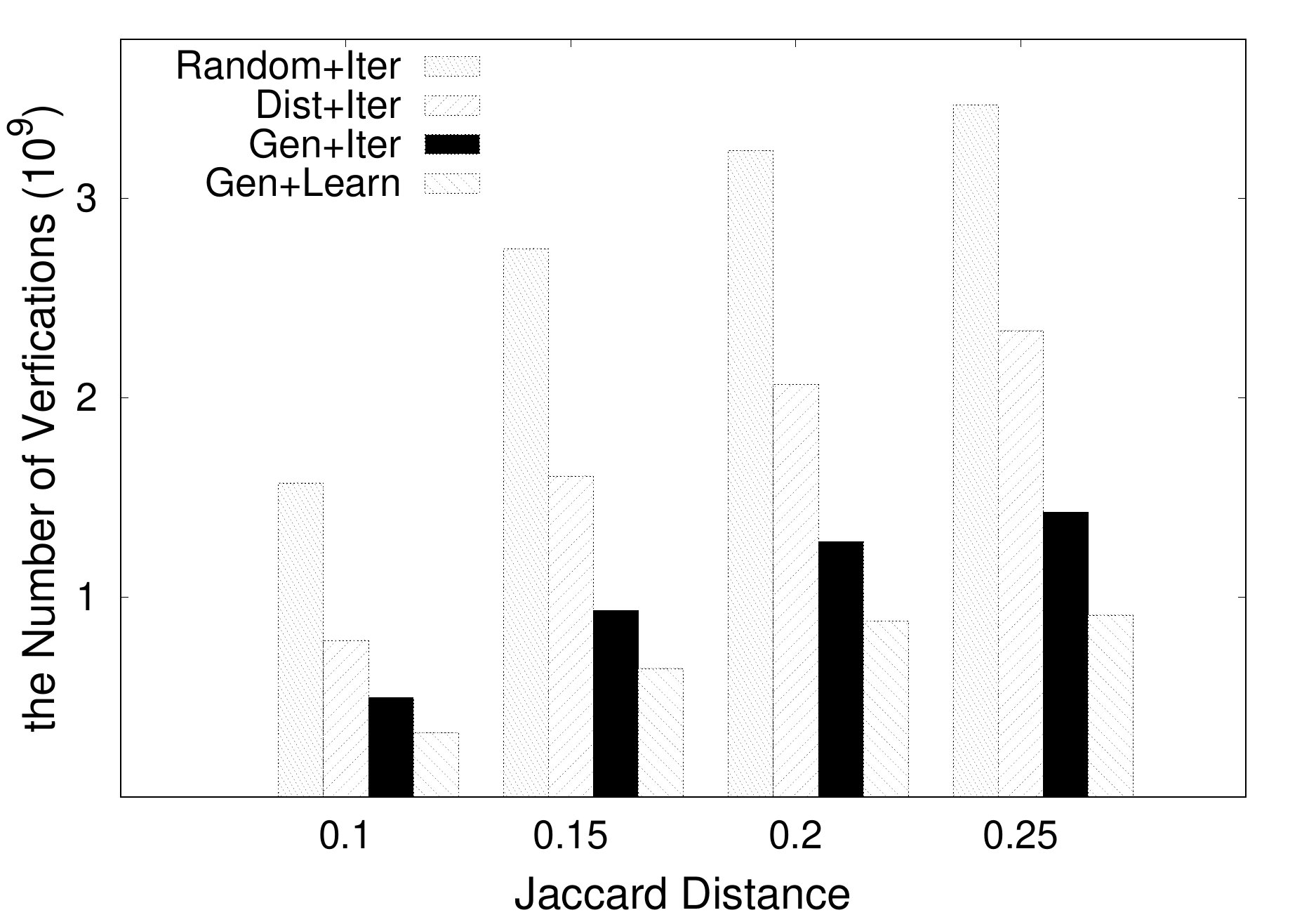,width=0.25\textwidth}}
	\end{center}\vspace{-1em}\vspace{-.5em}
	\caption{Effect of Proposed Techniques:  Numbers of Verification}\label{fig:scal:cand}
\end{figure*}	
\begin{table*}[h!t]
	\centering
	\caption{Statistics of Partitions}
	\label{tbl:partitionstat}
	\begin{tabular}{|c|c|c|c|c|c|c|c|c|}
		\hline
		&\multicolumn{2}{c|}{\lonenorm (SIFT)}&\multicolumn{2}{c|}{\eu (Netflix)}&\multicolumn{2}{c|}{\ed (AOL)}&\multicolumn{2}{c|}{\jac (PubMed)}\\
		\hline
		&AVER&STDEV&AVER&STDEV&AVER&STDEV&AVER&STDEV\\
		\hline
		\kdtree      
		&2.34E+08&4.02E+07&7.44E+08&2.20E+08&1.43E+10&8.90E+09&1.16E+09&4.24E+08\\
		\hline
		\randm\unskip+\basictr   &2.86E+08&4.51E+07&8.85E+08&2.18E+08&1.89E+10&9.17E+09&1.57E+09&4.78E+08\\
		\hline
		\dstaw\unskip+\basictr
		&1.97E+08&3.87E+07&5.67E+08&1.13E+08&1.12E+10&5.79E+09&7.83E+08&2.14E+08\\
		\hline
		\gener\unskip+\basictr 
		&1.20E+08&3.68E+07&3.28E+08&6.69E+07&5.99E+09&4.13E+09&4.98E+08&1.72E+08\\
		\hline
		\gener\unskip+\entropy     
		&8.42E+07&2.57E+07&1.53E+08&6.16E+07&4.60E+09&2.78E+09&3.22E+08&1.08E+08\\
		\hline
	\end{tabular}
\end{table*}
We also evaluate the number of verifications to further demonstrate the effect of proposed techniques.
The results are shown in Figure~\ref{fig:scal:cand}. 
We have the following observations: Firstly, \randm\unskip+\basictr involves the most number of verifications. 
This is because for random sampling, there tends to be skewness in the sampled pivots, resulting in the uneven division for \innerpart. 
Also, the size of \outerpart in the map phase will be much larger due to the shortcomings of \basictr mentioned in Section~\ref{subsec-advmap}.  
Secondly, with the same sampling technique \gener, \entropy involves fewer verifications than \basictr.
Thirdly, \gener\unskip+\entropy has the least number of verification due to the effectiveness of both techniques for sampling and map phase. 
Finally, we can see that the results of number of verifications in Figure~\ref{fig:scal:cand} is consistent with that of the join time in Figure~\ref{fig:samplemr}. 
It shows that our proposed techniques can significantly reduce the computational overhead and make a proper trade-off between the filter cost and filter power.

We then collect the information of average number and the standard deviation about the number of verifications in all partitions after the map phase to show the effect of load balancing. 
The results are shown in Table~\ref{tbl:partitionstat}. 
Due to the limitation of space, we only report the results on one threshold for each dataset(15, 10, 1, 0.1 respectively). 
We can see that \name has fewer average number and standard deviation. 
It demonstrates that our proposed techniques can make the partitions more evenly so that improve the overall performance.

\vspace{-1em}
\section{Related Work}\label{sec-related}

 \noindent \textbf{MapReduce}\hspace{.5em}  MapReduce~\cite{DBLP:journals/cacm/DeanG08} is a popular distributed computing framework for Big Data processing in large clusters due to its scalability and built-in support for fault tolerance. 
 It has been a widely applied technique for elastic data processing in the cloud. 
 A good survey on the applications of MapReduce in dealing with large scale of analytical queries is provided in~\cite{DBLP:journals/vldb/DoulkeridisN14}. 
 There are many distributed computing platforms supporting MapReduce.
 Jiang et al.~\cite{DBLP:journals/pvldb/JiangOSW10} performed a performance study of MapReduce on the Hadoop platform. 
 Shi et al.~\cite{DBLP:journals/pvldb/ShiQMJWRO15} further made a comprehensive performance comparison between different platforms. 
 The MapReduce programming framework has also been adopted to other applications, such as skyline query~\cite{DBLP:journals/pvldb/ParkMS15} and diversity maximization~\cite{DBLP:journals/pvldb/CeccarelloPPU17}.\\

\noindent \textbf{Similarity Join}\hspace{.5em}  Similarity join is an essential operator in many real world applications. 
Jacox et al.~\cite{DBLP:journals/tods/JacoxS08} provided an overview of Similarity Join in metric space.  
A recent experimental survey is made in~\cite{DBLP:journals/pvldb/ChenGZJYY17} to cover the issues about indexing techniques in metric space, which plays a significant role in accelerating metric query processing. 
Many previous studies focused on designing efficient filter techniques for string similarity join: count filter~\cite{DBLP:conf/vldb/GravanoIJKMS01} and  prefix filter~\cite{DBLP:conf/icde/ChaudhuriGK06}are two state-of-the-art approaches. 
There are also specific filtering techniques proposed for token-based similarity metrics~\cite{DBLP:conf/www/XiaoWLY08}  and edit distance~\cite{DBLP:journals/vldb/YuWLZDF17}. 
And such techniques have been widely adopted by other problems related to string similarity queries~\cite{DBLP:conf/icde/WangLDZF15,DBLP:conf/edbt/WangLLZ19}.\\

\noindent \textbf{Similarity Join using MapReduce framework}\hspace{.5em} Recently, similarity join using MapReduce has attracted significant attraction. 
There are many MapReduce based studies for similarity join in metric space: Okcan et al.~\cite{DBLP:conf/sigmod/OkcanR11} proposed \thetajoin algorithm to handle similarity join with various types of predicates. 
Lu et al.~\cite{DBLP:journals/pvldb/LuSCO12} and Kim et al.~\cite{DBLP:conf/icde/KimS12} focused on the kNN Join problem. 
Wang et al.~\cite{DBLP:conf/kdd/WangMP13} developed the \mpass framework by leveraging the distance filtering and data compression techniques to reduce network communication. 
Fries et al.~\cite{DBLP:conf/icde/FriesBSS14} focused on similarity join of high dimensional data. 
Sarma et al.~\cite{DBLP:journals/pvldb/SarmaHC14} proposed \clusterjoin framework by applying the double pivots filter in the map phase and the 2D hash technique in the preprocessing phase. 
Chen et al.~\cite{DBLP:journals/tkde/ChenYCGZC17} adopted techniques in the field of spatial database, such as Space Filling Curve and KD-Tree indexing to achieve the goal of load balance.

There have also been many MapReduce based frameworks for similarity join on a specific data type, such as string and set.
They cannot be extended to the general distance functions in metric space as above ones.
A comprehensive experimental study is made in~\cite{DBLP:journals/pvldb/FierABLF18}. 
Vernica et al.~\cite{DBLP:conf/sigmod/VernicaCL10} adopted the prefix filter in the MapReduce framework to enhance the filter power of parallel similarity join. 
Metwally et al.~\cite{DBLP:journals/pvldb/MetwallyF12} proposed \vsmjoin, a method that ``smartly'' computes similarity scores at a token level. 
Afrati et al.~\cite{DBLP:conf/icde/AfratiSMPU12} made a detailed cost analysis on each stage of MapReduce similarity join on string data. 
Deng et al.~\cite{DBLP:conf/icde/DengLHWF14}\ and Rong et al.~\cite{DBLP:conf/icde/RongLSWLD17} further integrated more filtering techniques of string similarity query to improve the performance.
\vspace{-0.7em}
\section{Conclusion}\label{sec-con}

In this paper, we propose \name, a general MapReduce-based framework to support similarity join in metric space.
We design novel sampling techniques with theoretical guarantee to select high quality pivots for partitioning the dataset. 
We devise an iterative partition method along with learning techniques to ensure load balancing and improve the effectiveness in map and reduce phase to prune dissimilar pairs. 
Experimental results on four real world datasets show that our method significantly outperforms state-of-the-art methods with a variety of distance functions on different types of data.

\bibliographystyle{abbrv}

\end{document}